\def\ps@pprintTitle{%
	\let\@oddhead\@empty
	\let\@evenhead\@empty
	\def\@oddfoot{\centerline{\thepage}}%
	\let\@evenfoot\@oddfoot}
\theoremstyle{plain}
\newtheorem{theorem}{Theorem}
\newtheorem{lemma}[theorem]{Lemma}
\theoremstyle{definition}
\newcommand{\msize}[1]{{\left|#1\right|}}
\newcommand{\Nei}[2]{N_{#1}(#2)}
\newcommand{\Neiclosed}[2]{N_{#1}[#2]}
\newcommand{\dist}{\mathsf{dist}}
\newcommand{\rev}{\mathsf{rev}}
\newcommand{\cost}{\mathsf{cost}}
\newcommand{\len}{\mathsf{len}}
\newcommand{\sfPSPACE}{\mathsf{PSPACE}}
\newcommand{\sfNP}{\mathsf{NP}}
\newcommand{\sfP}{\mathsf{P}}
\newcommand{\calF}{\mathcal{F}}
\newcommand{\calS}{S}
\newcommand{\calSp}{{\calS^\prime}}
\newcommand{\sfTS}{\mathsf{TS}}
\newcommand{\Ipp}{{I^{\prime\prime}}}
\newcommand{\Ip}{{I^{\prime}}}
\newcommand{\Jp}{{J^{\prime}}}
\newcommand{\Mstar}{{M^*}}
\newcommand{\Dstar}{{D^*}}
\newcommand{\sevstepT}[1]{\overset{#1}{\leftrightsquigarrow}}
\begin{document}
	
	\begin{frontmatter}
		
		\title{Shortest Reconfiguration Sequence for Sliding Tokens on Spiders}
		
		\author[jaist]{Duc~A.~Hoang}
		\ead{hoanganhduc@jaist.ac.jp}
		\ead[url]{http://hoanganhduc.github.io/}
		
		\author[kurdistan]{Amanj~Khorramian}
		\ead{khorramian@gmail.com}
		
		\author[jaist]{Ryuhei~Uehara\fnref{uehara}}
		\ead{uehara@jaist.ac.jp}
		\ead[url]{http://www.jaist.ac.jp/~uehara/}
		
		\address[jaist]{School of Information Science, JAIST,\\
			1-1 Asahidai, Nomi, Ishikawa, 923-1292 Japan}
		\address[kurdistan]{Department of Electrical and Computer Engineering,\\
			University of Kurdistan, Sanandaj, Iran}
		
		\fntext[uehara]{JSPS KAKENHI Grant
			Number JP17H06287 and 18H04091}
		
		\begin{abstract}
			Suppose that two independent sets $I$ and $J$ of a graph with $\vert I \vert = \vert J \vert$ are given,
			and a token is placed on each vertex in $I$. 
			The {\sc Sliding Token} problem is to determine 
			whether there exists a sequence of independent sets which transforms $I$ 
			into $J$ so that each independent set in the sequence results from 
			the previous one by sliding exactly one token along an edge in the graph.
			It is one of the representative reconfiguration problems 
			that attract the attention from the viewpoint of theoretical computer science.
			For a yes-instance of a reconfiguration problem, 
			finding a shortest reconfiguration sequence has a different aspect.
			In general, even if it is polynomial time solvable to decide whether two instances 
			are reconfigured with each other, it can be $\mathsf{NP}$-hard to find a shortest sequence between them.
			In this paper, we show that the problem for finding a shortest sequence between two independent sets 
			is polynomial time solvable for spiders (i.e., trees having exactly one vertex of degree at least three).
		\end{abstract}
		
		\begin{keyword}
			sliding token, shortest reconfiguration, independent set, spider tree, polynomial-time algorithm.
		\end{keyword}
		
	\end{frontmatter}
	
	\section{Introduction}
	\label{sec:intro}
	
	Recently, the \emph{reconfiguration problems} attracted the attention 
	from the viewpoint of theoretical computer science.
	These problem arise when we like to find a step-by-step transformation between 
	two feasible solutions of a problem such that all intermediate results are also feasible and 
	each step abides by a fixed reconfiguration rule, that is, 
	an adjacency relation defined on feasible solutions of the original problem.
	The reconfiguration problems have been studied extensively for several well-known problems, including 
	{\sc Independent Set}~\cite{HearnDemaine2005,IDHPSUU,KaminskiMedvedevMilanic2012,MNRSS13}, 
	{\sc Satisfiability}~\cite{Kolaitis,MTY11}, {\sc Set Cover}, {\sc Clique}, {\sc Matching}~\cite{IDHPSUU}, and so on.
	
	A reconfiguration problem can be seen as a natural ``puzzle'' from the viewpoint of 
	recreational mathematics. 
	The \emph{15-puzzle} is one of the most famous classic puzzles,
	that had the greatest impact on American and European societies (see \cite{Slocum} for its rich history). 
	It is well known that the 15-puzzle has a parity, and 
	one can solve the problem in linear time just by checking whether the parity of 
	one placement coincides with the other or not.
	Moreover, the distance between any two reconfigurable placements is $O(n^3)$, 
	that is, we can reconfigure from one to the other in $O(n^3)$ sliding pieces when the size of the board is $n\times n$.
	However, surprisingly, for these two reconfigurable placements, 
	finding a shortest path is $\sfNP$-complete in general \cite{DR2017,RW90}.
	Namely, although we know that there is a path of length in $O(n^3)$, finding a shortest one is $\sfNP$-complete.
	While every piece is a unit square in the 15-puzzle, we obtain 
	the other famous classic puzzle when we allow to have rectangular pieces,
	which is called ``Dad puzzle'' and its variants can be found in the whole world
	(e.g., it is called ``hako-iri-musume'' in Japanese).
	Gardner said that ``these puzzles are very much in want of a theory'' in 1964 \cite{Gardner},
	and Hearn and Demaine gave the theory after 40 years \cite{HearnDemaine2005};
	they are $\sfPSPACE$-complete in general \cite{HearnDemaine2009}.
	
	Summarizing up, these sliding block puzzles characterize representative computational complexity classes;
	the decision problem for unit squares can be solved in linear time just by checking parities,
	finding a shortest reconfiguration for the unit squares is $\sfNP$-complete, and 
	the decision problem becomes $\sfPSPACE$-complete for rectangular pieces.
	That is, this simple reconfiguration problem gives us a new sight of these 
	representative computational complexity classes.
	
	In general, the reconfiguration problems tend to be $\sfPSPACE$-complete,
	and some polynomial time algorithms are shown in restricted cases.
	Finding a shortest sequence in the context of the reconfiguration problems 
	is a new trend in theoretical computer science because it has 
	a great potential to characterize the class $\sfNP$ from 
	a different viewpoint from the classic ones.
	
	\begin{figure}[!ht]
		\centering
		\includegraphics[scale=0.7]{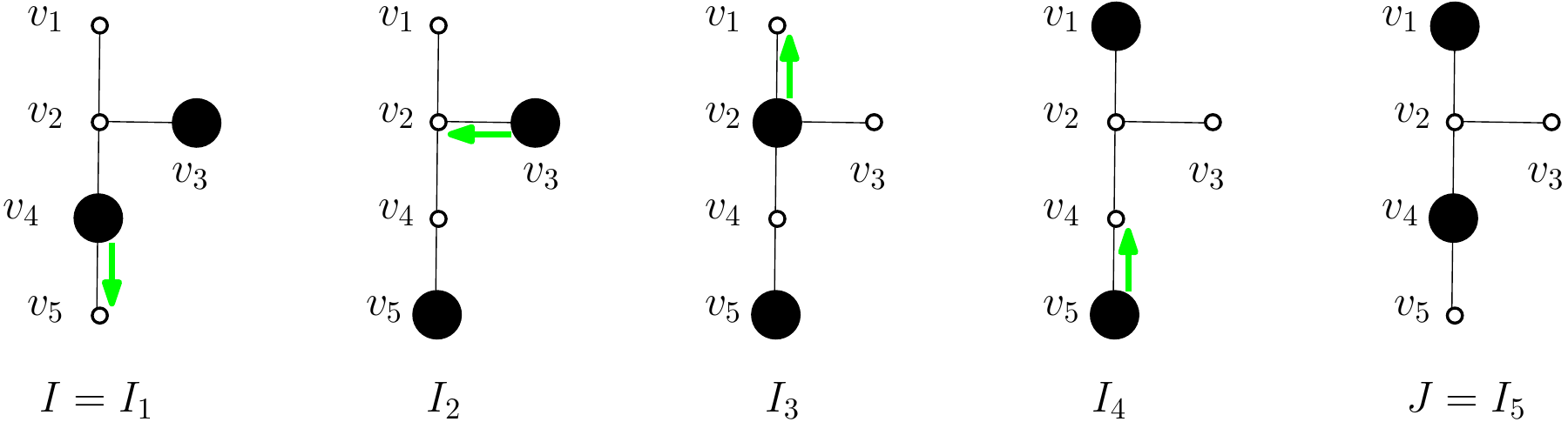}
		\caption{A sequence $\langle I_1, I_2, \ldots, I_5 \rangle$ of independent sets of the same graph,
			where the vertices in independent sets are depicted by small black circles (tokens).}
		\label{fig:example}
	\end{figure}
	
	One of the important $\sfNP$-complete problems is 
	the {\sc Independent Set} problem.
	For this notion, a natural reconfiguration problem called {\sc Sliding Token} was introduced by Hearn and Demaine~\cite{HearnDemaine2005}.
	(See~\cite{KaminskiMedvedevMilanic2012} for an overview on different reconfiguration variants of {\sc Independent Set}.)
	Suppose that we are given two independent sets  $I$ and $J$ of 
	a graph $G=(V,E)$ such that $\msize{I}=\msize{J}$, 
	and imagine that a \emph{token} (coin) is placed on each vertex in $I$. 
	For convenience, sometimes we identify the token with the vertex it is placed on and simply say ``a token in an independent set.''
	Then, the {\sc Sliding Token} problem is to determine whether there exists a sequence 
	$\calS = \langle I_1, I_2, \ldots, I_{\ell} \rangle$ of independent sets of $G$ such that
	\begin{itemize}
		\item[(a)] $I_1=I$, $I_{\ell}=J$, 
		and $\msize{I_i} = \msize{I}=\msize{J}$ for all $i$, $1 \le i \le \ell$; and 
		\item[(b)] for each $i$, $2 \le i \le \ell$, 
		there is an edge $xy$ in $G$ such that $I_{i-1} \setminus I_{i}=\{x\}$ 
		and $I_{i}\setminus I_{i-1}=\{y\}$.
	\end{itemize}
	That is, $I_{i}$ can be obtained from $I_{i-1}$ by sliding exactly 
	one token on a vertex $x \in I_{i-1}$ to its adjacent vertex $y \in I_{i}$ 
	along an edge $xy \in E(G)$.
	Such a sequence $\calS$, if exists, is called a $\sfTS$-\emph{sequence} in $G$ between $I$ and $J$.
	We denote by a $3$-tuple $(G, I, J)$ an instance of {\sc Sliding Token} problem.
	If a $\sfTS$-sequence $\calS$ in $G$ between $I$ and $J$ exists, we say that $I$ is \emph{reconfigurable} to $J$ (and vice versa), and write $I \sevstepT{G} J$.
	The sets $I$ and $J$ are the \emph{initial} and \emph{target} independent sets, respectively.
	For a $\sfTS$-sequence $\calS$, the \emph{length} $\len(\calS)$ of $\calS$ is defined as the number of independent sets in $\calS$ minus one.
	In other words, $\len(\calS)$ is the number of token-slides described in $\calS$.
	Figure~\ref{fig:example} illustrates a $\sfTS$-sequence of length $4$ between two independent sets $I = I_1$ and $J = I_5$.
	
	For the {\sc Sliding Token} problem, linear-time algorithms have been shown for cographs 
	(also known as $P_4$-free graphs) \cite{KaminskiMedvedevMilanic2012} and trees \cite{DDFEHIOOUY2015}. 
	Polynomial-time algorithms are shown for bipartite permutation graphs \cite{FoxEpsteinHoangOtachiUehara2015}, 
	claw-free graphs \cite{BonsmaKaminskiWrochna}, cacti \cite{HoangUehara2016}, and interval graphs \cite{BonamyB17} \footnote{
		We note that the algorithm for a block graph in \cite{HoangFoxEpsteinUehara2017} has a flaw,
		and hence it is not yet settled \cite{ViniCom2018}.}.
	On the other hand, $\sfPSPACE$-completeness is also shown for 
	graphs of bounded tree-width \cite{MouawadNishimuraRamanWrochna}, 
	planar graphs \cite{HearnDemaine2005,HearnDemaine2009}, planar graphs with bounded bandwidth \cite{Zanden}, and split graphs~\cite{BelmonteKLMOS18}.
	
	In this context, for a given {\sc yes}-instance $(G, I, J)$ of {\sc Sliding Token}, we aim to find a shortest $\sfTS$-sequence between $I$ and $J$.
	Such a problem is called the {\sc Shortest Sliding Token} problem.
	As seen for the 15-puzzle, the {\sc Shortest Sliding Token} problem can be intractable 
	even for these graph classes which the decision problem can be solved in polynomial time.
	Moreover, in the 15-puzzle, we already know that it has a solution of polynomial length for two configurations.
	However, in the {\sc Sliding Token} problem, we have no upper bound of the length of a solution in general.
	To deal with this delicate issue, we have to distinguish two variants of this problem.
	In the {\em decision variant}, an integer $\ell$ is also given as a part of input, 
	and we have to decide whether there exists a sequence between $I$ and $J$ of length at most $\ell$. 
	In the {\em non-decision variant}, we are asked to output a specific shortest $\sfTS$-sequence.
	The length $\ell$ is not necessarily polynomial in $\msize{V(G)}$ in general.
	When $\ell$ is super-polynomial, we may have that the decision variant is in $\sfP$, 
	while the non-decision one is not in $\sfP$ since it takes super-polynomial time to output the sequence.
	On the other hand, even when $G$ is a perfect graph and $\ell$ is polynomial in $\msize{V(G)}$, the decision variant of {\sc Shortest Sliding Token} is $\sfNP$-complete (see~\cite[Theorem 5]{KaminskiMedvedevMilanic2012}).
	In short, in the decision variant, we focus on the {\em length} of a shortest $\sfTS$-sequence, while in the non-decision variant, we focus on the {\em construction} of a shortest $\sfTS$-sequence itself.
	
	From this viewpoint, the length of a token sliding is a key feature of the 
	{\sc Shortest Sliding Token} problem. If the length is super-polynomial in total, 
	there exists at least one token that slides super-polynomial times. 
	That is, the token visits the same vertex many times in its slides.
	That is, some tokens make \emph{detours} in the sequence 
	(the notion of detour is important and precisely defined later).
	In general, it seems to be more difficult to analyze ``detours of tokens'' for graphs containing cycle(s).
	As a result, one may first consider the problem for trees.
	The {\sc Sliding Token} problem on a tree 
	can be solved in linear time \cite{DDFEHIOOUY2015}. 
	Polynomial-time algorithms for the {\sc Shortest Sliding Token} problem were first investigated in
	\cite{YamadaUehara2016}. 
	In~\cite{YamadaUehara2016}, the authors gave polynomial-time algorithms for solving {\sc Shortest Sliding Token} when the input graph is either a proper interval graph, a trivially perfect graph, or a caterpillar.
	We note that caterpillars is the first graph class that required detours to solve 
	the {\sc Shortest Sliding Token} problem.
	A caterpillar is a tree that consists of a ``backbone'' called a \emph{spine} with many \emph{pendants}, 
	or leaves attached to the spine. Each pendant can be used to escape a token, 
	however, the other tokens cannot pass through it. Therefore, the ordering of tokens on the spine is fixed.
	In this paper, we consider the {\sc Shortest Sliding Token} problem on a spider, 
	which is a tree with one central vertex of degree more than $2$. 
	On this graph, we can use each ``leg'' as a stack and exchange tokens using these stacks.
	Therefore, we have many ways to handle the tokens, and hence we need more analyses to find a shortest sequence.
	In this paper, we give an $O(n^2)$ time algorithms for the {\sc Shortest Sliding Token} problem on a spider,
	where $n$ is the number of vertices.
	The algorithm is constructive, and the sequence itself can be output in $O(n^2)$ time.
	As mentioned in \cite{YamadaUehara2016}, the number of required token-slides in a sequence can be $\Omega(n^2)$,
	hence our algorithm is optimal for the number of token-slides.
	
	\noindent
	{\bf Note:} Recently, it is announced that the {\sc Shortest Sliding Token} problem on 
	a tree can be solved in polynomial time by Sugimori \cite{Sugimori2018a}.
	His algorithm is based on a dynamic programming on a tree \cite{Sugimori2018b}:
	though it runs in polynomial time, it seems to have much larger degree comparing 
	to our case-analysis based algorithm.
	
	\section{Preliminaries}
	\label{sec:preliminaries}
	
	For common graph theoretic definitions, we refer the readers to the textbook \cite{Diestel2010}.
	Throughout this paper, we denote by $V(G)$ and $E(G)$ the vertex-set and edge-set of a graph $G$, respectively.
	We always use $n$ for denoting $\msize{V(G)}$.
	For a vertex $x \in V(G)$, we denote by $\Nei{G}{x}$ the set $\{y \in V(G): xy \in E(G)\}$ of \emph{neighbors} of $x$, and by $\Neiclosed{G}{x}$ the set $\Nei{G}{x} \cup \{x\}$ of \emph{closed neighbors} of $x$.
	In a similar manner, for an induced subgraph $H$ of $G$, the set $\Neiclosed{G}{H}$ is defined as $\bigcup_{x \in V(H)}\Neiclosed{G}{x}$.
	The \emph{degree} of $x$, denoted by $\deg_G(x)$, is the size of $\Nei{G}{x}$. 
	For $x, y \in V(G)$, the \emph{distance} $\dist_G(x, y)$ between $x$ and $y$ is simply the length (i.e., the number of edges) of a shortest $xy$-path in $G$.
	
	For a tree $T$, we denote by $P_{xy}$ the (unique) shortest $xy$-path in $T$, and by $T^x_y$ the subtree of $T$ induced by $y$ and its descendants when regarding $T$ as the tree rooted at $x$.
	A \emph{spider graph} (or \emph{starlike tree}) is a tree having exactly one vertex (called its \emph{body}) of degree at least $3$.
	For a spider $G$ with body $v$ and a vertex $w \in \Nei{G}{v}$, the path $G^v_w$ is called a \emph{leg} of $G$.
	By definition, it is not hard to see that two different legs of $G$ have no common vertex.
	For example, the graph in Figure~\ref{fig:example} is a spider with body $v = v_2$ and $\deg_G(v) = 3$ legs attached to $v$.
	
	Let $(G, I, J)$ be an instance of {\sc Shortest Sliding Token}.
	A \emph{target assignment} from $I$ to $J$ is simply a bijective mapping $f: I \to J$.
	A target assignment $f$ is called \emph{proper} if there exists a $\sfTS$-sequence in $G$ between $I$ and $J$ that moves the token on $w$ to $f(w)$ for every $w \in I$.
	Given a target assignment $f: I \to J$ from $I$ to $J$, one can also define the target assignment $f^{-1}: J \to I$ from $J$ to $I$ as follows: for every $x \in J$, $f^{-1}(x) = \{y \in I: f(y) = x\}$.
	Let $\calF$ be the set of all target assignments from $I$ to $J$.
	We define $\Mstar(G, I, J) = \min_{f \in \calF}\sum_{w \in I}\dist_G(w, f(w))$.
	Intuitively, observe that any $\sfTS$-sequence between $I$ and $J$ in $G$ (if exists) uses at least $\Mstar(G, I, J)$ token-slides.
	
	Let $\calS = \langle I_1, I_2, \dotsc, I_\ell \rangle$ be a $\sfTS$-sequence between two independent sets $I = I_1$ and $J = I_\ell$ of a graph $G$.
	Indeed, one can describe $\calS$ in term of token-slides as follows: $\calS = \langle x_1 \to y_1, x_2 \to y_2, \dotsc, x_{\ell - 1} \to y_{\ell - 1} \rangle$, where $x_i$ and $y_i$ ($i \in \{1, 2, \dotsc, \ell-1\}$) satisfy $x_iy_i \in E(G)$, $I_i \setminus I_{i+1} = \{x_i\}$, and $I_{i+1} \setminus I_i = \{y_i\}$.
	The \emph{reverse} of $\calS$ (which reconfigures $J$ to $I$), denoted by $\rev(\calS)$, is defined by $\rev(\calS) = \langle I_\ell, \dotsc, I_2, I_1 \rangle$.
	One can also describe $\rev(\calS)$ in term of token-slides: $\rev(\calS) = \langle y_{\ell-1} \to x_{\ell-1}, \dots, y_2 \to x_2, y_1 \to x_1 \rangle$.
	For example, the $\sfTS$-sequence $\calS = \langle I_1, \dotsc, I_5 \rangle$ described in Figure~\ref{fig:example} can also be written as $\calS = \langle v_4 \to v_5, v_3 \to v_2, v_2 \to v_1, v_5 \to v_4 \rangle$.
	Similarly, $\rev(\calS) = \langle I_5, \dotsc, I_1 \rangle = \langle v_4 \to v_5, v_1 \to v_2, v_2 \to v_3, v_5 \to v_4 \rangle$.
	
	For an edge $e = xy \in E(G)$, we say that $\calS$ \emph{makes detour over} $e$ if both $x \to y$ and $y \to x$ are members of $\calS$.
	We emphasize that the steps $x \to y$ and $y \to x$ is {\em not} necessarily made by the same token.
	The \emph{number of detours $\calS$ makes over $e$}, denoted by $D_G(\calS, e)$, is defined to be twice the minimum between the number of appearances of $x \to y$ and the number of appearances of $y \to x$.
	The \emph{total number of detours $\calS$ makes in $G$}, denoted by $D_G(\calS)$, is defined to be $\sum_{e \in E(G)}D_G(\calS, e)$.
	As an example, one can verify that the $\sfTS$-sequence $\calS$ described in Figure~\ref{fig:example} satisfies $D_G(\calS, v_4v_5) = 2$ and $D_G(\calS) = 2$.
	Let $\mathcal{S}$ be the set of all $\sfTS$-sequences in $G$ between two independent sets $I, J$.
	We define by $\Dstar(G, I, J) = \min_{\calS \in \mathcal{S}}D_G(\calS)$ the smallest number of detours that a $\sfTS$-sequence between $I$ and $J$ in $G$ can possibly make.
	
	For two $\sfTS$-sequences $\calS_1 = \langle x_1 \to y_1, x_2 \to y_2, \dotsc, x_{\ell - 1} \to y_{\ell - 1} \rangle$ and $\calS_2 = \langle x_1^\prime \to y_1^\prime, x_2^\prime \to y_2^\prime, \dotsc, x_p^\prime \to y_p^\prime \rangle$ in a graph $G$, if the sequence of token-slides $\calS = \langle x_1 \to y_1, x_2 \to y_2, \dotsc, x_{\ell - 1} \to y_{\ell - 1}, x_1^\prime \to y_1^\prime, x_2^\prime \to y_2^\prime, \dotsc, x_p^\prime \to y_p^\prime \rangle$ forms a $\sfTS$-sequence in $G$, we define $\calS = \calS_1 \oplus \calS_2$, and say that $\calS$ is obtained by taking the \emph{concatenation} of $\calS_1$ and $\calS_2$.
	
	\section{{\sc Shortest Sliding Token} for spiders}
	\label{sec:shortest-TS-seq-spiders}
	
	In this section, we show that {\sc Shortest Sliding Token} for spiders can be solved in polynomial time.
	More precisely, we claim that
	\begin{theorem}\label{thrm:main}
		Given an instance $(G, I, J)$ of {\sc Shortest Sliding Token} for spiders, one can construct a shortest $\sfTS$-sequence between $I$ and $J$ in $O(n^2)$ time, where $n$ denotes the number of vertices of the given spider $G$.
	\end{theorem}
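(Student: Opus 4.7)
The plan is to establish that the length of any shortest $\sfTS$-sequence equals $\Mstar(G, I, J) + \Dstar(G, I, J)$, and then to give a constructive algorithm realizing this value in $O(n^2)$ time. The lower bound is immediate: given any $\sfTS$-sequence $\calS$ between $I$ and $J$, the endpoints of the individual token trajectories induce a target assignment $f \in \calF$, the length $\len(\calS)$ equals the total path-length traveled by the tokens, and every round-trip of a token over a single edge contributes two slides that do not reduce any token's net displacement. Hence
\begin{equation*}
\len(\calS) \ \geq \ \sum_{w \in I} \dist_G(w, f(w)) \ + \ D_G(\calS) \ \geq \ \Mstar(G, I, J) + \Dstar(G, I, J),
\end{equation*}
and the remaining task is to match this bound by an explicit construction.

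For the upper bound, I would root $G$ at its body $v$ and exploit the fact that tokens within a single leg $L$ preserve their order along $L$ for as long as they stay inside $L$; any reassignment between distinct legs can only be realized by routing tokens through $v$. The first step of the algorithm is to compute an optimal target assignment: for each leg, compare the multisets $I \cap V(L)$ and $J \cap V(L)$, identify which tokens are surplus and must migrate to deficit legs through $v$, and match them so as to minimize the total distance. Because $v$ is a cut-vertex separating the legs, every candidate assignment is forced to pay the $v$-crossing cost for each migrating token, and a standard exchange argument shows that the leg-by-leg greedy matching attains $\Mstar(G, I, J)$.

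The second step is to quantify and then physically realize the forced detours. On a single leg, a detour is necessary whenever two tokens must ``pass'' each other --- impossible inside a path --- forcing one of them to visit another leg temporarily; near $v$, detours are also forced when the local configuration blocks a token that must exit its leg. I plan to enumerate these local blocking patterns exhaustively (leveraging the caterpillar-style analysis of \cite{YamadaUehara2016} applied leg-by-leg), show that each contributes a definite mandatory amount to $\Dstar(G, I, J)$, and schedule the physical slides in a canonical order: process the legs in an order that always leaves at least one leg with a free tip able to absorb a migrating token, and only insert the excursions dictated by the enumerated patterns.

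The main obstacle I anticipate is precisely this scheduling step, because a migration through $v$ can create a fresh configuration in the destination leg that triggers further detours if handled carelessly. I would address this by maintaining the invariant that once a leg has been processed, its tokens sit on their target vertices and the induced sub-instance on the remaining legs still admits the originally-planned routing; proving this invariant by induction on the number of unprocessed legs, together with a careful case analysis of the local configurations adjacent to $v$, is the technical heart of the argument. Once this is in place, the $O(n^2)$ time complexity follows since each of the $O(n)$ tokens slides at most $O(n)$ times, the initial assignment can be computed in $O(n)$ time by a single tree traversal, and the scheduling decisions cost amortized $O(1)$ per slide.
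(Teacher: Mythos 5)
Your overall architecture coincides with the paper's: show every $\sfTS$-sequence has length at least $\Mstar(G,I,J)+\Dstar(G,I,J)$, then construct one attaining that bound in $O(n^2)$ time. However, your lower-bound argument has a genuine flaw. The chain $\len(\calS) \geq \sum_{w\in I}\dist_G(w,f(w)) + D_G(\calS)$ is false for the assignment $f$ induced by $\calS$, because the paper's definition of a detour over an edge $e=xy$ counts the moves $x\to y$ and $y\to x$ even when they are made by \emph{two distinct tokens}. Concretely, if a token migrates from leg $A$ to leg $B$ and another from leg $B$ to leg $A$, both traverse the edges incident to $v$ in opposite directions, so $D_G(\calS)\geq 4$, yet every slide lies on its token's unique shortest path and $\len(\calS)=\sum_w\dist_G(w,f(w))$. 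Your justification (``round-trips do not reduce net displacement'') covers only single-token back-and-forth moves. The paper's Lemma~\ref{lem:lower-bound-len-TS-seq-tree} closes exactly this hole by splitting $D_G(\calS)$ into $D_1$ (wasted slides) and $D_2$ (opposite crossings by distinct tokens) and showing, via an exchange argument on the assignment, that $\sum_w\dist_G(w,f(w)) = \sum_w\dist_G(w,g(w)) + D_2$ for some other assignment $g$, which is what rescues the final inequality $\len(\calS)\geq \Mstar+\Dstar$. You need some version of this step.

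On the upper bound, your plan is directionally reasonable but leaves the technical core unproved, and one of its guiding intuitions is off. Under the optimal greedy assignment (the paper's Algorithm~\ref{algo:find-target-assignment}, which matches farthest-to-farthest within a leg), no two tokens ever need to pass each other inside a leg, so ``two tokens must pass on a path'' is not the source of forced detours. The detours that actually occur are forced by congestion at the neighbors of the body $v$: the paper's case analysis on $\max\{\msize{I\cap\Nei{G}{v}},\msize{J\cap\Nei{G}{v}}\}$ shows that no detours are needed when this is $0$, exactly one detour of two slides is forced in a specific sub-case when it is $1$ (a token of $I$ and a token of $J$ sit on the same neighbor of $v$ in a balanced leg), and when it is at least $2$ one must choose which blocking token to keep and evacuate the rest, with the count read off an auxiliary orientation of the edges. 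Your proposed invariant (``once a leg is processed its tokens are final and the residual instance admits the planned routing'') is plausible but is precisely the part that requires the token-ordering construction (the paper's Algorithm~\ref{algo:construct-token-ordering} and Lemma~\ref{lem:token-ordering}) and the case analysis of Lemmas~\ref{lem:no-tokens-in-neigh-v}--\ref{lem:two-tokens-in-neigh-v}; as written, it is asserted rather than established.
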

	First of all, from the linear-time algorithm for solving {\sc Sliding Token} for trees (which also applies for spiders as well) presented in~\cite{DDFEHIOOUY2015}, we can simplify our problem as follows.
	For an independent set $I$ of a tree $T$, the token on $u \in I$ is said to be $(T, I)$-\emph{rigid} if for any $\Ip$ with $I \sevstepT{T} \Ip$, $u \in \Ip$.
	Intuitively, a $(T, I)$-rigid token cannot be moved by any $\sfTS$-sequence in $T$. 
	One can find all $(T, I)$-rigid tokens in a given tree $T$ in linear time.
	Moreover, a $\sfTS$-sequence between $I$ and $J$ in $T$ exists if and only if the $(T, I)$-rigid tokens and $(T, J)$-rigid tokens are the same, and for any component $F$ of the forest obtained from $T$ by removing all vertices where $(T, I)$-rigid tokens are placed and their neighbors, $\msize{I \cap F} = \msize{J \cap F}$.
	Thus, for an instance $(G, I, J)$ of {\sc Shortest Sliding Token} for spiders, we can assume without loss of generality that $I \sevstepT{G} J$ and there are no $(G, I)$-rigid and $(G, J)$-rigid tokens.
	
	\subsection{General idea}
	\label{sec:general-idea}
	We now give a brief overview of our approach.
	For convenience, from now on, let $(G, I, J)$ be an instance of {\sc Shortest Sliding Token} for spiders satisfying the above assumption.
	Rough speaking, we aim to construct a $\sfTS$-sequence in $G$ between $I$ and $J$ of {\em minimum} length $\Mstar(G, I, J) + \Dstar(G, I, J)$, where $\Mstar(G, I, J)$ and $\Dstar(G, I, J)$ are respectively the smallest number of token-slides and the smallest number of detours that a $\sfTS$-sequence between $I$ and $J$ in $G$ can possibly perform, as defined in the previous section. 
	Indeed, the following lemma implies that any $\sfTS$-sequence in $G$ between $I$ and $J$ must be of length at least $\Mstar(G, I, J) + \Dstar(G, I, J)$.
	
	\begin{lemma}
		\label{lem:lower-bound-len-TS-seq-tree}
		Let $I, J$ be two independent sets of a tree $T$ such that $I \sevstepT{T} J$.
		Then, for every $\sfTS$-sequence $\calS$ between $I$ and $J$, $\len(\calS) \geq \Mstar(T, I, J) + \Dstar(T, I, J)$.
	\end{lemma}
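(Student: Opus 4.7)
The plan is to introduce edge-level bookkeeping on the $\sfTS$-sequence. For each edge $e = xy \in E(T)$ and each $\sfTS$-sequence $\calS$ between $I$ and $J$, let $a_e$ be the number of slides of the form $x \to y$ appearing in $\calS$ and $b_e$ the number of slides of the form $y \to x$. Then $\len(\calS) = \sum_{e \in E(T)}(a_e + b_e)$ and, directly from the definition of detour, $D_T(\calS) = \sum_{e \in E(T)} 2\min(a_e, b_e)$. Using the elementary identity $a + b = |a - b| + 2\min(a, b)$ for nonnegative integers, I obtain the clean decomposition
\[
\len(\calS) \;=\; \sum_{e \in E(T)} |a_e - b_e| \;+\; D_T(\calS).
\]

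The crucial observation is that $N := \sum_{e \in E(T)} |a_e - b_e|$ depends only on $T$, $I$, and $J$, not on the choice of $\calS$. Indeed, deleting $e = xy$ splits $T$ into two subtrees, one of which is $T^x_y$ in the paper's rooted-subtree notation. Conservation of tokens across $e$ forces the signed flow $a_e - b_e$ to equal $\msize{J \cap V(T^x_y)} - \msize{I \cap V(T^x_y)}$ for every valid $\sfTS$-sequence between $I$ and $J$, so $|a_e - b_e|$ is determined by $I$ and $J$ alone.

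The main technical step is then to identify $N$ with $\Mstar(T, I, J)$. For the direction $\Mstar(T, I, J) \geq N$, I fix an arbitrary target assignment $f\colon I \to J$, expand $\sum_{w \in I}\dist_T(w, f(w))$ as a sum over $e \in E(T)$ of the number of transport paths crossing $e$, and observe edge-by-edge that any such path-system must cross $e$ at least $\bigl|\msize{I \cap V(T^x_y)} - \msize{J \cap V(T^x_y)}\bigr|$ times by a conservation argument. For the reverse direction $\Mstar(T, I, J) \leq N$, I construct an explicit assignment meeting the bound via a parenthesis-matching scheme along a DFS traversal of $T$: visit vertices in DFS preorder, maintain a queue of already-seen tokens of $I$, and pair each newly encountered vertex of $J$ with the oldest queued token; a routine check shows that each edge is then crossed by exactly $\bigl|\msize{I \cap V(T^x_y)} - \msize{J \cap V(T^x_y)}\bigr|$ transport paths. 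I expect this identification, which is essentially the classical $L_1$-transportation formula on trees, to be the principal technical obstacle; the upper-bound construction in particular requires some care to verify. Once it is in hand, substituting $N = \Mstar(T, I, J)$ into the decomposition yields $\len(\calS) = \Mstar(T, I, J) + D_T(\calS) \geq \Mstar(T, I, J) + \Dstar(T, I, J)$, where the final inequality is immediate from the definition of $\Dstar$ as the minimum of $D_T$ over all $\sfTS$-sequences between $I$ and $J$.
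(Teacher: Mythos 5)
Your edge-level decomposition is a genuinely different route from the paper's, and its first half is correct and clean: $\len(\calS)=\sum_{e}(a_e+b_e)$, the identity $D_T(\calS)=\sum_e 2\min(a_e,b_e)$ is the paper's definition of detours verbatim, and the conservation argument giving $a_e-b_e=\msize{J\cap V(T^x_y)}-\msize{I\cap V(T^x_y)}$ is sound, so $\len(\calS)=N+D_T(\calS)$ with $N$ independent of $\calS$. The gap is in the one direction of $N=\Mstar(T,I,J)$ that the lemma actually requires, namely $N\geq \Mstar(T,I,J)$ (your ``$\Mstar\leq N$''): the FIFO pairing you propose does not achieve the cut bound, so the ``routine check'' fails. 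Take the spider with body $v$ and three legs $v\,a_1\,a_2$, $v\,b_1\,b_2$, $v\,c_1\,c_2$, with $I=\{a_2,b_2\}$ and $J=\{b_2,c_2\}$; here $I\sevstepT{T}J$ and $N=4$, attained by $a_2\mapsto c_2$, $b_2\mapsto b_2$. A DFS preorder from $v$ that visits the legs in the order $A,B,C$ encounters $a_2$, then $b_2$, then $c_2$; pairing each $J$-vertex with the \emph{oldest} queued $I$-token yields $a_2\mapsto b_2$ and $b_2\mapsto c_2$, of total cost $8$, and the edge $vb_1$ is crossed twice although $\bigl|\msize{I\cap V(T^v_{b_1})}-\msize{J\cap V(T^v_{b_1})}\bigr|=0$. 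The repair is standard: use a stack rather than a queue (balanced-parenthesis matching along the Euler tour), or equivalently the bottom-up greedy that matches $I$- with $J$-vertices inside each subtree before exporting the surplus across the parent edge; either crosses each edge exactly $\bigl|\msize{I\cap V(T^x_y)}-\msize{J\cap V(T^x_y)}\bigr|$ times. Note also that your other direction, $\Mstar\geq N$, while correct, is not needed for this lemma.

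For comparison, the paper avoids constructing an optimal assignment altogether. It starts from the target assignment $f$ that $\calS$ itself realizes, charges the slides of $\calS$ lying outside the direct paths $P_{w_if(w_i)}$ to detours (its $D_1$), and then repeatedly swaps the targets of pairs of tokens whose direct paths cross a common edge in opposite directions; each swap lowers $\sum_i\dist_T(w_i,f(w_i))$ by exactly the corresponding detour contribution $2k_e$, and the resulting assignment $g$ satisfies $\sum_i\dist_T(w_i,g(w_i))\geq\Mstar(T,I,J)$ simply because $\Mstar$ is a minimum over all assignments. Your route, once the matching construction is fixed, buys something stronger: the exact identity $\len(\calS)=\Mstar(T,I,J)+D_T(\calS)$ for \emph{every} $\sfTS$-sequence, which makes minimizing length literally equivalent to minimizing detours --- a fact the paper establishes only implicitly through its later case analysis.
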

	
	\begin{proof}
		Let $I = \{w_1, w_2, \dotsc, w_\msize{I}\}$.
		Let $\calS$ be a $\sfTS$-sequence between $I$ and $J$ that moves the token $t_i$ on $w_i$ to $f(w_i)$ for some target assignment $f: I \to J$.
		For each $i \in \{1, 2, \dotsc, \msize{I}\}$, let $\calS_i$ be the sequence of $\dist_T(w_i, f(w_i))$ token-slides that moves $t_i$ from $w_i$ to $f(w_i)$ along the (unique) path $P_{w_if(w_i)}$.
		Note that $\calS_i$ is not necessarily a $\sfTS$-sequence.
		
		Let consider the movements of $t_i$ from $w_i$ to $f(w_i)$ in the $\sfTS$-sequence $\calS$. First of all, it is clear that $t_i$ needs to make all moves in $\calS_i$.
		Since the path $P_{w_if(w_i)}$ is unique, if $t_i$ makes any move $x \to y$ that is not in $\calS_i$ for some edge $xy \in E(T)$, it must also make the move $y \to x$ later, hence forming detour over $e$.
		Let $D_1$ be the number of detours formed by the token-slides in $\calS \setminus \bigcup_{i=1}^{\msize{I}}\calS_i$.
		Clearly, $\len(\calS) = \sum_{i=1}^{\msize{I}}\dist_T(w_i, f(w_i)) + D_1$.
		
		The token-slides in $\bigcup_{i=1}^{\msize{I}}\calS_i$ may also form detour. 
		Let $i, j \in \{1, 2, \dotsc, \msize{I}\}$ be such that the sequence $\calS_i$ moves $t_i$ from $w_i$ to $f(w_i)$ and at some point makes the move $x \to y$, and the sequence $\calS_j$ moves $t_j$ from $w_j$ to $f(w_j)$ and at some point makes the move $y \to x$.
		Together, $\calS_i$ and $\calS_j$ form detour over an edge $e = xy \in E(P_{w_if(w_i)}) \cap E(P_{w_jf(w_j)})$.
		Let $D_2$ be the number of detours formed by such token-slides.
		Clearly, $D_G(\calS) = D_1 + D_2$.
		
		Suppose that for an edge $e = xy \in E(T)$, there exists $k_e$ pairs $(i_1, j_1), (i_2, j_2), \dotsc, (i_{k_e}, j_{k_e})$ with $1 \leq i_p, j_p \leq \msize{I}$, $i_p \neq j_p$, and for any two pairs $(i_p, j_p)$ and $(i_q, j_q)$, $i_p \neq i_q$ and $j_p \neq j_q$ ($1 \leq p, q \leq k_e$) such that for each $p \in \{1, 2, \dotsc, k_e\}$, the sequence $\calS_{i_p}$ at some point makes the move $x \to y$, and the sequence $\calS_{j_p}$ at some point makes the move $y \to x$.
		It follows that the vertices $\{w_{i_p}\}_{1 \leq p \leq k_e}$ and $\{f(w_{j_p})\}_{1 \leq p \leq k_e}$ are in $V(T^y_x)$, 
		and the vertices $\{w_{j_p}\}_{1 \leq p \leq k_e}$ and $\{f(w_{i_p})\}_{1 \leq p \leq k_e}$ are in $V(T^x_y)$.
		We note that $1 \leq k_e \leq \lfloor \msize{I}/2 \rfloor$, and emphasize again that $\calS_{i_p}$ and $\calS_{j_p}$ are not necessarily $\sfTS$-sequences.
		Let $\mathcal{E}_f$ be the set of all edges of $T$ satisfying the described property with respect to the target assignment $f$.
		Then, $D_2 = 2\sum_{e \in \mathcal{E}_f}k_e$.
		
		Let $e \in \mathcal{E}_f$ be an edge of $T$ as described above. 
		Let $g$ be the target assignment defined as follows: for $1 \leq p \leq k_e$, $g(w_{i_p}) = f(w_{j_p})$, $g(w_{j_p}) = f(w_{i_p})$, and $g(w_i) = f(w_i)$ for $i \notin \{i_1, i_2, \dotsc, i_{k_e}, j_1, j_2, \dotsc, j_{k_e}\}$.
		Then, $\sum_{i=1}^{\msize{I}}\dist_T(w_i, f(w_i)) = \sum_{i=1}^{\msize{I}}\dist_T(w_i, g(w_i)) + 2k_e$, and $\mathcal{E}_g = \mathcal{E}_f \setminus \{e\}$.
		Using this property repeatedly, we can finally find a target assignment $g$ such that $\mathcal{E}_g = \emptyset$ and $\sum_{i=1}^{\msize{I}}\dist_T(w_i, f(w_i)) = \sum_{i=1}^{\msize{I}}\dist_T(w_i, g(w_i)) + D_2$.
		
		Therefore, $\len(\calS) = \sum_{i=1}^{\msize{I}}\dist_T(w_i, f(w_i)) + D_1 = \sum_{i=1}^{\msize{I}}\dist_T(w_i, g(w_i)) + D_2 + D_1 \geq \Mstar(T, I, J) + \Dstar(T, I, J)$.
	\end{proof}
	
	As a result, it remains to show that any $\sfTS$-sequence in $G$ between $I$ and $J$ must be of length at most $\Mstar(G, I, J) + \Dstar(G, I, J)$, and there exists a specific $\sfTS$-sequence $\calS$ in $G$ between $I$ and $J$ whose length is exactly $\Mstar(G, I, J) + \Dstar(G, I, J)$.
	To this end, we shall analyze the following cases.
	\begin{itemize}
		\item {\bf Case~1:} $\max\{\msize{I \cap \Nei{G}{v}}, \msize{J \cap \Nei{G}{v}}\} = 0$.
		\item {\bf Case~2:} $0 < \max\{\msize{I \cap \Nei{G}{v}}, \msize{J \cap \Nei{G}{v}}\} \leq 1$.
		\item {\bf Case~3:} $\max\{\msize{I \cap \Nei{G}{v}}, \msize{J \cap \Nei{G}{v}}\} \geq 2$.
	\end{itemize}
	In each case, we claim that it is possible to simultaneously determine $\Dstar(G, I, J)$ and construct a $\sfTS$-sequence in $G$ between $I$ and $J$ whose length is minimum.
	More precisely, in {\bf Case~1}, we show that it is always possible to construct a $\sfTS$-sequence between $I$ and $J$ of length $\Mstar(G, I, J)$, that is, no detours are required. 
	(Note that, no $\sfTS$-sequence can use less than $\Mstar(G, I, J)$ token-slides.)
	However, this does not hold in {\bf Case~2}.
	In this case, we show that in certain conditions, detours cannot be avoided, that is, any $\sfTS$-sequence must make detours at least one time at some edge of $G$.
	More precisely, in such situations, we show that it is possible to construct a $\sfTS$-sequence between $I$ and $J$ of length $\Mstar(G, I, J) + 2$, that is, the sequence makes detour at exactly one edge. 
	Finally, in {\bf Case~3}, we show that detours cannot be avoided at all, and it is possible to construct a $\sfTS$-sequence between $I$ and $J$ of minimum length, without even knowing exactly how many detours it performs.
	As a by-product, we also describe how one can calculate this (smallest) number of detours precisely.
	
	\subsection{When $\max\{\msize{I \cap \Nei{G}{v}}, \msize{J \cap \Nei{G}{v}}\} = 0$}
	As mentioned before, in this case, we will describe how to construct a $\sfTS$-sequence $\calS$ in $G$ between $I$ and $J$ whose length $\len(\calS)$ equals $\Mstar(G, I, J) + \Dstar(G, I, J)$. 
	In general, to construct any $\sfTS$-sequence, we need: (1) a target assignment $f$ that tells us the final position a token should be moved to (say, a token on $v$ should finally be moved to $f(v)$); and (2) an ordering of tokens that tells us which token should move first. 
	From the definition of $\Mstar(G, I, J)$, it is natural to require that our target assignment $f$ satisfies $\Mstar(G, I, J) = \sum_{w \in I}\dist_G(w, f(w))$.
	As you will see later, such a target assignment exists, and we can always construct one in polynomial time.
	We also claim that one can efficiently define a total ordering $\prec$ of vertices in $I$ such that if $x, y \in I$ and $x \prec y$, then the token on $x$ will be moved before the token on $y$ in our desired $\sfTS$-sequence.
	Combining these results, our desired $\sfTS$-sequence will finally be constructed (in polynomial time).
	
	\subparagraph*{Target assignment.} 
	We now describe how to construct a target assignment $f$ such that $\Mstar(G, I, J) = \sum_{w \in I}\dist_G(w, f(w))$.
	For convenience, we always assume that the given spider $G$ has body $v$ and $\deg_G(v)$ legs $L_1, \dotsc, L_{\deg_G(v)}$. 
	Moreover, we assume without loss of generality that these legs are labeled such that $\msize{I \cap V(L_i)} - \msize{J \cap V(L_i)} \leq \msize{I \cap V(L_j)} - \msize{J \cap V(L_j)}$ for $1 \leq i \leq j \leq \deg_G(v)$; otherwise, we simply re-label them.
	For each leg $L_i$ ($i \in \{1, 2, \dotsc, \deg_G(v)\}$), we define the corresponding independent sets $I_{L_i}$ and $J_{L_i}$ as follows: $I_{L_1} = (I \cap V(L_1)) \cup (I \cap \{v\})$; $J_{L_1} = (J \cap V(L_1)) \cup (J \cap \{v\})$; and for $i \in \{2, \dotsc, d\}$, we define $I_{L_i} = I \cap V(L_i)$ and $J_{L_i} = J \cap V(L_i)$.
	In this way, we always have $v \in I_{L_1}$ (resp. $v \in J_{L_1}$) if $v \in I$ (resp. $v \in J$). 
	This definition will be helpful when considering tokens placed at the body vertex $v$.
	
	Under the above assumptions, we design Algorithm~\ref{algo:find-target-assignment} for constructing $f$ as below.
	\begin{algorithm}
		\caption{Find a target assignment between two independent sets $I, J$ of a spider $G$ such that $\Mstar(G, I, J) = \sum_{w \in I}\dist_G(w, f(w))$.}
		\label{algo:find-target-assignment}
		\textbf{Input:} Two independent sets $I, J$ of a spider $G$ with body $v$. \\
		\textbf{Output:} A target assignment $f: I \to J$ such that $\Mstar(G, I, J) = \sum_{w \in I}\dist_G(w, f(w))$.
		\begin{algorithmic}[1]
			\For{$i=1$ to $\deg_G(v)$\label{algo-line:find-target-assignment-IL2-s}}
			\While{$I_{L_i} \neq \emptyset$ and $J_{L_i} \neq \emptyset$}
			\State Let $x \in I_{L_i}$ be such that $\dist_G(x, v) = \max_{x^\prime \in I_{L_i}}\dist_G(x^\prime, v)$. \Comment{$x$ is the farthest vertex from $v$ in $I_{L_i}$ that has not yet been assigned}
			\State Let $y \in J_{L_i}$ be such that $\dist_G(y, v) = \max_{y^\prime \in J_{L_i}}\dist_G(y^\prime, v)$. \Comment{$y$ is the farthest vertex from $v$ in $J_{L_i}$ that has not yet been assigned}
			\State $f(x) \leftarrow y$; $I_{L_i} \leftarrow I_{L_i} \setminus \{x\}$; $J_{L_i} \leftarrow J_{L_i} \setminus \{y\}$. \Comment{assign $y = f(x)$ and remove them from the independent sets}
			\EndWhile 
			\EndFor \label{algo-line:find-target-assignment-IL2-e}
			\While{$\bigcup_{i = 1}^{\deg_G(v)}I_{L_i} \neq \emptyset$ and $\bigcup_{i = 1}^{\deg_G(v)}J_{L_i} \neq \emptyset$\label{algo-line:find-target-assignment-IL1-s}} \Comment{From this point, for any leg $L$, either $I_L = \emptyset$ or $J_L = \emptyset$.}
			\State Take a leg $L_i$ such that there exists $x \in I_{L_i}$ satisfying $\dist_G(x, v) = \min_{x^\prime \in \bigcup_{i = 1}^{\deg_G(v)}I_{L_i}}\dist_G(x^\prime, v)$. \Comment{$x$ is a closest vertex from $v$ in $I_{L_i}$ that has not yet been assigned}
			\State Take a leg $L_j$ such that there exists $y \in J_{L_j}$ satisfying $\dist_G(y, v) = \max_{y^\prime \in \bigcup_{i = 1}^{\deg_G(v)}J_{L_i}}\dist_G(y^\prime, v)$. \Comment{$y$ is a closest vertex from $v$ in $J_{L_i}$ that has not yet been assigned}
			\State $f(x) \leftarrow y$; $I_{L_i} \leftarrow I_{L_i} \setminus \{x\}$; $J_{L_j} \leftarrow J_{L_j} \setminus \{y\}$.
			\EndWhile \label{algo-line:find-target-assignment-IL1-e}
			\State \Return $f$.
		\end{algorithmic}
	\end{algorithm}
	The next lemma says that Algorithm~\ref{algo:find-target-assignment} efficiently produces our desired target assignment.
	\begin{lemma}
		\label{lem:algo-1-construct-shortest-assignment}
		Let $(G, I, J)$ be an instance of {\sc Shortest Sliding Token} where $I, J$ are independent sets of a spider $G$ with body $v$.
		Let $f: I \to J$ be a target assignment produced from Algorithm~\ref{algo:find-target-assignment}.
		Then, 
		\begin{itemize}
			\item[(i)] Algorithm~\ref{algo:find-target-assignment} constructs $f$ in $O(\msize{I})$ time; and
			\item[(ii)] for an arbitrary target assignment $g: I \to J$, $\sum_{w \in I}\dist_G(w, g(w)) \geq \sum_{w \in I}\dist_G(w, f(w))$.
			In other words, $f$ satisfies $\Mstar(G, I, J) = \sum_{w \in I}\dist_G(w, f(w))$.
		\end{itemize}
	\end{lemma}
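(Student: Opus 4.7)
For the time bound I would preprocess $G$ in $O(n)$ time to identify the body $v$, the legs, and, for each leg, the vertices of $I$ and $J$ lying on it in sorted order of distance from $v$. With this preprocessing in place, each iteration of the \texttt{while} loops of Algorithm~\ref{algo:find-target-assignment} accesses a max- or min-distance element of some $I_{L_i}$ or $J_{L_j}$ in $O(1)$ time and removes exactly one element; since each vertex of $I \cup J$ is removed at most once, the total number of iterations is $O(\msize{I})$, yielding the claimed bound.

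\textbf{Part (ii).} Here I would use an exchange argument built on the distance identity for a spider with body $v$:
\[
\dist_G(x, y) = \begin{cases} \msize{\dist_G(x,v) - \dist_G(y,v)} & \text{if $x, y$ lie on the same leg,} \\ \dist_G(x,v) + \dist_G(y,v) & \text{otherwise.} \end{cases}
\]
Using this, the cost of any target assignment $g: I \to J$ decomposes as $\sum_{w \in I}\dist_G(w,g(w)) = \sum_{i}C_i(g) + \sum_{i}S_i(g) + \sum_{j}R_j(g)$, where $C_i(g)$ is the within-leg cost on $L_i$, $S_i(g) = \sum_{x \in I_{L_i},\,g(x) \notin J_{L_i}} \dist_G(x,v)$, and $R_j(g) = \sum_{y \in J_{L_j},\,g^{-1}(y) \notin I_{L_j}} \dist_G(y,v)$. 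A crucial point is that the specific cross-leg pairing does not appear in this decomposition: only the multisets of sent-out tokens and received-from-outside targets matter.

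I would then apply two exchanges. \textbf{Step~A} (sort within a leg): if within some $L_i$, $g$ matches $x,x' \in I_{L_i}$ with $y,y' \in J_{L_i}$ in an unsorted way---say $\dist_G(x,v) > \dist_G(x',v)$ while $\dist_G(y,v) < \dist_G(y',v)$---then swapping the two pairings does not increase cost, by the standard rearrangement inequality on a path. \textbf{Step~B} (prefer to keep the farthest in-leg tokens): if $x \in I_{L_i}$ is sent out while $x' \in I_{L_i}$ with $\dist_G(x',v) < \dist_G(x,v)$ is matched in-leg to some $y \in J_{L_i}$, then the swap (send $x'$ out, match $x \mapsto y$) does not increase cost; the three subcases determined by the position of $\dist_G(y,v)$ relative to $\dist_G(x',v)$ and $\dist_G(x,v)$ give respective cost changes of $0$, $2(\dist_G(x',v)-\dist_G(y,v)) \leq 0$, and $2(\dist_G(x',v)-\dist_G(x,v)) < 0$. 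A symmetric exchange handles the received-target side $R_j$.

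Iterating Steps~A and~B transforms an arbitrary $g$, without ever increasing its cost and while preserving bijectivity, into the structure produced by Algorithm~\ref{algo:find-target-assignment}: within each leg $L_i$, the $\min\{\msize{I_{L_i}},\msize{J_{L_i}}\}$ farthest tokens and targets are matched in farthest-to-farthest order, and the remaining closest leftover elements are paired arbitrarily across legs. Because the cost is invariant under the cross-leg pairing, the resulting total equals $\sum_{w}\dist_G(w,f(w))$, proving $\sum_{w}\dist_G(w,g(w)) \geq \sum_{w}\dist_G(w,f(w))$ and hence (ii). The main obstacle I anticipate is the middle subcase of Step~B---when $y$ lies between $x'$ and $x$ on the leg---whose sign depends on playing the path-distance identity off against the cost of sending a token to the body; this is precisely where the spider (one-dimensional leg) structure is essential.
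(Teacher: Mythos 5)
Your route through part (ii) is genuinely different from the paper's: the paper first proves a four-point exchange inequality (Lemma~\ref{lem:switch-assignment-pair-reduces-sum-of-dist}) by an exhaustive case analysis of the relative positions of $w_i$, $w_j$, $f(w_i)$, $f(w_p)$ on the legs, and then runs an induction on the number of positions where $g$ disagrees with $f$. Your decomposition of the cost into within-leg terms $C_i$, send-out terms $S_i$, and receive terms $R_j$, together with the observation that the cross-leg pairing is cost-irrelevant, is a cleaner way to organize the same information and would avoid most of that case analysis; your treatment of part (i) is also fine (indeed more careful than the paper's one-line dismissal, since it accounts for the preprocessing needed to answer the max/min-distance queries in constant time).

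There is, however, a concrete gap in the exchange argument. Steps~A and~B both preserve, for every leg $L$, the \emph{number} of pairs of $g$ matched inside $L$: Step~A permutes an existing within-leg matching, and Step~B only swaps which of two tokens of $I_L$ plays the ``sent out'' role versus the ``kept in leg'' role (likewise for its symmetric version on targets). Hence, if $g$ matches fewer than $\min\{\msize{I_L},\msize{J_L}\}$ pairs inside some leg $L$ --- that is, $L$ simultaneously contains a token $x$ with $g(x)\notin J_L$ and a target $y$ with $g^{-1}(y)\notin I_L$ --- then no sequence of applications of Steps~A and~B can reach the structure produced by Algorithm~\ref{algo:find-target-assignment}, and the final comparison with $f$ does not go through. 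You need a third, ``uncrossing'' exchange: replace the pairs $x \mapsto g(x)$ and $g^{-1}(y)\mapsto y$ by $x\mapsto y$ and $g^{-1}(y)\mapsto g(x)$. By your own distance identity the old contribution is $\bigl(\dist_G(x,v)+\dist_G(g(x),v)\bigr)+\bigl(\dist_G(g^{-1}(y),v)+\dist_G(y,v)\bigr)$, while the new one is $\dist_G(x,y)+\dist_G(g^{-1}(y),g(x))$, which is at most the old by the triangle inequality on the spider; and the move strictly increases the number of within-leg matches in $L$, so it can be applied only finitely often. With this third step added (and a short lexicographic potential --- cost, then total sent-out/received distance, then within-leg inversions --- to guarantee termination through the tie cases of Steps~A and~B), your argument becomes complete.
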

	
	Before proving Lemma~\ref{lem:algo-1-construct-shortest-assignment}, we prove the following useful lemma.
	\begin{lemma}
		\label{lem:switch-assignment-pair-reduces-sum-of-dist}
		Let $(G, I, J)$ be an instance of {\sc Shortest Sliding Token} where $I, J$ are independent sets of a spider $G$ with body $v$.
		Let $f$ be a target assignment produced from Algorithm~\ref{algo:find-target-assignment}.
		Let $I = \{w_1, w_2, \dotsc, w_{\msize{I}}\}$ be such that $w_1 < w_2 < \dots <  w_{\msize{I}}$.
		Let $i, j, p \in \{1, 2, \dotsc, \msize{I}\}$ be the indices such that $w_i < \min_{<}\{w_j, w_p\}$, i.e., $w_i$ is assigned before $w_j$ and $w_p$.
		Then, $\dist_G(w_i, f(w_p)) + \dist_G(w_j, f(w_i)) \geq \dist_G(w_i, f(w_i)) + \dist_G(w_j, f(w_p))$.
	\end{lemma}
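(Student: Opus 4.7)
The plan is to carry out a case analysis driven by which branch of Algorithm~\ref{algo:find-target-assignment} produces the assignment $f(w_i)$. Write $d(u) := \dist_G(u, v)$, and recall that for any two vertices $u, w$ of the spider one has $\dist_G(u,w) = |d(u) - d(w)|$ when $u$ and $w$ share a leg and $\dist_G(u,w) = d(u) + d(w)$ otherwise. The claim is equivalent to
\[
\dist_G(w_i, f(w_p)) - \dist_G(w_i, f(w_i)) \;\geq\; \dist_G(w_j, f(w_p)) - \dist_G(w_j, f(w_i)),
\]
so informally one must show that $w_i$ prefers $f(w_i)$ over $f(w_p)$ at least as strongly as $w_j$ does.

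\textbf{Case 1.} Suppose $w_i$ is assigned inside the per-leg loop (lines~\ref{algo-line:find-target-assignment-IL2-s}--\ref{algo-line:find-target-assignment-IL2-e}) while processing some leg $L$. Then $f(w_i) \in V(L)$, and because $w_i$ (resp.\ $f(w_i)$) is the farthest-from-$v$ element still in $I_L$ (resp.\ $J_L$) at that moment, every later-assigned vertex of $I_L$ or $J_L$ has distance to $v$ bounded by $d(w_i)$ or $d(f(w_i))$ respectively. Combined with the ordering hypothesis $w_i < \min_{<}\{w_j, w_p\}$, this yields $d(w_p) \leq d(w_i)$ and $d(f(w_p)) \leq d(f(w_i))$ whenever $w_p \in V(L)$ (resp.\ $f(w_p) \in V(L)$), and $d(w_j) \leq d(w_i)$ whenever $w_j \in V(L)$. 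I would then split further according to whether $w_p$ is processed on $L$, on a different leg during phase~1, or during phase~2, and, inside each of these, by the leg containing $w_j$. In every subcase the four distances become explicit expressions in the five quantities $d(w_i), d(w_j), d(w_p), d(f(w_i)), d(f(w_p))$, and the target inequality reduces to one of a handful of elementary facts: typically the monotonicity of $g(x) := |x - d(f(w_i))| - |x - d(f(w_p))|$ (non-increasing because $d(f(w_i)) \geq d(f(w_p))$) combined with $d(w_j) \leq d(w_i)$, the reverse triangle inequality, or the identity $a + b - |a-b| = 2\min(a,b)$.

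\textbf{Case 2.} Suppose $w_i$ is assigned in the cross-leg loop (lines~\ref{algo-line:find-target-assignment-IL1-s}--\ref{algo-line:find-target-assignment-IL1-e}). Then $w_i$ and $f(w_i)$ lie on distinct legs, $d(w_i) \leq d(x)$ for every $I$-vertex $x$ still unmatched at that moment, and $d(f(w_i)) \geq d(y)$ for every remaining $J$-vertex $y$. Since $w_p$ and $w_j$ are assigned later, $d(w_i) \leq d(w_p)$, $d(w_i) \leq d(w_j)$, and $d(f(w_p)) \leq d(f(w_i))$; moreover $w_p$ and $f(w_p)$ also lie on different legs. I would split by the pair of legs carrying $w_p$ and $f(w_p)$ relative to those of $w_i$ and $f(w_i)$, and by the leg of $w_j$; in each subcase the inequality collapses to an elementary $d$-value manipulation controlled by the three inequalities just listed.

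The main obstacle is purely bookkeeping: with five relevant vertices and two phases, the case tree is moderately large, and one must verify that every subcase is actually reachable under $w_i < \min_{<}\{w_j, w_p\}$ (for example, in phase~1 the for-loop order on legs forces the leg of $w_i$ to have index no larger than that of any later-assigned phase-1 vertex, which rules out what would otherwise look like counterexamples involving $f(w_p)$ placed near the body). Once the cases are enumerated, each verification is a short computation using only the path structure of the legs and the monotonicity of piecewise-linear $|\cdot|$ expressions.
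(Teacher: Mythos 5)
Your proposal is correct and follows essentially the same route as the paper's proof: an exhaustive case analysis on the leg memberships of $w_i$, $w_j$, $f(w_i)$, $f(w_p)$, driven by the greedy invariants of Algorithm~\ref{algo:find-target-assignment} (farthest-first within a leg, the monotonicity $\dist_G(v,f(w_i)) \geq \dist_G(v,f(w_p))$, and the observation that certain configurations are unreachable given $w_i < \min_{<}\{w_j,w_p\}$), with each surviving subcase reduced to an elementary inequality in distances to the body. The paper organizes the cases by leg co-membership of the four vertices and verifies them pictorially, whereas you organize by algorithm phase and reduce to monotonicity of piecewise-linear expressions, but this is a cosmetic difference in bookkeeping rather than a different argument.
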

	
	\begin{proof}
		If $w_i = f(w_i)$, the desired inequality becomes the famous triangle inequality. 
		Thus, we can assume without loss of generality that $w_i \neq f(w_i)$.
		
		Based on the possible relative positions of $w_i$, $w_j$, $f(w_i)$, and $f(w_p)$, we consider the following cases.
		
		\begin{itemize}
			\item \textbf{Case 1:} $w_i$ \textbf{and} $w_j$ \textbf{are in} $I_L$.
			\begin{itemize}
				\item \textbf{Case 1.1:} $f(w_i)$ \textbf{and} $f(w_p)$ \textbf{are in} $J_L$.
				From Algorithm~\ref{algo:find-target-assignment}, we always have $\dist_G(v, f(w_i)) > \dist_G(v, f(w_p))$.
				Since $w_i \in I_L$ is assigned before $w_j \in I_L$, and $f(w_i) \in J_L$, it follows that $\dist_G(v, w_i) > \dist_G(v, w_j)$.
				(See Figure~\ref{fig:wi-wj-fi-fp-in-L}.) 	
				We note that the case $w_i = f(w_p)$ can be seen as a special case of Cases (a) or (b) in Figure~\ref{fig:wi-wj-fi-fp-in-L}.
				Similarly, the case $w_j = f(w_p)$ can be seen as a special case of Cases (b), (c), (d), or (f) in Figure~\ref{fig:wi-wj-fi-fp-in-L}; and the case $w_j = f(w_i)$ can be seen as a special case of Cases (d) or (e).
				Similar arguments hold for the next cases.
				\begin{figure}[!ht]
					\centering
					\includegraphics[scale=0.7]{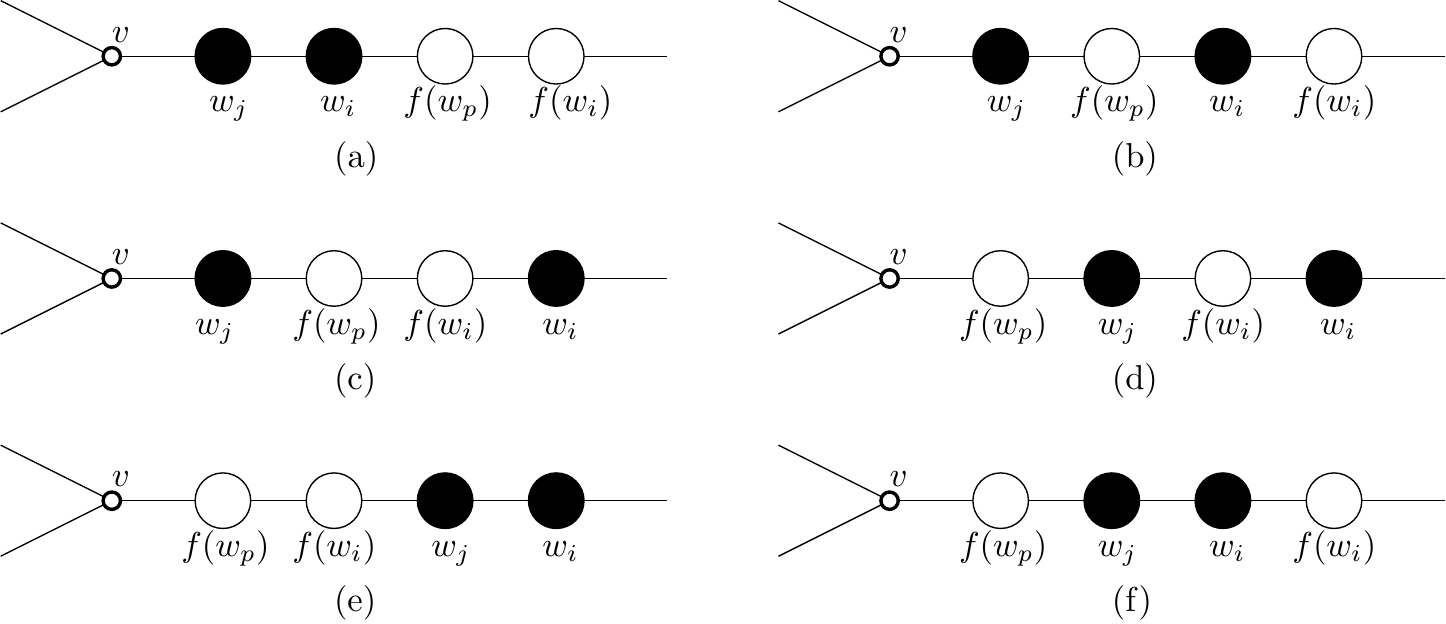}
					\caption{Possible relative positions of $w_i$, $w_j$, $f(w_i)$, and $f(w_p)$ when $w_i, w_j \in I_L$, and $f(w_i), f(w_p) \in J_L$.}
					\label{fig:wi-wj-fi-fp-in-L}
				\end{figure}
				
				\item \textbf{Case 1.2:} $f(w_i)$ \textbf{and} $f(w_p)$ \textbf{are in} $J_{L^\prime}$, $L^\prime \neq L$. (See Figure~\ref{fig:wi-wj-in-L-fi-fp-in-Lp}.)
				From Algorithm~\ref{algo:find-target-assignment}, we always have $\dist_G(v, f(w_i)) > \dist_G(v, f(w_p))$.
				Note that if $p = j$, Case (a) of Figure~\ref{fig:wi-wj-in-L-fi-fp-in-Lp} does not happen; otherwise, $w_j$ must be assigned before $w_i$.
				\begin{figure}[!ht] 
					\centering
					\includegraphics[scale=0.7]{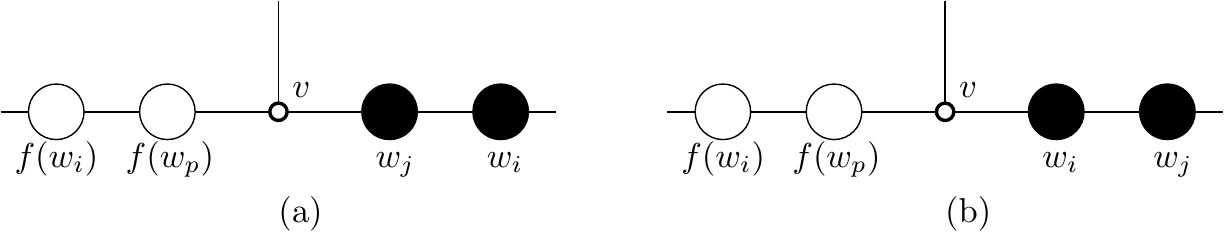}
					\caption{Possible relative positions of $w_i$, $w_j$, $f(w_i)$, and $f(w_p)$ when $w_i, w_j \in I_L$, and $f(w_i), f(w_p) \in J_{L^\prime}$, $L^\prime \neq L$.}
					\label{fig:wi-wj-in-L-fi-fp-in-Lp}
				\end{figure}
				
				\item \textbf{Case 1.3:} $f(w_i)$ \textbf{is in} $J_L$, $f(w_p)$ \textbf{is in} $J_{L^\prime}$, $L^\prime \neq L$.
				Since $w_i \in I_L$, $f(w_i) \in J_L$, and $w_i$ is assigned before $w_j$, it follows that $\dist_G(v, w_i) > \dist_G(v, w_j)$.
				(See Figure~\ref{fig:wi-wj-fi-in-L-fp-in-Lp}.) 		
				\begin{figure}[!ht]
					\centering
					\includegraphics[scale=0.7]{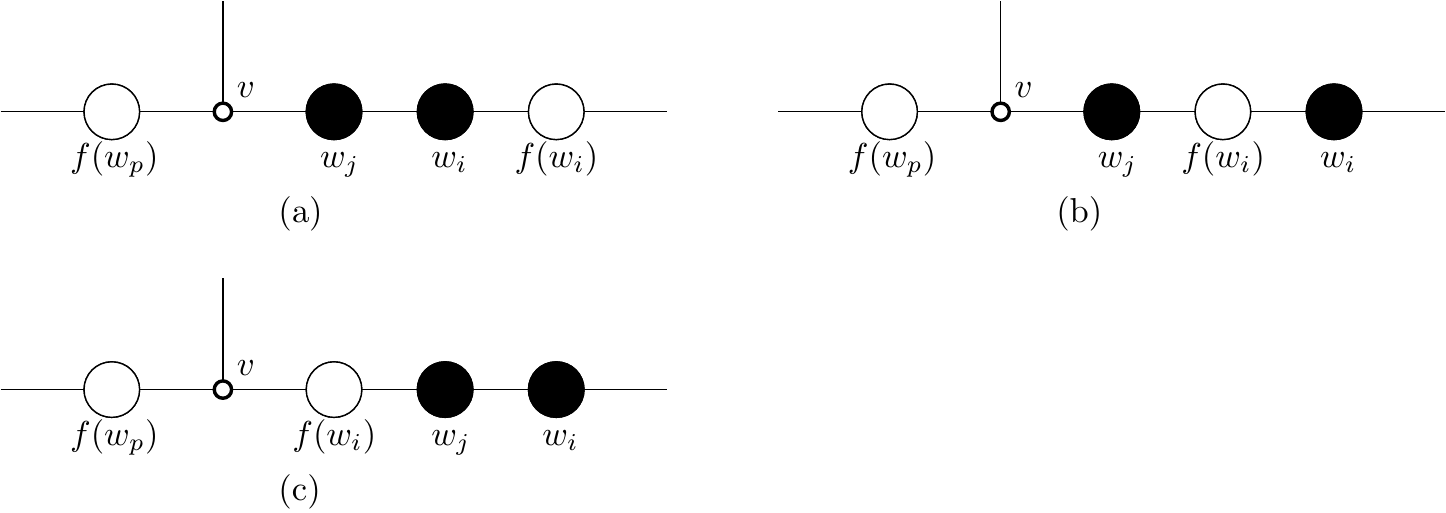}
					\caption{Possible relative positions of $w_i$, $w_j$, $f(w_i)$, and $f(w_p)$ when $w_i, w_j \in I_L$, $f(w_i) \in J_L$, and $f(w_p) \in J_{L^\prime}$, $L^\prime \neq L$.}
					\label{fig:wi-wj-fi-in-L-fp-in-Lp}
				\end{figure}
				
				\item \textbf{Case 1.4:} $f(w_p)$ \textbf{is in} $J_L$, $f(w_i)$ \textbf{is in} $J_{L^\prime}$, $L^\prime \neq L$.
				Since $w_i \in I_L$, $f(w_i) \notin J_L$, and $w_i$ is assigned before $w_j$, it follows that $\dist_G(v, w_i) < \dist_G(v, w_j)$ and $f(w_j) \notin J_L$, which implies that $p \neq j$. (See Figure~\ref{fig:wi-wj-fp-in-L-fi-in-Lp}.) 	
				\begin{figure}[!ht]
					\centering
					\includegraphics[scale=0.7]{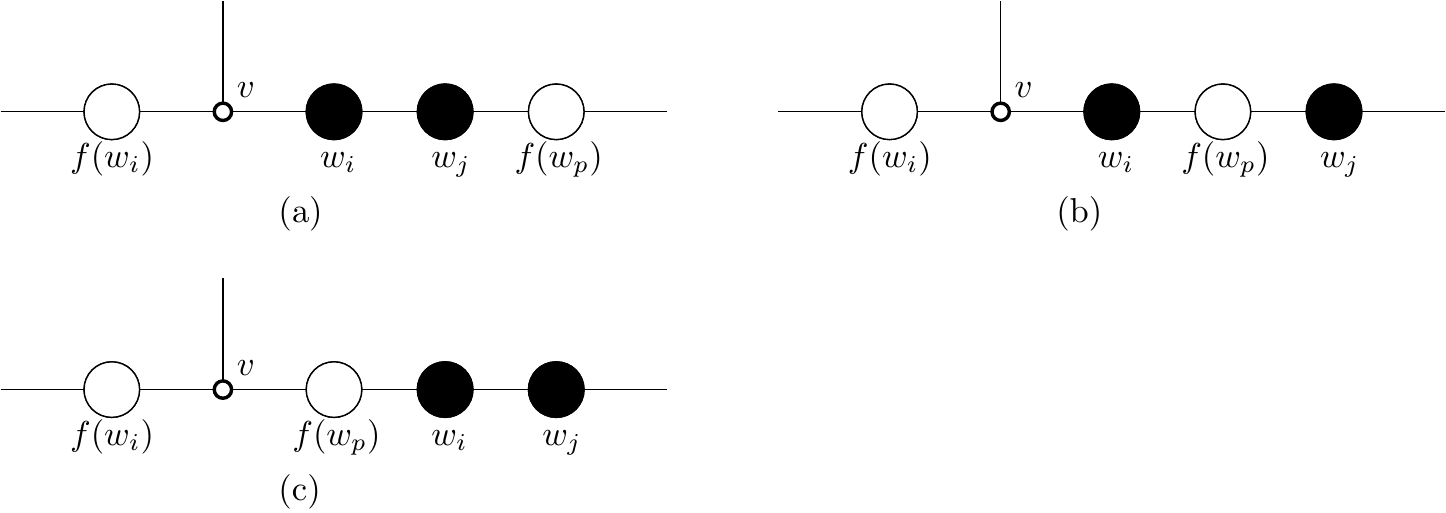}
					\caption{Possible relative positions of $w_i$, $w_j$, $f(w_i)$, and $f(w_p)$ when $w_i, w_j \in I_L$, $f(w_p) \in J_L$, and $f(w_i) \in J_{L^\prime}$, $L^\prime \neq L$.}
					\label{fig:wi-wj-fp-in-L-fi-in-Lp}
				\end{figure}
				
				\item \textbf{Case 1.5:} $f(w_i)$ is in $J_{L^\prime}$, $L^\prime \neq L$, $f(w_p)$ \textbf{is in} $J_{L^{\prime\prime}}$, $L^{\prime\prime} \notin \{L, L^\prime\}$.
				(See Figure~\ref{fig:wi-wj-in-L-fi-in-Lp-fp-in-Lpp}.)
				Note that if $p = j$, Case (b) of Figure~\ref{fig:wi-wj-in-L-fi-in-Lp-fp-in-Lpp} does not happen; otherwise, $w_j$ must be assigned before $w_i$.
				\begin{figure}[!ht]
					\centering
					\includegraphics[scale=0.7]{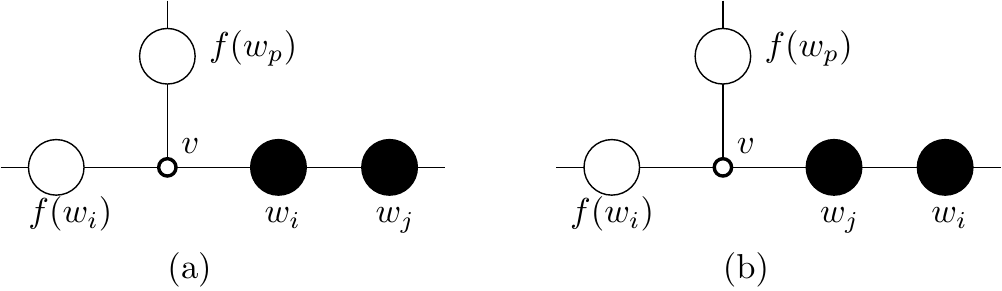}
					\caption{Possible relative positions of $w_i$, $w_j$, $f(w_i)$, and $f(w_p)$ when $w_i, w_j \in I_L$, $f(w_i) \in J_{L^\prime}$, $L^\prime \neq L$, and $f(w_p) \in J_{L^{\prime\prime}}$, $L^{\prime\prime} \notin \{L, L^\prime\}$.}
					\label{fig:wi-wj-in-L-fi-in-Lp-fp-in-Lpp}
				\end{figure}
			\end{itemize}
			
			\item \textbf{Case 2:} $w_i \in I_L$ \textbf{and} $w_j \in I_{L^\prime}$, $L \neq L^\prime$.
			\begin{itemize}
				\item \textbf{Case 2.1:} $f(w_i)$ \textbf{and} $f(w_p)$ \textbf{are in} $J_L$. 
				From Algorithm~\ref{algo:find-target-assignment}, we always have $\dist_G(v, f(w_i)) > \dist_G(v, f(w_p))$. 
				(See Figure~\ref{fig:wi-fi-fp-in-L-wj-in-Lp}.) 
				\begin{figure}[!ht]
					\centering
					\includegraphics[scale=0.7]{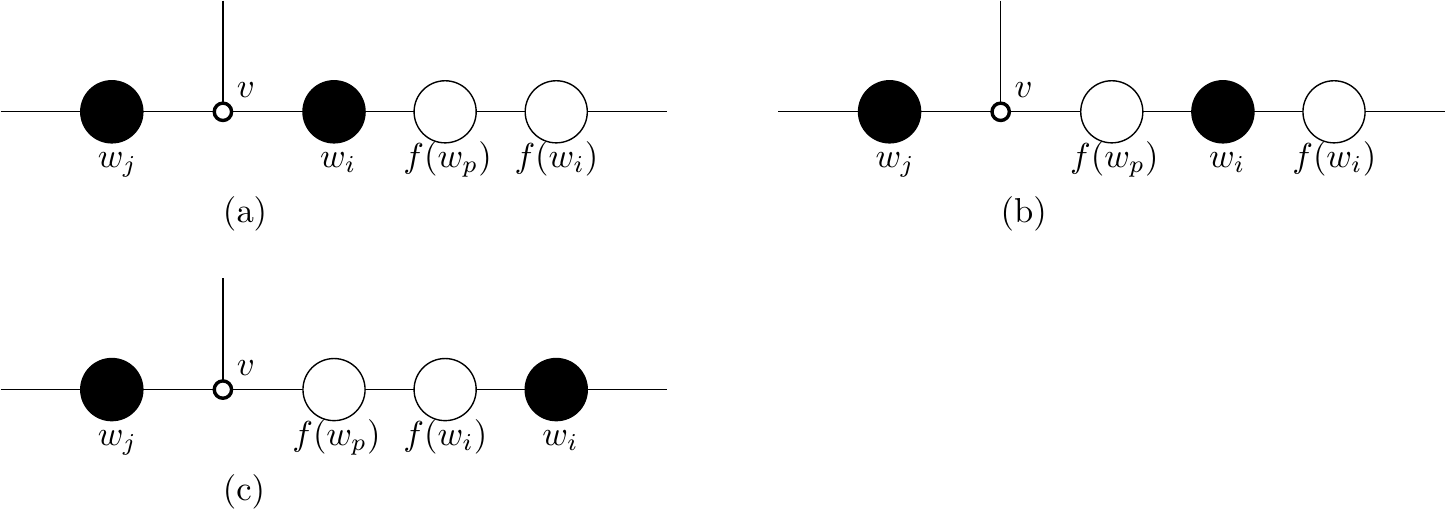}
					\caption{Possible relative positions of $w_i$, $w_j$, $f(w_i)$, and $f(w_p)$ when $w_i \in I_L$, $f(w_i), f(w_p) \in J_L$, and $w_j \in I_{L^\prime}$, $L^\prime \neq L$.}
					\label{fig:wi-fi-fp-in-L-wj-in-Lp}
				\end{figure}
				
				\item \textbf{Case 2.2:} $f(w_i)$ \textbf{and} $f(w_p)$ \textbf{are in} $J_{L^\prime}$, $L^\prime \neq L$. 
				From Algorithm~\ref{algo:find-target-assignment}, we always have $\dist_G(v, f(w_i)) > \dist_G(v, f(w_p))$. 
				Since $w_i \in I_L$ and $f(w_i) \in J_{L^\prime}$, $L^\prime \neq L$, Algorithm~\ref{algo:find-target-assignment} will assign any vertex in $I_{L^\prime}$ to some vertex in $J_{L^\prime}$, which implies $f(w_j) \in J_{L^\prime}$. 
				However, since $w_j \in I_{L^\prime}$ and $f(w_j) \in J_{L^\prime}$, Algorithm~\ref{algo:find-target-assignment} must assign $w_j$ before $w_i$, a contradiction. Thus, this case cannot happen.
				
				\item \textbf{Case 2.3:} $f(w_i)$ \textbf{and} $f(w_p)$ \textbf{are in} $J_{L^{\prime\prime}}$, $L^{\prime\prime} \notin \{L, L^\prime\}$.  
				From Algorithm~\ref{algo:find-target-assignment}, we always have $\dist_G(v, f(w_i)) > \dist_G(v, f(w_p))$. 
				(See Figure~\ref{fig:wi-in-L-wj-in-Lp-fi-fp-in-Lpp}.) 
				\begin{figure}[!ht]
					\centering
					\includegraphics[scale=0.7]{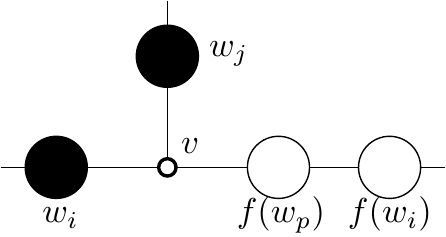}
					\caption{Possible relative positions of $w_i$, $w_j$, $f(w_i)$, and $f(w_p)$ when $w_i \in I_L$, $w_j \in I_{L^\prime}$, $L^\prime \neq L$, and $f(w_i), f(w_p) \in J_{L^\prime}$, $L^{\prime\prime} \notin \{L, L^\prime\}$.}
					\label{fig:wi-in-L-wj-in-Lp-fi-fp-in-Lpp}
				\end{figure}
				
				\item \textbf{Case 2.4:} $f(w_i)$ \textbf{is in} $J_L$, $f(w_p)$ \textbf{is in} $J_{L^\prime}$, $L^\prime \neq L$. 
				(See Figure~\ref{fig:wi-fi-in-L-wj-fp-in-Lp}.) 
				\begin{figure}[!ht]
					\centering
					\includegraphics[scale=0.7]{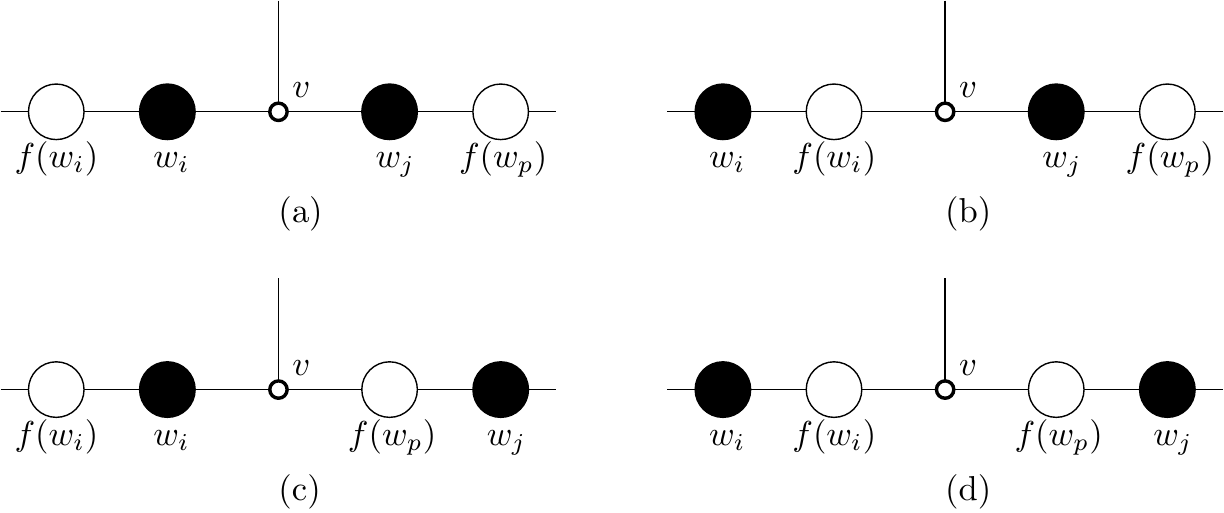}
					\caption{Possible relative positions of $w_i$, $w_j$, $f(w_i)$, and $f(w_p)$ when $w_i \in I_L$, $f(w_i) \in J_L$, $w_j \in I_{L^\prime}$, and $f(w_p) \in J_{L^\prime}$, $L^\prime \neq L$.}
					\label{fig:wi-fi-in-L-wj-fp-in-Lp}
				\end{figure}
				
				\item \textbf{Case 2.5:} $f(w_i)$ \textbf{is in} $J_L$, $f(w_p)$ \textbf{is in} $J_{L^{\prime\prime}}$, $L^{\prime\prime} \notin \{L, L^\prime\}$. 
				(See Figure~\ref{fig:wi-fi-in-L-wj-in-Lp-fp-in-Lpp}.) 
				\begin{figure}[!ht]
					\centering
					\includegraphics[scale=0.7]{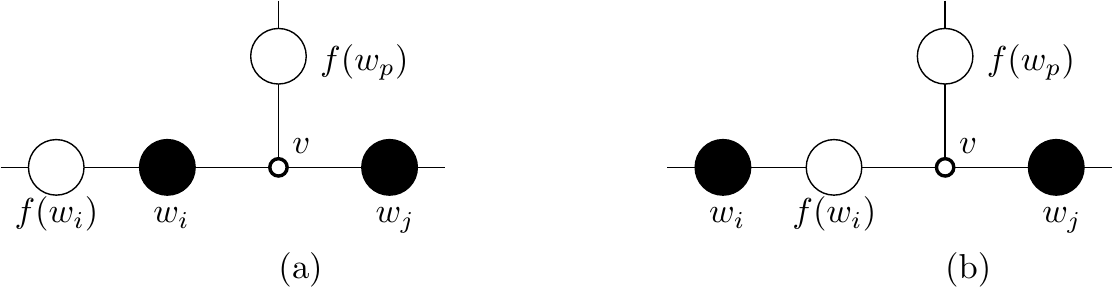}
					\caption{Possible relative positions of $w_i$, $w_j$, $f(w_i)$, and $f(w_p)$ when $w_i \in I_L$, $f(w_i) \in J_L$, $w_j \in I_{L^\prime}$, $L^\prime \neq L$, and $f(w_p) \in J_{L^{\prime\prime}}$, $L^{\prime\prime} \notin \{L, L^\prime\}$.}
					\label{fig:wi-fi-in-L-wj-in-Lp-fp-in-Lpp}
				\end{figure}
				
				\item \textbf{Case 2.6:} $f(w_i)$ \textbf{is in} $J_{L^\prime}$, $f(w_p)$ \textbf{is either in} $J_L$ \textbf{or in} $J_{L^{\prime\prime}}$, $L^{\prime\prime} \notin \{L, L^\prime\}$. 
				Since $w_i \in I_L$ and $f(w_i) \in J_{L^\prime} \neq J_L$, Algorithm~\ref{algo:find-target-assignment} assigns any $w \in I_{L^\prime}$ to some vertex in $J_{L^\prime}$, which means $f(w_j) \in J_{L^\prime}$. 
				However, this implies that $w_j$ must be assigned before $w_i$, a contradiction. Thus, this case cannot happen.
				
				\item \textbf{Case 2.7:} $f(w_i)$ \textbf{is in} $J_{L^{\prime\prime}}$, $L^{\prime\prime} \notin \{L, L^\prime\}$, $f(w_p)$ is in $J_{L^{\prime\prime\prime}}$, $L^{\prime\prime\prime} \notin \{L, L^\prime, L^{\prime\prime}\}$. 
				(See Figure~\ref{fig:wi-wj-fi-fp-in-diff-leg}.)
				\begin{figure}[!ht]
					\centering
					\includegraphics[scale=0.7]{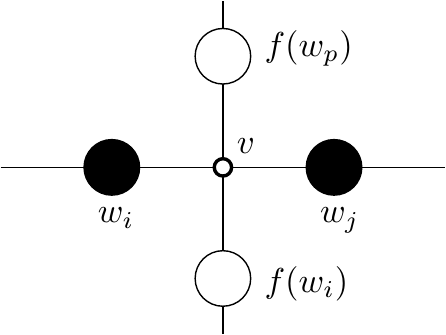}
					\caption{Possible relative positions of $w_i$, $w_j$, $f(w_i)$, and $f(w_p)$ when $w_i \in I_L$, $w_j \in I_{L^\prime}$, $f(w_i) \in J_{L^{\prime\prime}}$, and $f(w_p) \in J_{L^{\prime\prime\prime}}$.}
					\label{fig:wi-wj-fi-fp-in-diff-leg}
				\end{figure} 
			\end{itemize}
		\end{itemize}
		In all cases above, it is not hard to see that either our desired inequality holds or the case cannot happen.
		Thus, our proof is complete.
	\end{proof}
	
	We are now ready to prove Lemma~\ref{lem:algo-1-construct-shortest-assignment}.
	
	\begin{proof}[Proof of Lemma~\ref{lem:algo-1-construct-shortest-assignment}]
		Since Algorithm~\ref{algo:find-target-assignment} assigns each vertex in $I$ exactly once, (i) is trivial.
		It remains to show (ii).
		Without loss of generality, assume that $I = \{w_1, w_2, \dotsc, w_{\msize{I}}\}$ is such that $w_1 < w_2 < \dots < w_{\msize{I}}$.
		
		For an arbitrary target assignment $g$ and a target assignment $f$ produced by Algorithm~\ref{algo:find-target-assignment}, define
		$\mathsf{k}_{gf} = \msize{\{w_i \in I: g(w_i) \neq f(w_i)\}}$.
		Note that for $g \neq f$, we have $2 \leq \mathsf{k}_{gf} \leq \msize{I}$.
		We prove (ii) by induction on $\mathsf{k}_{gf}$. 
		
		\textbf{Base case:} $\mathsf{k}_{gf} = 2$. 
		It must happen that there exist $i, j$ with $i < j$, $g(w_i) = f(w_j)$, $g(w_j) = f(w_i)$, and $g(w_\ell) = f(w_\ell)$ for $w_\ell \in I \setminus \{w_i, w_j\}$.
		It follows from Lemma~\ref{lem:switch-assignment-pair-reduces-sum-of-dist} that 
		$\dist_G(w_i, f(w_j)) + \dist_G(w_j, f(w_i)) \geq \dist_G(w_i, f(w_i)) + \dist_G(w_j, f(w_j))$.
		Hence,
		$\dist_G(w_i, g(w_i)) + \dist_G(w_j, g(w_j)) \geq \dist_G(w_i, f(w_i)) + \dist_G(w_j, f(w_j))$.
		
		\textbf{Inductive step:} Given a target assignment $f$ produced from Algorithm~\ref{algo:find-target-assignment} and any target assignment $g$, suppose that for $2 \leq \mathsf{k}_{gf} \leq k-1$, 
		$\sum_{i=1}^{\msize{I}}\dist_G(w_i, g(w_i)) \geq \sum_{i=1}^{\msize{I}}\dist_G(w_i, f(w_i))$.
		We show that for every target assignment $f$ produced from Algorithm~\ref{algo:find-target-assignment} and every target assignment $g$ such that $\mathsf{k}_{gf} = k \leq \msize{I}$, the above inequality holds.
		
		Suppose to the contrary that there exist a target assignment $f$ produced from Algorithm~\ref{algo:find-target-assignment} and a target assignment $g$ such that $\mathsf{k}_{gf} = k \leq \msize{I}$ and $\sum_{i=1}^{\msize{I}}\dist_G(w_i, g(w_i)) < \sum_{i=1}^{\msize{I}}\dist_G(w_i, f(w_i))$.
		Let $i$ be the smallest index such that $g(w_i) \neq f(w_i)$.
		Let $p > i$ be such that $g(w_i) = f(w_p)$.
		Let $j > i$ be such that $g(w_j) = f(w_i)$.
		We define the assignment $g^\prime$ as follows: $g^\prime(w_i) = f(w_i)$, $g^\prime(w_j) = f(w_p)$, and for $w_\ell \in I \setminus \{w_i, w_j\}$, $g^\prime(w_\ell) = g(w_\ell)$.
		Thus, $\mathsf{k}_{g^\prime f} \leq k-1$, and by inductive hypothesis,
		$\sum_{i=1}^{\msize{I}}\dist_G(w_i, g^\prime(w_i)) \geq \sum_{i=1}^{\msize{I}}\dist_G(w_i, f(w_i))$.
		Hence, 
		$\sum_{i=1}^{\msize{I}}\dist_G(w_i, g^\prime(w_i)) \geq \sum_{i=1}^{\msize{I}}\dist_G(w_i, f(w_i)) > \sum_{i=1}^{\msize{I}}\dist_G(w_i, g(w_i))$.
		By definition of $g^\prime$, it follows that
		$\dist_G(w_i, g(w_i)) + \dist_G(w_j, g(w_j)) < \dist_G(w_i, g^\prime(w_i)) + \dist_G(w_j, g^\prime(w_j))$.
		In other words, 
		$\dist_G(w_i, f(w_p)) + \dist_G(w_j, f(w_i)) < \dist_G(w_i, f(w_i)) + \dist_G(w_j, f(w_p))$.
		However, this contradicts Lemma~\ref{lem:switch-assignment-pair-reduces-sum-of-dist}.
		Our proof is now complete.
	\end{proof}
	
	We note that Algorithm~\ref{algo:find-target-assignment} works even when the legs are labeled arbitrarily.
	However, our labeling of the legs of $G$ will be useful when we use the produced target assignment for constructing a $\sfTS$-sequence of length $\Mstar(G, I, J)$ between $I$ and $J$ in $G$.
	
	\subparagraph*{Token ordering.}
	Intuitively, we want to have a total ordering $\prec$ of vertices in $I$ such that if $x \prec y$, the token placed at $x$ should be moved before the token placed at $y$. 
	Ideally, once the token is moved to its final destination, it will never be moved again.
	From Algorithm~\ref{algo:find-target-assignment}, the following natural total ordering of vertices in $I$ can be derived: for $x, y \in I$, set $x < y$ if $x$ is assigned before $y$.
	Unfortunately, such an ordering does not always satisfy our requirement.
	However, we can use it as a basis for constructing our desired total ordering of vertices in $I$.
	
	Before showing how to construct $\prec$, we define some useful notation.
	Let $f: I \to J$ be a target assignment produced from Algorithm~\ref{algo:find-target-assignment}.
	For a leg $L$ of $G$ and a vertex $x \in I_L \cup J_L$, we say that the leg $L$ {\em contains} $x$, and $x$ is {\em inside} $L$.
	For each leg $L$ of $G$, we define $I_L^1 = \{w \in I_L: f(w) \notin J_L\}$ and $I_L^2 = \{w \in I_L: f(w) \in J_L\}$.
	Roughly speaking, a token in $I_L^1$ (resp. $I_L^2$) must finally be moved to a target outside (resp. inside) the leg $L$.
	Given a total ordering $\triangleleft$ on vertices of $I$ and a vertex $x \in I$, we define $K(x, \triangleleft) = \Neiclosed{G}{P_{xf(x)}} \cap \{y \in I: x \triangleleft y\}$.
	Intuitively, if $y \in K(x, \triangleleft)$, then in order to move the token on $x$ to its final target $f(x)$, one should move the token on $y$ beforehand.
	In some sense, the token on $y$ is an ``obstacle'' that forbids moving the token on $x$ to its final target $f(x)$.
	If $x \in I_L$ for some leg $L$ of $G$, we define $K^1(x, \triangleleft) = K(x, \triangleleft) \cap I_L^1$ and $K^2(x, \triangleleft) = K(x, \triangleleft) \cap I_L^2$.
	As before, a token in $K^1(x, \triangleleft)$ (resp. $K^2(x, \triangleleft)$) must finally be moved to a target outside (resp. inside) the leg $L$ containing $x$.
	By definition, it is not hard to see that $I_L^1$ and $I_L^2$ (resp. $K^1(x, \triangleleft)$ and $K^2(x, \triangleleft)$) form a partition of $I_L$ (resp. $K(x, \triangleleft)$).
	
	Ideally, in our desired total ordering $\prec$, for any $w \in I$, we must have $K(w, \prec) = \emptyset$.
	This enables us to move tokens in a way that any token placed at $w \in I$ is moved directly to its final target $f(w)$ through the (unique) shortest path $P_{wf(w)}$ between them; and once a token is moved to its final target, it will never be moved again.
	We note that this does not always hold for the natural total ordering $<$ defined from Algorithm~\ref{algo:find-target-assignment} above.
	Therefore, a natural approach is to construct $\prec$ from $<$ by looking at all $w \in I$ with $K(w, <) \neq \emptyset$ and reversing the ordering of any pair of vertices that makes our desired moving strategy impossible.
	A formal description of this procedure is in Algorithm~\ref{algo:construct-token-ordering} below.
	
	To provide a better explanation of Algorithm~\ref{algo:construct-token-ordering}, we briefly introduce the cases that require changing the ordering $<$.
	Assume that $w \in I$ is such that $K(w, <) \neq \emptyset$.
	\begin{itemize}
		\item {\bf Ordering between $w$ and vertices in $K(w, <)$.} 
		For each $x \in K(w, <)$, originally $w < x$, but in the new ordering, $x \prec w$. 
		That is, to move the token on $w$, one should move any ``obstacle'' (which belongs to $K(w, <)$) beforehand;
		\item {\bf Ordering between vertices in $K^2(w, <)$.}
		If $K^2(w, <) \neq \emptyset$, the token on $w$ and any token in $K^2(w, <)$ must be moved to targets inside the leg $L$ containing $w$. 
		(If $f(w) \notin J_L$ then any ``obstacle'' between $w$ and $f(w)$ must be moved to targets outside $w$, which means $K^2(w, <)$ is empty.)
		Consequently, for $x, y$ in $K^2(w, <)$, if $x < y$, the token on $x$ should move after the token on $y$, that is, we should define $x \succ y$.
		\item {\bf Ordering of vertices between $K^1(w, <)$ and $K^2(w, <)$.} If both $K^1(w, <)$ and $K^2(w, <)$ are non-empty, then it is better (but not strictly required) if we move the tokens in $K^1(w, <)$ before moving any token in $K^2(w, <)$.
		Originally, vertices in $K^1(w, <)$ (whose targets is outside $L$) is assigned after those in $K^2(w, <)$ (whose targets is inside $L$) in Algorithm~\ref{algo:find-target-assignment}.
		Intuitively, this is because tokens in $K^1(w, <)$ is ``closer'' to the body vertex $v$ than those in $K^2(w, <)$, and moving tokens in $K^1(w, <)$ creates ``empty space'' in $L$ for moving tokens in $K^2(w, <)$ later.
		
		Note that when changing the ordering of vertices between $K^1(w, <)$ and $K^2(w, <)$, we also affect the ordering between vertices in $I^1_L \supseteq K^1(w, <)$. 
		However, the ordering of vertices in $I^1_L$ should remain unchanged, since Algorithm~\ref{algo:find-target-assignment} always assign vertices in $I^1_L$ whose distance is closest to the body vertex $v$ first.
		Thus, for each $x \in I^1_L \setminus K^1(w, <)$ and $y \in K^1(w, <) \cup K^2(w, <) \cup \{w\}$, we need to set $x \prec y$.
	\end{itemize}
	
	\begin{algorithm}
		\caption{Construct a total ordering $\prec$ of vertices in $I$.}
		\label{algo:construct-token-ordering}
		\textbf{Input:} The natural ordering $<$ on vertices of $I$ derived from Algorithm~\ref{algo:find-target-assignment}.\\
		\textbf{Output:} A total ordering $\prec$ of vertices in $I$.
		\begin{algorithmic}[1]
			\While{there exists $w$ such that $K(w, <) \neq \emptyset$}
			\State Let $w$ be the smallest element of $I$ with respect to $<$ such that $K(w, <) \neq \emptyset$. \label{algo-line:construct-token-ordering-s}
			\State Let $L$ be the leg of $G$ such that $w \in I_L$. 
			\For{$x \in K(w, <)$}
			\State Set $x \prec w$. 
			\EndFor
			\If{$\msize{K^2(w, <)} \geq 2$}
			\State For $x, y \in K^2(w, <)$, if $x < y$, then set $x \succ y$.
			\EndIf
			\If{$\min\{\msize{K^1(w, <)}, \msize{K^2(w, <)}\} \geq 1$}
			\State For $x \in K^1(w, <)$ and $y \in K^2(w, <)$, set $x \prec y$.		
			\EndIf
			\If{$\min\{\msize{K^1(w, <)}, \msize{I_L^1 \setminus K^1(w, <)}\} \geq 1$}
			\State For $x \in I_L^1 \setminus K^1(w, <)$ and $y \in K^1(w, <) \cup K^2(w, <) \cup \{w\}$, set $x \prec y$.
			\EndIf
			\State For $x, y \in I$ whose ordering has not been defined, if $x < y$ then set $x \prec y$.
			\State Re-define $<$ to use in the next iteration by setting $x < y$ if $x \prec y$ for every $x, y \in I$. \label{algo-line:construct-token-ordering-e}
			\EndWhile
			\State \Return The total ordering $\prec$ of vertices in $I$.
		\end{algorithmic}
	\end{algorithm}
	
	The next lemma (Lemma~\ref{lem:token-ordering}) says that Algorithm~\ref{algo:construct-token-ordering} correctly produces a total ordering $\prec$ on vertices of $I$ such that $K(w, \prec) = \emptyset$ for every $w \in I$.
	Intuitively, Lemma~\ref{lem:token-ordering}(i) and (ii) say that if $w_i \in I_L$ is the ``chosen'' vertex in line~\ref{algo-line:construct-token-ordering-s} of Algorithm~\ref{algo:construct-token-ordering} for some leg $L$ of $G$, then only a subset $K(w_i, <) \cup I_L^1 \cup \{w_i\}$ of $I_L$ contains ``candidates'' for ``re-ordering''.
	That is, the process of changing the ordering of tokens in each iteration of Algorithm~\ref{algo:construct-token-ordering} will not affect the ordering between tokens inside and outside $L$.
	Lemma~\ref{lem:token-ordering}(iii) guarantees that after ``re-ordering'', $w_i$ will never be chosen again\footnote{$K(w_i, \prec) = \emptyset$ always holds, since none of the members of $K(w_i, <)$ will ever be larger than $w_i$ in the new orderings $\prec$ produced in the next iterations.}, and the next iteration of the main \textbf{while} loop can be initiated. 
	As Algorithm~\ref{algo:construct-token-ordering} can ``choose'' at most $\msize{I}$ vertices, and each iteration involving the ``re-ordering'' of at most $O(\msize{I})$ vertices, it will finally stop and produce the desired ordering in $O(\msize{I}^2)$ time.
	\begin{lemma}
		\label{lem:token-ordering}
		Let $(G, I, J)$ be an instance of {\sc Shortest Sliding Token} for spiders, where the body $v$ of $G$ satisfies $\max\{\msize{I \cap \Nei{G}{v}}, \msize{J \cap \Nei{G}{v}}\} = 0$.
		Let $f: I \to J$ be a target assignment produced from Algorithm~\ref{algo:find-target-assignment}, and $<$ be the corresponding natural total ordering on vertices of $I$.
		Assume that $I = \{w_1, w_2, \dotsc, w_{\msize{I}}\}$ is such that $w_1 < w_2 < \dots < w_{\msize{I}}$.
		Let $w_i$ be the smallest element in $I$ (with respect to the ordering $<$) such that $K(w_i, <) \neq \emptyset$, and $L$ be the leg of $G$ such that $w_i \in I_L$.
		Then,
		
		\begin{itemize}
			\item [(i)] $K(w_i, <) \subseteq I_L$.
			Additionally, $w_i \in I_L^2$.
			
			\item[(ii)] Let $\prec$ be the total ordering of vertices in $I$ defined as in lines~\ref{algo-line:construct-token-ordering-s}--\ref{algo-line:construct-token-ordering-e} of Algorithm~\ref{algo:construct-token-ordering}, where the corresponding vertex $w$ is replaced by $w_i$. 
			Then, 
			\begin{itemize}
				\item[(ii-1)] If $x \in K(w_i, <)$, then $x > w_i$ and $x \prec w_i$.
				\item[(ii-2)] If $x, y \in K^1(w_i, <)$, then $x < y$ if and only if $x \prec y$.
				\item[(ii-3)] If $x, y \in K^2(w_i, <)$, then $x < y$ if and only if $x \succ y$.
				\item[(ii-4)] If $x \in K^1(w_i, <)$ and $y \in K^2(w_i, <)$, then $x > y$ and $x \prec y$.
				\item[(ii-5)] If $x \in I_L^1 \setminus K^1(w_i, <)$ and $y \in K^1(w_i, <)$, then $w_i < x < y$ and $x \prec y \prec w_i$.
				\item[(ii-6)] If $x \in K(w_i, <) \cup I_L^1 \cup \{w_i\}$ and $y \in I \setminus (K(w_i, <) \cup I_L^1 \cup \{w_i\})$, then $x < y$ if and only if $x \prec y$.
				\item[(ii-7)] If $x, y \in I \setminus (K(w_i, <) \cup I_L^1 \cup \{w_i\})$, then $x < y$ if and only if $x \prec y$.
			\end{itemize}
			
			\item [(iii)] Let $\prec$ be the total ordering of vertices in $I$ described in (ii).
			Then, $K(w_i, \prec) = \emptyset$.
			Moreover, if $w_j$ is the smallest element in $I$ (with respect to the ordering $\prec$) such that $K(w_j, \prec) \neq \emptyset$, then $K(w_j, \prec) = K(w_j, <)$.
		\end{itemize}
	\end{lemma}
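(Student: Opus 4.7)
My plan is to prove the three claims in order: (i) is the structural backbone, (ii) follows by reading Algorithm~\ref{algo:construct-token-ordering} line by line with (i) in hand, and (iii) follows because the re-ordering is localized to the leg containing $w_i$. For (i) I would combine the minimality of $w_i$ in $<$ (so $K(w_k,<)=\emptyset$ for all $w_k<w_i$) with the hypothesis $\max\{\msize{I\cap \Nei{G}{v}},\msize{J\cap \Nei{G}{v}}\}=0$ that no token sits adjacent to the body $v$. First I would show $w_i\in I_L^2$. Suppose for contradiction $w_i\in I_L^1$; then Algorithm~\ref{algo:find-target-assignment} processes $w_i$ in its second \textbf{while} loop, selecting $w_i$ as the closest to $v$ among all remaining $I$-tokens and $f(w_i)$ as a vertex in some $J_{L'}$ with $L'\neq L$. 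Any candidate obstacle on the $L$-portion of $P_{w_if(w_i)}$ is strictly closer to $v$ than $w_i$, hence was assigned earlier, so $<$-smaller and excluded from $K(w_i,<)$. For the other leg $L'$, the fact that Algorithm~\ref{algo:find-target-assignment} is in its second \textbf{while} loop forces $I_{L'}$ to be empty at that moment, so every original $I$-token in $L'$ was matched in the first \textbf{for} loop and is also $<$-smaller than $w_i$. The $\Nei{G}{v}$-free hypothesis rules out any $\Nei{G}{v}$-token as an obstacle, and the case $v\in I$ is handled by the $I_{L_1}$ convention. These considerations force $K(w_i,<)=\emptyset$, contradicting the choice of $w_i$; hence $w_i\in I_L^2$. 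Then $P_{w_if(w_i)}\subseteq V(L)$, and the same $\Nei{G}{v}$-free hypothesis forces $K(w_i,<)\subseteq I_L$.

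For (ii), I would unwind each re-ordering action of Algorithm~\ref{algo:construct-token-ordering} applied with $w=w_i$. Items (ii-1), (ii-3), (ii-4), (ii-5) correspond to the explicit reassignments in the algorithm; each one only requires verifying the $<$-comparison it presupposes. For example, members of $K^2(w_i,<)$ must lie farther from $v$ than $w_i$ along the leg, else they would have been processed before $w_i$ in the first \textbf{for} loop of Algorithm~\ref{algo:find-target-assignment}, contradicting $y>w_i$. Similarly, members of $I_L^1\setminus K^1(w_i,<)$ are closer to $v$ than members of $K^1(w_i,<)$, by the second \textbf{while} loop of Algorithm~\ref{algo:find-target-assignment} processing closer-to-$v$ tokens first. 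Items (ii-2), (ii-6), (ii-7) then follow from the catch-all step of Algorithm~\ref{algo:construct-token-ordering} that preserves $<$ wherever no explicit change is made. Throughout, part (i) ensures that every affected vertex lies inside $L$.

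For (iii), $K(w_i,\prec)=\emptyset$ is immediate from (ii-1), which sends every $x\in K(w_i,<)$ to $x\prec w_i$, together with (ii-6) and (ii-7), which ensure that no vertex outside the re-ordered block $K(w_i,<)\cup I_L^1\cup\{w_i\}$ jumps past $w_i$ to create a new obstacle. For the next offender $w_j$, I would apply (i) to the current iteration: all obstacles of $w_j$ lie in the leg $L_j$ containing $w_j$. If $L_j\neq L$, the previous re-ordering never touched $L_j$ (by (ii-6)), so $K(w_j,\prec)=K(w_j,<)$ trivially. If $L_j=L$, a direct check against (ii-1) through (ii-7) shows that the shuffle only rearranged indices within $K(w_i,<)\cup I_L^1\cup\{w_i\}$; no vertex enters or leaves $K(w_j,\prec)$ relative to $K(w_j,<)$, giving the equality.

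The hardest step will clearly be (i): ruling out cross-leg obstacles when $w_i\in I_L^1$ forces a simultaneous invocation of both phases of Algorithm~\ref{algo:find-target-assignment} and of the $\Nei{G}{v}$-free hypothesis, with additional bookkeeping around the $I_{L_1}$ convention when $v\in I$. Once (i) localizes everything to a single leg, (ii) and (iii) become systematic checks against the definitions of $K$, $K^1$, $K^2$, and the instructions of Algorithm~\ref{algo:construct-token-ordering}.
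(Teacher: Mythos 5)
Your overall decomposition mirrors the paper's: prove (i) by contradiction from the minimality of $w_i$ and the two phases of Algorithm~\ref{algo:find-target-assignment}, read (ii) off Algorithm~\ref{algo:construct-token-ordering}, and argue (iii) by locality of the re-ordering. Parts (i) and most of (ii) are essentially the paper's argument. Two problems remain, one minor and one substantive.

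The minor one: in (ii) you assert that members of $K^2(w_i,<)$ lie \emph{farther} from $v$ than $w_i$. The direction is reversed. The first \textbf{for} loop of Algorithm~\ref{algo:find-target-assignment} assigns the vertex of $I_{L}$ \emph{farthest} from $v$ first, so a vertex $y\in I_L^2$ with $y>w_i$ was assigned later and is therefore \emph{closer} to $v$ than $w_i$. This does not affect (ii-3), which is an explicit instruction of Algorithm~\ref{algo:construct-token-ordering}, but the correct direction is exactly what is needed later (it is why elements of $K^2(w_i,<)$ sit between $f(w_i)$ and $w_i$ on the leg), so the slip matters.

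The substantive gap is in (iii). Your claim that for $w_j$ in the leg $L$ ``a direct check shows that no vertex enters or leaves $K(w_j,\prec)$ relative to $K(w_j,<)$'' is false as a blanket statement, and the lemma cannot be proved this way. Write $K^2(w_i,<)=\{x_1,x_2,\dotsc\}$ with $x_1\prec x_2\prec\dotsb$, equivalently $x_1>x_2>\dotsb$ and $\dist_G(x_1,v)<\dist_G(x_2,v)<\dotsb$. Then $x_1$ lies on the path $P_{x_2f(x_2)}$ and $x_1>x_2$, so $x_1\in K(x_2,<)$; but Algorithm~\ref{algo:construct-token-ordering} reverses the order inside $K^2(w_i,<)$, so $x_1\prec x_2$ and $x_1\notin K(x_2,\prec)$. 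Hence $K(x_2,\prec)\neq K(x_2,<)$. The statement of (iii) is saved only because it concerns the $\prec$-\emph{smallest} $w_j$ with $K(w_j,\prec)\neq\emptyset$: the paper's proof must first rule candidates out, showing that $K(\cdot,\prec)=\emptyset$ for every vertex of $I_L^1\cup\{w_i\}$ and that among $K^2(w_i,<)$ only its $\prec$-minimum $x_1$ can have nonempty $K(\cdot,\prec)$, and only then verify $K(x_1,\prec)=K(x_1,<)$ using that the vertices whose order relative to $x_1$ flipped (namely $w_i$, the other $x_p$, and the $I_L^1$ vertices) either do not lie in $\Neiclosed{G}{P_{x_1f(x_1)}}$ or force $K(x_1,\prec)=\emptyset$. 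This identification of the possible next offender is the technical heart of (iii); your proposal does not contain it, and the ``direct check'' you invoke would fail at $x_2$.
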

	
	\begin{proof}
		First of all, note that $\max\{\msize{I \cap \Nei{G}{v}}, \msize{J \cap \Nei{G}{v}}\} = 0$ is equivalent to saying that both $I \cap \Nei{G}{v}$ and $J \cap \Nei{G}{v}$ are empty.
		\begin{itemize}
			\item[(i)]  We first show that $K(w_i, <) \subseteq I_L$.
			If $f(w_i) \in J_L$, then since $I \cap \Nei{G}{v} = \emptyset$, it follows that $K(w_i, <) \subseteq I \cap V(P_{w_if(w_i)}) \subseteq I_L$.
			Now, we consider the case $f(w_i) \in J_{L^\prime} \neq J_L$ for some leg $L^\prime$ of $G$.
			Let $x \in \Neiclosed{G}{P_{w_if(w_i)}} \cap I_{L^\prime}$.
			We claim that $x < w_i$, which then implies $x \notin K(w_i, <)$ and therefore $K(w_i, <) \subseteq I_L$.
			Since $w_i \in I_L$ and $f(w_i) \notin J_L$, it follows that for any $x \in I_{L^\prime}$, $f(x) \in J_{L^\prime}$, and hence by Algorithm~\ref{algo:find-target-assignment}, $x < w_i$.
			
			Now, we show that $f(w_i) \in J_L$, which by definition means $w_i \in I_L^2$.
			Suppose to the contrary that $f(w_i) \notin J_L$.
			For a vertex $w \in I_L \cap \Neiclosed{G}{P_{w_if(w_i)}}$, we must have $\dist_G(w, v) < \dist_G(w_i, v)$, which means that $w < w_i$.
			Thus, $K(w_i, <) = \emptyset$, which contradicts the definition of $w_i$.
			
			\item[(ii)] We prove (ii-4) and (ii-5). Other statements are followed immediately from Algorithm~\ref{algo:construct-token-ordering}.
			\begin{itemize}
				\item[(ii-4)] Let $x \in K^1(w_i, <)$ and $y \in K^2(w_i, <)$.
				Clearly, by Algorithm~\ref{algo:construct-token-ordering}, $x \prec y$.
				It remains to show that $x > y$ holds.
				To see this, note that $K^1(w_i, <) \subseteq I_L^1$ and $K^2(w_i, <) \subseteq I_L^2$, and Algorithm~\ref{algo:find-target-assignment} always assigns vertices in $I_L^2$ (lines~\ref{algo-line:find-target-assignment-IL2-s}--\ref{algo-line:find-target-assignment-IL2-e}) before those in $I_L^1$ (lines~\ref{algo-line:find-target-assignment-IL1-s}--\ref{algo-line:find-target-assignment-IL1-e}).
				
				\item[(ii-5)] Let $x \in I_L^1 \setminus K^1(w_i, <)$ and $y \in K^1(w_i, <)$.
				From Algorithm~\ref{algo:construct-token-ordering}, it suffices to show $w_i < x < y$.
				For every $x \in I_L^1 \setminus K^1(w_i, <)$, since $w_i \in I_L^2$, using a similar argument as in (ii-4), we have $w_i < x$.
				It remains to show that for every $x \in I_L^1 \setminus K^1(w_i, <)$ and $y \in K^1(w_i, <)$, $x < y$.
				To see this, it is sufficient to show that if $y \in K^1(w_i, <)$, for any $z \in I_L^1$ with $z > y$, we have $z \in K^1(w_i, <)$.
				(Recall that since $<$ is a total ordering on $I$, either $x < y$ or $y < x$ and here we show that the later case cannot happen.)
				Indeed, since $z \in I_L^1$ and $z > y$, Algorithm~\ref{algo:find-target-assignment} implies $\dist_G(y, v) < \dist_G(z, v) < \dist_G(w_i, v)$ (note that $w_i \in I_L^2$).
				Since $y \in K^1(w_i, <) \subseteq I_L$, we have $\dist_G(f(w_i), v) \leq \dist_G(y, v) < \dist_G(w_i, v)$.
				Hence, $\dist_G(f(w_i), v) \leq \dist_G(z, v) < \dist_G(w_i, v)$, which means $z \in K^1(w_i, <)$.
			\end{itemize}
			
			\item[(iii)] It follows immediately from Algorithm~\ref{algo:construct-token-ordering} that $K(w_i, \prec) = \emptyset$.
			It remains to show that if $w_j$ is the smallest element in $I$ (with respect to the ordering $\prec$) such that $K(w_j, \prec) \neq \emptyset$, then $K(w_j, \prec) = K(w_j, <)$.
			
			Note that if $w_{i-1}$ exists, then $w_j \succ w_{i-1}$; otherwise, it contradicts the assumption that $w_i$ is the smallest member of $I$ (with respect to the ordering $<$) such that $K_{w_i} \neq \emptyset$.
			On the other hand, by (ii), $w_j \succ w_i$ if and only if $w_j > w_i$.
			Thus, for any $w \in I$, $w \succ w_j$ if and only if $w > w_j$, which implies $K(w_j, \prec) = K(w_j, <)$.
			
			It remains to consider the case when $w_j \in K(w_i, <) \cup I_L^1 \cup \{w_i\}$.
			\begin{itemize}
				\item We first show that if $w_j \in I_L^1 \cup \{w_i\}$, then $K(w_j, \prec) = \emptyset$.
				If $w_j = w_i$, we are done.
				Let consider the case $w_j \in I_L^1$.
				We claim that for every $x \in I \cap \Neiclosed{G}{P_{w_jf(w_j)}} = (I_L \cup I_{L^\prime} \cup (I \cap \{v\})) \cap \Neiclosed{G}{P_{w_jf(w_j)}}$, we have $x \prec w_j$, which means $x \notin K(w_j, \prec)$.
				Here $L^\prime \neq L$ is the leg of $G$ such that $f(w_j) \in J_{L^\prime}$.
				\begin{itemize}
					\item By Algorithm~\ref{algo:find-target-assignment}, if $v \in I$, for every $w \in \bigcup_{L}I_L^1$, we have $v < w$. 
					By (ii), $v \prec w$.
					Since $w_j \in I_L^1$ for some leg $L$, the above arguments hold for $w_j$. 
					\item If $x \in I_{L^\prime} \cap \Neiclosed{G}{P_{w_jf(w_j)}}$ and $x \neq v$, then by (ii), Algorithm~\ref{algo:find-target-assignment}, and the assumption $I \cap \Nei{G}{v} = \emptyset$, it follows that $x < w_j$ and $x \prec w_j$.
					\item If $x \in I_L \cap \Neiclosed{G}{P_{w_jf(w_j)}}$ and $x \neq v$, then $\dist_G(x, v) < \dist_G(w_j, v)$ (because $f(w_j) \notin J_L$), and therefore $x \in I_L^1$ and $x < w_j$.
					By (ii), we have $x \prec w_j$.
				\end{itemize}
				\item Since $K^1(w_i, <) \subseteq I_L^1$, it suffices to consider $w_j \in K^2(w_i, <)$.
				In this case, we claim that $K(w_j, \prec) = K(w_j, <)$.
				Let $K^2(w_i, <) = \{x_1, x_2, \dotsc, x_\msize{K^2(w_i, <)}\}$ be such that $x_1 \prec x_2 \prec \dots \prec x_\msize{K^2(w_i, <)}$.
				Since $w_i \in I_L$ and $f(w_i) \in J_L$, it follows that $x_1 > x_2 > \dots > x_\msize{K^2(w_i, <)}$ and $\dist_G(f(x_1), v) < \dots < \dist_G(f(x_{\msize{K^2(w_i, <)}}), v) < \dist_G(f(w_i), v) < \dist_G(x_1, v) < \dots < \dist_G(x_{\msize{K^2(w_i, <)}}, v) < \dist_G(w_i, v)$.
				Since for every $p \in \{2, 3, \dotsc, \msize{K^2(w_i, <)}\}$, $\dist_G(f(x_1), v) < \dist_G(f(x_p), v) < \dist_G(x_1, v) < \dist_G(x_p, v)$, it follows that
				if $K(x_1, \prec) = \emptyset$, for every $p \in \{2, 3, \dotsc, \msize{K^2(w_i, <)}\}$, we have $K(x_p, \prec) = \emptyset$.
				Since $w_j \in K^2(w_i, <)$ and $K(w_j, \prec) \neq \emptyset$, one of the $K(x_p, \prec)$ ($p \in \{1, 2, 3, \dotsc, \msize{K^2(w_i, <)}\}$) must be non-empty.
				Hence, $K(x_1, \prec) \neq \emptyset$, and therefore $w_j = x_1$.
				On the other hand, note that for every $x \in K^2(w_i, <) \setminus \{x_1\}$, we have $x \succ x_1$ and $x \notin \Neiclosed{G}{P_{x_1f(x_1)}}$.
				Thus, for every $w \in K(w_j, \prec) = K(x_1, \prec)$ (and then $w \succ w_j$), it must happen that $w > w_j$.
				Moreover, $w_j = x_1$ is the maximum element in $K^2(w_i, <)$ with respect to the ordering $<$.
				Thus, for every $w > w_j$, $w \notin K^2(w_i, <)$, and therefore $w \succ w_j$.
				Hence, $K(w_j, \prec) = K(w_j, <)$.
			\end{itemize}
		\end{itemize}
	\end{proof}
	
	Now, we are ready to prove the following lemma.
	\begin{lemma}
		\label{lem:no-tokens-in-neigh-v}
		Let $(G, I, J)$ be an instance of {\sc Shortest Sliding Token} for spiders where the body $v$ of $G$ satisfies $\max\{\msize{I \cap \Nei{G}{v}}, \msize{J \cap \Nei{G}{v}}\} = 0$.
		Assume that there exists a leg $L$ of $G$ with $\msize{I_L} \neq \msize{J_L}$.
		Then, in $O(n^2)$ time, one can construct a $\sfTS$-sequence $\calS$ between $I$ and $J$ such that $\len(\calS) = \Mstar(G, I, J)$.
	\end{lemma}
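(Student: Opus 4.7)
The plan is to construct $\calS$ in three stages. First, I would apply Algorithm~\ref{algo:find-target-assignment} to produce a target assignment $f\colon I\to J$ with $\sum_{w\in I}\dist_G(w,f(w))=\Mstar(G,I,J)$, which is guaranteed by Lemma~\ref{lem:algo-1-construct-shortest-assignment}. Second, I would apply Algorithm~\ref{algo:construct-token-ordering} to produce a total ordering $\prec$ on $I$ with $K(w,\prec)=\emptyset$ for every $w\in I$, which is guaranteed by Lemma~\ref{lem:token-ordering}. Writing $I=\{w_1,w_2,\dots,w_{\msize{I}}\}$ with $w_1\prec w_2\prec\dots\prec w_{\msize{I}}$, I would let $\calS_i$ denote the sequence of $\dist_G(w_i,f(w_i))$ unit slides that carries the token from $w_i$ to $f(w_i)$ along the unique path $P_{w_if(w_i)}$, and define $\calS$ as the concatenation of $\calS_1,\calS_2,\dots,\calS_{\msize{I}}$.

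By construction, $\len(\calS)=\sum_i\dist_G(w_i,f(w_i))=\Mstar(G,I,J)$, so everything reduces to showing that $\calS$ is a valid $\sfTS$-sequence, i.e., that each prefix leaves the tokens in an independent set of $G$. I would argue this by induction on $i$. Assuming that just before the slides of $\calS_i$ begin the current independent set equals $\{f(w_1),\dots,f(w_{i-1})\}\cup\{w_i,w_{i+1},\dots,w_{\msize{I}}\}$, I would verify, for every intermediate position $u\in V(P_{w_if(w_i)})$ reached by the moving token, that no other token sits in $\Neiclosed{G}{u}$. Two kinds of obstructions could arise: (a) an unmoved token at some $w_k$ with $k>i$ lying in $\Neiclosed{G}{P_{w_if(w_i)}}$, and (b) an already delivered token at some $f(w_k)$ with $k<i$ lying in $\Neiclosed{G}{u}$. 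Obstruction (a) is ruled out immediately because the relation $K(w_i,\prec)=\emptyset$ forbids it by definition.

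The bulk of the work is ruling out obstruction (b), and the main obstacle is the resulting case analysis on the position of $f(w_i)$ relative to $w_i$. If $f(w_i)$ lies on the same leg $L$ as $w_i$ (so $w_i\in I_L^2$ by Lemma~\ref{lem:token-ordering}(i)), then $P_{w_if(w_i)}$ is a segment of the path $L$; the farthest-first matching done in lines~\ref{algo-line:find-target-assignment-IL2-s}--\ref{algo-line:find-target-assignment-IL2-e} of Algorithm~\ref{algo:find-target-assignment} together with the $K^2$-reversal of Algorithm~\ref{algo:construct-token-ordering} force every previously delivered $f(w_k)$ lying in $L$ to sit strictly farther from $v$ than $f(w_i)$, hence outside $\Neiclosed{G}{P_{w_if(w_i)}}$. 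If instead $f(w_i)$ lies in a leg $L^\prime\neq L$, the path crosses the body $v$; here I would combine the labelling of legs by $\msize{I_L}-\msize{J_L}$, the closest-first/farthest-first pairing used in lines~\ref{algo-line:find-target-assignment-IL1-s}--\ref{algo-line:find-target-assignment-IL1-e} of Algorithm~\ref{algo:find-target-assignment}, the hypothesis $\msize{I_L}\neq\msize{J_L}$ for some leg $L$ (which is what ensures $v$ is free whenever the moving token must pass through it), and the detailed position properties listed in Lemma~\ref{lem:token-ordering}(ii) to show that every prior $f(w_k)$ inside $L^\prime$ is strictly farther from $v$ than $f(w_i)$ and that no prior $f(w_k)$ lying in a third leg is adjacent to $v$. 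This cross-leg case is the hardest part, since tokens arriving at $L^\prime$ from several different source legs must be tracked simultaneously; Lemma~\ref{lem:token-ordering}(ii) is what makes the bookkeeping tractable. Finally, since Algorithm~\ref{algo:find-target-assignment} runs in $O(n)$, Algorithm~\ref{algo:construct-token-ordering} runs in $O(n^2)$, and outputting $\calS$ costs $O(\len(\calS))=O(\Mstar(G,I,J))=O(n^2)$, the whole procedure fits inside the claimed $O(n^2)$ budget.
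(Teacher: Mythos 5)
Your proposal follows essentially the same route as the paper's own proof: the same target assignment from Algorithm~\ref{algo:find-target-assignment}, the same ordering from Algorithm~\ref{algo:construct-token-ordering}, the same greedy concatenation of direct-path slides, and the same induction that rules out unmoved tokens via $K(w_i,\prec)=\emptyset$ and already-delivered tokens via the same two-case split on whether $f(w_i)$ lies in the leg of $w_i$ or in a different leg. The details you defer to case analysis are exactly the ones the paper carries out, so this is the intended argument.
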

	
	\begin{proof}
		Let $f$ be a target assignment produced from Algorithm~\ref{algo:find-target-assignment} and $\prec$ be a corresponding total ordering defined in Algorithm~\ref{algo:construct-token-ordering}.
		For convenience, for $x, y \in I$, if $x < y$ and $x \prec y$, we say that Algorithm~\ref{algo:construct-token-ordering} \emph{preserves} the ordering between $x$ and $y$.
		
		Assume that $I = \{w_1, \dotsc, w_{\msize{I}}\}$ is such that $w_1 \prec \dots \prec w_{\msize{I}}$.
		Let $\calS$ be a sequence of token-slides constructed as follows: for each $w_i \in I$ ($i \in \{1, 2, \dotsc, \msize{I}\}$), slide the token $t_i$ on $w_i$ to $f(w_i)$ along the path $P_{w_if(w_i)}$ (using exactly $\dist_G(w_i, f(w_i))$ token-slides).
		Clearly, $\len(\calS) = \Mstar(G, I, J)$ (Lemma~\ref{lem:algo-1-construct-shortest-assignment}).
		
		Since Algorithm~\ref{algo:find-target-assignment} takes $O(n)$ time, Algorithm~\ref{algo:construct-token-ordering} takes $O(n^2)$ time, and $\calS$ uses $O(n)$ token-slides for each token in $I$, it follows that the construction of $\calS$ takes $O(n^2)$ time.
		
		To conclude this proof, we show that $\calS$ is actually a $\sfTS$-sequence in $G$ by induction on $i \in \{1, 2, \dotsc, \msize{I}\}$.
		
		\textbf{Base case:} $i = 1$.
		Since for every $j > 1$, $w_j \notin \Neiclosed{G}{P_{w_1f(w_1)}}$, $t_1$ clearly can be slid from $w_1$ to $f(w_1)$ along $P_{w_1f(w_1)}$.
		
		\textbf{Inductive step:} Assume that for $j \leq i - 1$ ($i \in \{2, 3, \dotsc, \msize{I}\}$), $t_j$ can be slid from $w_j$ to $f(w_j)$ along $P_{w_jf(w_j)}$.
		We show that $t_i$ can be slid from $w_i$ to $f(w_i)$ along $P_{w_if(w_i)}$.
		Suppose to the contrary that it is not.
		Note that by Algorithm~\ref{algo:construct-token-ordering}, for every $j > i$, $w_j \notin \Neiclosed{G}{P_{w_if(w_i)}}$.
		Thus, there must be some $j < i$ such that $f(w_j) \in \Neiclosed{G}{P_{w_if(w_i)}}$.
		In other words, after $t_j$ is moved from $w_j$ to $f(w_j)$, it becomes an ``obstacle'' that forbids sliding $t_i$ from $w_i$ to $f(w_i)$.
		We consider the following cases.
		
		\begin{itemize}
			\item \textbf{Case~1: $w_i \in I_L$ and $f(w_i) \in J_L$ for some leg $L$ of $G$.}
			Since $f(w_j) \in \Neiclosed{G}{P_{w_if(w_i)}}$, we have $f(w_j) \in J_L$.
			Since $t_j$ is moved before $t_i$ from $w_j$ to $f(w_j) \in J_L$, it follows that $w_j \in I_L$.
			Indeed, if $w_j \notin I_L$, the vertex $w_i$ must be assigned before $w_j$ in Algorithm~\ref{algo:find-target-assignment}, and Algorithm~\ref{algo:construct-token-ordering} preserves that ordering (see Lemma~\ref{lem:token-ordering}(ii)), which means $w_i \prec w_j$, a contradiction.
			Now, if $\dist_G(w_j, v) > \dist_G(w_i, v)$, we must have $\dist_G(f(w_j), v) > \dist_G(w_i, v)$; otherwise $t_i$ is an obstacle that forbids sliding $t_j$ to $f(w_j)$, which contradicts our inductive hypothesis.
			We note that the existence of $f(w_j)$ implies that $w_i \neq f(w_i)$.
			If $\dist_G(f(w_i), v) < \dist_G(w_i, v)$, then $f(w_j) \in \Neiclosed{G}{w_i}$, which also means $w_i \in \Neiclosed{G}{P_{w_jf(w_j)}}$.
			This contradicts the assumption $w_j \prec w_i$.
			Therefore, $\dist_G(f(w_i), v) > \dist_G(w_i, v)$. 	
			Since $f(w_j) \in \Neiclosed{G}{w_i}$ and $\dist_G(f(w_j), v) > \dist_G(w_i, v)$, we must have $\dist_G(w_i, v) < \dist_G(f(w_j), v) < \dist_G(f(w_i), v)$. 	
			On the other hand, since $w_i, w_j \in I_L$, $f(w_i), f(w_j) \in J_L$, and $\dist_G(w_j, v) > \dist_G(w_i, v)$, by Algorithm~\ref{algo:find-target-assignment}, we must have $\dist_G(f(w_j), v) > \dist_G(f(w_i), v)$, which is a contradiction. 	 	
			Using a similar argument, one can show that the case $\dist_G(w_j, v) < \dist_G(w_i, v)$ also leads to a contradiction. 
			% Next, we consider the case when $\dist_G(w_j, v) < \dist_G(w_i, v)$.
			% First of all, we have $\dist_G(f(w_j), v) < \dist_G(w_i, v)$; otherwise $t_i$ is an obstacle that forbids sliding $t_j$ to $f(w_j)$, which contradicts our inductive hypothesis.
			% If $\dist_G(f(w_i), v) < \dist_G(w_i, v)$ then since $f(w_j) \in \Neiclosed{G}{P_{w_if(w_i)}}$, we have $\dist_G(f(w_i), v) < \dist_G(f(w_j), v) < \dist_G(w_i, v)$.
			% On the other hand, since $w_i, w_j \in I_L$, $f(w_i), f(w_j) \in J_L$, and $\dist_G(w_j, v) < \dist_G(w_i, v)$, by Algorithm~\ref{algo:find-target-assignment}, we must have $\dist_G(f(w_j), v) < \dist_G(f(w_i), v)$, which is a contradiction.
			% Therefore, $\dist_G(f(w_i), v) > \dist_G(w_i, v)$.
			% Since $f(w_j) \in \Neiclosed{G}{P_{w_if(w_i)}}$ and $\dist_G(f(w_j), v) < \dist_G(w_i, v)$, we must have $f(w_j) \in \Neiclosed{G}{w_i}$, which also means $w_i \in \Neiclosed{G}{P_{w_jf(w_j)}}$.
			% This contradicts our assumption that $w_j \prec w_i$.
			
			\item \textbf{Case~2: $w_i \in I_L$ and $f(w_i) \in J_{L^\prime}$, where $L$ and $L^\prime$ are two distinct legs of $G$.}
			First of all, recall that the legs $L_i$ are labeled such that $\msize{I \cap V(L_i)} - \msize{J \cap V(L_i)} \leq \msize{I \cap V(L_j)} - \msize{J \cap V(L_j)}$ for $1 \leq i \leq j \leq \deg_G(v)$.
			Therefore, if $\msize{I_L} \neq \msize{J_L}$ for some leg $L$ of $G$, we have $\msize{I_{L_1}} \leq \msize{J_{L_1}}$.
			To see this, note that if $\msize{I_{L_1}} > \msize{J_{L_1}}$, there must be some $i \in \{2, 3, \dots, \deg_G(v)\}$ such that $\msize{I \cap V(L_i)} = \msize{I_{L_i}} < \msize{J_{L_i}} = \msize{J \cap V(L_i)}$ (because $\sum_{i=1}^{\deg_G(v)}I_{L_i} = \msize{I} = \msize{J} = \sum_{i=1}^{\deg_G(v)}J_{L_i}$), which means $\msize{I \cap V(L_i)} - \msize{J \cap V(L_i)} \leq -1$.
			On the other hand, one can verify that $\msize{I \cap V(L_1)} - \msize{J \cap V(L_1)} > -1$.
			This contradicts our assumption that $\msize{I \cap V(L_1)} - \msize{J \cap V(L_1)} \leq \msize{I \cap V(L_i)} - \msize{J \cap V(L_i)}$.
			Thus, $\msize{I_{L_1}} \leq \msize{J_{L_1}}$.
			Note that if $v \in J$ and $\msize{I_{L_1}} = \msize{J_{L_1}}$ then one can verify that $\msize{I \cap V(L_1)} - \msize{J \cap V(L_1)} \geq 0$.
			As before, there must be some $i \in \{2, 3, \dots, \deg_G(v)\}$ such that $\msize{I \cap V(L_i)} = \msize{I_{L_i}} < \msize{J_{L_i}} = \msize{J \cap V(L_i)}$, which means $\msize{I \cap V(L_i)} - \msize{J \cap V(L_i)} < 0 \leq \msize{I \cap V(L_1)} - \msize{J \cap V(V_1)}$, a contradiction.
			Therefore, if $v \in J$, we have $\msize{I_{L_1}} < \msize{J_{L_1}}$.
			It follows that if $v \in I$ (resp. $v \in J$), then $v \in I_{L_1}^2$ (resp. $f^{-1}(v) \in I_{L_1}^1$).
			Moreover, Algorithm~\ref{algo:find-target-assignment} and Lemma~\ref{lem:token-ordering}(ii) implies that 
			if $v \in J$, then $f^{-1}(v) = \max_{\prec}\{w: w \in \bigcup_LI_L^1\}$. 
			Since $w_j \prec w_i \in \bigcup_LI_L^1$, we have $w_j \neq f^{-1}(v)$, and hence $f(w_j) \neq v$.  
			Since $f(w_j) \in \Neiclosed{G}{P_{w_if(w_i)}}$, $I \cap \Nei{G}{v} = J \cap \Nei{G}{v} = \emptyset$, and $f(w_j) \neq v$, $f(w_j)$ must belong to either $J_L$ or $J_{L^\prime}$. 
			If $f(w_j)$ belongs to $J_{L^\prime}$, then since $f(w_j) \in \Neiclosed{G}{P_{w_if(w_i)}}$, it follows that $\dist_G(f(w_j), v) < \dist_G(f(w_i, v))$, which means that $w_j \notin J_{L^\prime}$. 
			It follows that $w_j$ is assigned after $w_i$, and since Algorithm~\ref{algo:construct-token-ordering} preserves this ordering, $w_j \succ w_i$, which is a contradiction.  
			Hence, $f(w_j)$ belongs to $J_L$. 
			Additionally, from Algorithm~\ref{algo:find-target-assignment}, since $f(w_i) \in J_{L^\prime} \neq J_L$, $w_j$ must belong to $I_L$ and $\dist_G(w_j, v) > \dist_G(w_i, v)$. 
			As $t_i$ is not an obstacle that forbid sliding $t_j$ from $w_j$ to $f(w_j)$, it follows that $f(w_j) \in \Neiclosed{G}{w_i}$, which means $w_i \in \Neiclosed{G}{P_{w_jf(w_j)}}$. 
			This contradicts our assumption that $w_j \prec w_i$.
		\end{itemize}
		
		Hence, $t_i$ can be slid from $w_i$ to $f(w_i)$ along $P_{w_if(w_i)}$.
		Our proof is now complete.
	\end{proof}
	
	In Lemma~\ref{lem:no-tokens-in-neigh-v}, we assumed that there is some leg $L$ of $G$ with $\msize{I_L} \neq \msize{J_L}$.
	In the next lemma, we consider the case $\msize{I_{L}} = \msize{J_{L}}$ for every leg $L$ of $G$ (regardless of whether $\max\{\msize{I \cap \Nei{G}{v}}, \msize{J \cap \Nei{G}{v}}\} = 0$).
	
	\begin{lemma}
		\label{lem:all-leg-have-same-num-of-IJ-tokens}
		Let $(G, I, J)$ be an instance of {\sc Shortest Sliding Token} for spiders.
		Let $v$ be the body of $G$.
		Assume that $\msize{I_{L}} = \msize{J_{L}}$ for every leg $L$ of $G$.
		Then, in $O(n^2)$ time, one can construct a $\sfTS$-sequence $\calS$ between $I$ and $J$ such that $\len(\calS) = \Mstar(G, I, J)$.
	\end{lemma}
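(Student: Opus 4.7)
The plan is to reduce the problem to independent sub-problems on each leg, each being essentially \textsc{Shortest Sliding Token} on a path. Let $f$ be the target assignment produced by Algorithm~\ref{algo:find-target-assignment}. Because $\msize{I_L} = \msize{J_L}$ for every leg $L$, the inner \textbf{while} loop at lines~\ref{algo-line:find-target-assignment-IL2-s}--\ref{algo-line:find-target-assignment-IL2-e} exhausts each pair $(I_L, J_L)$ simultaneously, and the second loop at lines~\ref{algo-line:find-target-assignment-IL1-s}--\ref{algo-line:find-target-assignment-IL1-e} never executes. Hence $f(I_L) = J_L$ for every leg $L$, no token is scheduled to migrate between legs, and $\dist_G(w, f(w)) = \dist_L(w, f(w))$ holds for every $w \in I_L$, where the distance on the right is measured inside the path $L$ (or inside $L_1 \cup \{v\}$ when $L = L_1$).

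For each leg $L$ viewed as a path, the restriction $f|_{I_L}$ is by construction order-preserving: the $i$-th farthest vertex of $I_L$ from $v$ is matched with the $i$-th farthest vertex of $J_L$. On a path, such an order-preserving matching admits a detour-free $\sfTS$-sequence of length $\sum_{w \in I_L}\dist_L(w, f(w))$; I would build it greedily in the spirit of Algorithm~\ref{algo:construct-token-ordering}, processing first the tokens that must move further from $v$ in order of decreasing distance from $v$, and afterwards the tokens that must move closer to $v$ in order of increasing distance from $v$, so that each token reaches its target along its unique shortest path without obstruction. Concatenating the per-leg sequences would then yield exactly $\sum_L\sum_{w \in I_L}\dist_L(w, f(w)) = \sum_{w \in I}\dist_G(w, f(w)) = \Mstar(G, I, J)$ token-slides, matching the lower bound of Lemma~\ref{lem:lower-bound-len-TS-seq-tree} with $\Dstar(G, I, J) = 0$.

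The main obstacle is to verify that the concatenated sequence is actually a $\sfTS$-sequence of $G$, i.e., that moves confined to different legs never violate global independence. Two distinct neighbors of $v$ lie in different legs and are non-adjacent in $G$, so the only possible cross-leg interaction occurs through the body $v$ itself. I would therefore split into subcases according to whether $v \in I$ and whether $v \in J$. When $v \notin I \cup J$, the legs are entirely independent and can be processed in any order. When $v \in I \cap J$, Algorithm~\ref{algo:find-target-assignment} forces $f(v) = v$; then in every leg the tokens stay at distance at least $2$ from $v$ throughout (since initial and target positions are at distance at least $2$ and each leg is a path), so no conflict with $v$'s token arises. The delicate subcases are $v \in I$, $v \notin J$ and its symmetric counterpart: here $v$'s token must migrate into $L_1$ (respectively, some token from $L_1$ must arrive at $v$), and I would schedule the $L_1$ sub-sequence first (respectively, last) so that $v$ is vacated before (respectively, occupied only after) any other leg places a token at $\Nei{G}{v}$. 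Finally, Algorithm~\ref{algo:find-target-assignment} runs in $O(n)$ time and the per-leg schedules require $O(n)$ slides each, each determined in $O(n)$ time, for a total running time of $O(n^2)$.
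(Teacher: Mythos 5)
Your proposal is correct and follows essentially the same route as the paper: use the target assignment of Algorithm~\ref{algo:find-target-assignment} (which, under the hypothesis $\msize{I_L}=\msize{J_L}$, never sends a token across legs), build a detour-free per-leg sequence by an order-preserving schedule on each path, and concatenate, matching the lower bound $\Mstar(G,I,J)$ of Lemma~\ref{lem:lower-bound-len-TS-seq-tree} in $O(n^2)$ time. If anything, you are more explicit than the paper about the only genuine cross-leg interaction — the body vertex $v$ when $v\in I\setminus J$ or $v\in J\setminus I$, which forces $L_1$ to be scheduled first or last, respectively — a point the paper's proof leaves implicit in the claim that the concatenation of the $\calS_L$'s is a $\sfTS$-sequence.
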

	
	\begin{proof}
		Let $f$ be a target assignment produced from Algorithm~\ref{algo:find-target-assignment}.
		From the assumption, we have $\bigcup_LI_L^1 = \emptyset$, i.e., for every leg $L$, if $w \in I_L$, then $f(w) \in J_L$.
		Now, for a leg $L$, let $I_L = \{w_1, w_2, \dotsc, w_{\msize{I_L}}\}$ be such that $w_1 \prec w_2 \prec \dots \prec w_{\msize{I_L}}$, where $\prec$ is the ordering produced from Algorithm~\ref{algo:construct-token-ordering}.
		Let $\calS_L$ be a sequence of token-slides constructed as follows: for each $w_i \in I_L$ ($i \in \{1, 2, \dotsc, \msize{I_L}\}$), slide the token on $w_i$ to $f(w_i) \in J_L$ along the path $P_{w_if(w_i)}$ (using exactly $\dist_G(w_i, f(w_i))$ token-slides).
		From the proof of Lemma~\ref{lem:no-tokens-in-neigh-v} (see \textbf{Case~1}), $\calS_L$ is indeed a $\sfTS$-sequence in $G$ of length $\len(\calS_L) = \sum_{w \in I_L}\dist_G(w, f(w))$ that reconfigures $I_L$ to $J_L$, and $\calS_L$ can be constructed in $O(n^2)$ time.
		Moreover, since $\bigcup_LI_L^1 = \emptyset$, for two distinct legs $L, L^\prime$, the concatenation $\calS_L \oplus \calS_{L^\prime}$ is also a $\sfTS$-sequence in $G$.
		Thus, a $\sfTS$-sequence $\calS$ in $G$ between $I$ and $J$ can be constructed by taking the concatenation of all $\calS_L$. 
		Clearly, $\len(\calS) = \Mstar(G, I, J)$, and $\calS$ can be constructed in $O(n^2)$ time.
	\end{proof}
	
	Using Lemma~\ref{lem:all-leg-have-same-num-of-IJ-tokens}, from this point forward, we can assume without loss of generality that for an instance $(G, I, J)$ of {\sc Shortest Sliding Token} for spiders, there must be some leg $L$ of $G$ such that $\msize{I_L} \neq \msize{J_L}$.
	
	\subsection{When $0 < \max\{\msize{I \cap \Nei{G}{v}}, \msize{J \cap \Nei{G}{v}}\} \leq 1$}
	
	This section is devoted to showing the following lemma.
	\begin{lemma}
		\label{lem:one-token-in-neigh-v}
		Let $(G, I, J)$ be an instance of {\sc Shortest Sliding Token} for spiders where the body $v$ of $G$ satisfies $0 < \max\{\msize{I \cap \Nei{G}{v}}, \msize{J \cap \Nei{G}{v}}\} \leq 1$.
		Assume that $\msize{I_{L}} \neq \msize{J_{L}}$ for some leg $L$ of $G$.
		Let $x \in I$ (resp. $y \in J$) be such that $I \cap \Nei{G}{v} = \{x\}$ (resp. $J \cap \Nei{G}{v} = \{y\}$), provided that $I \cap \Nei{G}{v} \neq \emptyset$ (resp. $J \cap \Nei{G}{v} \neq \emptyset$).
		Then,
		\begin{itemize}
			\item[(i)] If $x$ and $y$ both exist, and $x \in I_L$ and $y \in J_L$ for some leg $L$ of $G$ with $\msize{I_L} = \msize{J_L}$, then for every $\sfTS$-sequence $\calS$ between $I$ and $J$, $D_G(\calS) \geq 2$.
			Consequently, $\Dstar(G, I, J) \geq 2$.
			Moreover, one can construct in $O(n^2)$ time a $\sfTS$-sequence between $I$ and $J$ of length $\Mstar(G, I, J) + 2$.
			\item[(ii)] Otherwise, one can construct in $O(n^2)$ time a $\sfTS$-sequence between $I$ and $J$ of length $\Mstar(G, I, J)$.
		\end{itemize}
	\end{lemma}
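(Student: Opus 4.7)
My plan addresses the two parts separately, using Lemma~\ref{lem:no-tokens-in-neigh-v} as the main building block. The key structural observation is that a leg of a spider contains a unique neighbor of the body, so the hypothesis $x \in I_L$ and $y \in J_L$ in (i) already forces $x = y$; the token at $x$ is then present in both $I$ and $J$.

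For the lower bound in (i), I use a flow argument at the two edges incident to $x$. Let $x^\prime$ denote the neighbor of $x$ in $L$ distinct from $v$. Conservation of tokens in $V(L)$ throughout any $\sfTS$-sequence $\calS$ yields that the number of $x \to v$ slides equals the number of $v \to x$ slides, and the analogous count inside $V(L) \setminus \{x\}$ yields equality for the $x \to x^\prime$ and $x^\prime \to x$ slides. Because some leg $L^\prime$ satisfies $\msize{I_{L^\prime}} \neq \msize{J_{L^\prime}}$, at least one token must traverse the body; at that moment all neighbors of $v$, in particular $x$, are empty, so at least one slide leaves $x$. The two equalities then force $D_G(\calS) \geq 2$, so $\Dstar(G, I, J) \geq 2$.

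For the upper bound in (i), I construct a $\sfTS$-sequence $\calS$ of length $\Mstar(G, I, J) + 2$ whose only detour sits on the edge $xv$. The sequence begins with the slide $x \to v$, executes a carefully scheduled collection of direct-path moves routing tokens between legs according to a target assignment derived from Algorithm~\ref{algo:find-target-assignment} (the token currently at $v$ then travels toward a leg whose surplus demands cross-leg traffic, and one of the incoming tokens eventually settles at $x$), and concludes with the slide returning a token to $x$. The main obstacle is to verify that the intermediate scheduling can be carried out without collision: this requires adapting the token-ordering construction of Algorithm~\ref{algo:construct-token-ordering} to account for the temporary occupancy of $v$, together with a case analysis showing that the total count of productive slides equals $\Mstar(G, I, J)$ and the two extra slides $x \to v$ and $v \to x$ produce exactly the claimed detour.

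For part (ii), I reduce to Lemma~\ref{lem:no-tokens-in-neigh-v} after at most one preparatory slide. The case analysis proceeds by whether $x$ and $y$ exist, and, if both exist, whether they share a leg or differ in leg (and whether the leg sizes $\msize{I_L}$ and $\msize{J_L}$ coincide). In each subcase, the target assignment of Algorithm~\ref{algo:find-target-assignment} sends the token at $x$, when it exists, to a target outside $\Nei{G}{v}$; sliding it one step toward that target is a productive move counted toward $\Mstar(G, I, J)$, and yields an instance $(G, I^\prime, J)$ with $\msize{I^\prime \cap \Nei{G}{v}} = 0$, to which Lemma~\ref{lem:no-tokens-in-neigh-v} applies (after a symmetric treatment on the $J$-side, or after reversing the sequence via $\rev$ when only $y$ exists). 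The overall $O(n^2)$ bound follows from the bounds on the subroutines invoked.
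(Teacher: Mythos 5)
Your lower-bound argument for (i) is correct and is in fact tidier than the paper's: the paper argues by cases on whether the token on $x$ first leaves via $v$ or via its other neighbor and then exhibits the returning slide explicitly, whereas your conservation count across the two cuts (around $V(L)$ and around $V(L)\setminus\{x\}$), combined with the observation that some token must traverse $v$ so the token on $x=y$ must vacate $x$ at least once, yields $D_G(\calS)\geq 2$ in one stroke.

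The constructive halves, however, have genuine gaps. For the upper bound in (i) you explicitly leave the middle of the sequence unverified (``the main obstacle is to verify that the intermediate scheduling can be carried out without collision\dots adapting the token-ordering construction\dots''). No adaptation is needed: the paper modifies \emph{both} endpoints, setting $\Ip=(I\setminus\{x\})\cup\{v\}$ and $\Jp=(J\setminus\{y\})\cup\{v\}$; since a token then sits on $v$ itself, $\max\{\msize{\Ip\cap\Nei{G}{v}},\msize{\Jp\cap\Nei{G}{v}}\}=0$, Lemma~\ref{lem:no-tokens-in-neigh-v} applies verbatim to $(G,\Ip,\Jp)$, and $\Mstar(G,\Ip,\Jp)=\Mstar(G,I,J)$ because the optimal assignment fixes $x=y$ (respectively $v$); hence $\langle x\to v\rangle\oplus\calSp\oplus\langle v\to x\rangle$ has length exactly $\Mstar(G,I,J)+2$. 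Your sketch, in which the token parked at $v$ ``travels toward a leg whose surplus demands cross-leg traffic,'' contradicts using the assignment of Algorithm~\ref{algo:find-target-assignment}, under which $f(x)=x$ and the displaced token simply returns. For (ii), the reduction ``after at most one preparatory slide'' is not sound: when $f(x)$ lies deeper inside the leg containing $x$, sliding the token on $x$ one step away from $v$ can be blocked by another token of $I_L$ two steps below $x$, so no single slide yields an independent set with no token adjacent to $v$. The paper instead first routes every token of $I_L$ preceding $x$ in the ordering $\prec$ all the way to its final target (and symmetrically on the $J$ side via $\rev$), verifies in each case that $\Mstar$ drops by exactly the number of slides spent, and treats the case $f(x)=y$ with $x,y$ in different legs separately through Lemma~\ref{lem:all-leg-have-same-num-of-IJ-tokens}; none of this bookkeeping is established in your proposal.
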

	
	\begin{proof}
		\begin{itemize}
			\item[(i)] By assumption, we have $\Nei{G}{v} \cap V(L) = \{x\} = \{y\}$.
			If $\Nei{G}{x} \setminus \{v\} \neq \emptyset$, then let $v^\prime$ be such that $\Nei{G}{x} \setminus \{v\} = \{v^\prime\}$. 
			(Note that since $G$ is spider, $\msize{\Nei{G}{x}} \leq 2$.)
			We claim that for any $\sfTS$-sequence $\calS$ in $G$ between $I$ and $J$, $\calS$ must make detour over either $e_1 = xv = yv$ or $e_2 = xv^\prime = yv^\prime$.
			(See Figure~\ref{fig:one-token-in-neigh-v}.)
			\begin{figure}[!ht]
				\centering
				\includegraphics[scale=0.7]{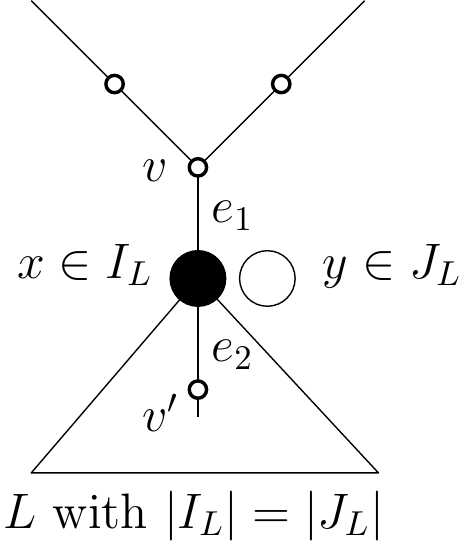}
				\caption{Illustration of Lemma~\ref{lem:one-token-in-neigh-v}(i). Tokens in $I$ (resp. $J$) are of black (resp. white) color.}
				\label{fig:one-token-in-neigh-v}
			\end{figure}
			By assumption, note that there must be some leg $K$ such that $\msize{I_K} > \msize{J_K}$.
			Let $t_w$ be the token placed at some vertex $w \in I_K$.
			Since $\msize{I_K} > \msize{J_K}$, at some point, $\calS$ must slide $t_w$ to some vertex not in $K$.
			As $G$ is a spider, at some point, $\calS$ must slide $t_w$ to $v$, which means it must slide the token $t_x$ on $x \in I_L$ to some other vertex not in $\Nei{G}{v}$ beforehand.
			There are only two possible movements: at some point, $\calS$ slides $t_x$ either to $v$ or to $v^\prime$ (and then maybe to some other vertex not in $\Nei{G}{v}$).
			\begin{itemize}
				\item \textbf{Case~1: $\calS$ slides $t_x$ to $v$ and then to some other vertex not in $\Nei{G}{v}$.}
				Let $\Ip$ be the resulting independent set at this point.
				Let $\calS_1$ and $\calS_2$ be the subsequences of $\calS$ that reconfigure $I$ to $\Ip$ and $\Ip$ to $J$, respectively.
				Clearly, $\calS_1$ slides $t_x$ from $x$ to $v$ at some point.
				Since $\msize{\Ip_L} < \msize{\Ip_L}$ ($t_x$ is already moved) and $y \in J_L$, it follows that at some point $\calS_2$ must slide some token $t_z$ on some vertex $z \ \notin \Ip_L$ to $y = x$.
				As $G$ is a spider, $\calS_1$ must slide $t_z$ to $v$ beforehand, and then moves $t_z$ from $v$ to $y = x$.
				In summary, $x \to v$ and $v \to y = x$ are members of $\calS_1$ and $\calS_2$, respectively.
				Hence, $\calS$ makes detour over $e_1 = xv$.
				
				\item \textbf{Case~2: $\calS$ slides $t_x$ to $v^\prime$ (and then maybe to some other vertex not in $\Nei{G}{v}$).}
				Let $\Ipp$ be the resulting independent set at this point.
				Let $\calS_3$ and $\calS_4$ be the subsequences of $\calS$ that reconfigure $I$ to $\Ipp$ and $\Ipp$ to $J$, respectively.
				Clearly, $\calS_3$ slides $t_x$ from $x$ to $v^\prime$ at some point.
				Since $\msize{\Ipp_L} = \msize{\Ipp_L}$ and $y \in J_L$, at some point, $\calS_4$ must slide $t_x$ from $v^\prime$ to $y = x$ (regardless of which token is finally moved to $y$).
				In summary, $x \to v^\prime$ and $v^\prime \to y = x$ are members of $\calS_3$ and $\calS_4$, respectively.
				Hence, $\calS$ makes detour over $e_2 = xv^\prime$.
			\end{itemize}
			Since for any $\sfTS$-sequence $\calS$, one of the above cases must happen, we always have $D_G(S) \geq 2$, which means $\Dstar(G, I, J) \geq 2$.
			
			Now, we describe how to construct a $\sfTS$-sequence $\calS$ whose length $\len(\calS) = \Mstar(G, I, J) + 2$.
			Let $\Ip = I \setminus \{x\} \cup \{v\}$ and $\Jp = J \setminus \{y\} \cup \{v\}$.
			Intuitively, $\Ip$ (resp. $\Jp$) is obtained from $I$ (resp. $J$) via a single token-slide that moves the token on $x \in I$ (resp. $y \in J$) to $v$. 
			Note that this can be done because $\max\{\msize{I \cap \Nei{G}{v}}, \msize{J \cap \Nei{G}{v}}\} \leq 1$.
			Recall that $x = y \in I \cap J$.
			Clearly, $\max\{\msize{\Ip \cap \Nei{G}{v}}, \msize{\Jp \cap \Nei{G}{v}}\} = 0$.
			Moreover, by Lemma~\ref{lem:algo-1-construct-shortest-assignment}, $\Mstar(G, I, J) = \Mstar(G, \Ip, \Jp)$.
			To see this, note that if $f: I \to J$ (resp. $g: \Ip \to \Jp$) is a target assignment produced from Algorithm~\ref{algo:find-target-assignment}, then $x = f(x) \in I_L \cap J_L$ and $v = g(v) \in I_{L_1} \cap J_{L_1}$.
			Let $\calSp$ be a $\sfTS$-sequence in $G$ that reconfigures $\Ip$ to $\Jp$ as described in Lemma~\ref{lem:no-tokens-in-neigh-v}.
			Let $\calS = \langle x \to v \rangle \oplus \calSp \oplus \langle v \to x \rangle$.
			Clearly, $\calS$ is a $\sfTS$-sequence in $G$ that reconfigures $I$ to $J$ of length $\len(\calS) = \Mstar(G, I, J) + 2$, and it can be constructed in $O(n^2)$ time.
			
			\item[(ii)] Let $f: I \to J$ be a target assignment produced from Algorithm~\ref{algo:find-target-assignment}, and $\prec$ be a corresponding total ordering on vertices of $I$ produced from Algorithm~\ref{algo:construct-token-ordering}.
			In each of the following cases, we describe how to construct a $\sfTS$-sequence $\calS$ in $G$ between $I$ and $J$ whose length is $\Mstar(G, I, J)$.
			\begin{itemize}
				\item \textbf{Case~1: Only $x$ exists.} 
				First of all, we consider the case $v \in J$ and $f(x) = v$.
				Let $\Ip = I \setminus \{x\} \cup \{v\}$.
				Intuitively, the independent set $\Ip$ is obtained from $I$ by sliding the token on $x$ to $v$.
				Moreover, by Lemma~\ref{lem:algo-1-construct-shortest-assignment}, $\Mstar(G, I, J) = \Mstar(G, \Ip, J) + 1$.
				To see this, note that if $g: \Ip \to J$ is a target assignment produced from Algorithm~\ref{algo:find-target-assignment}, then $g(v) = v = f(x)$.
				Since only $x$ exists, we must have $\max\{\msize{\Ip \cap \Nei{G}{v}}, \msize{J \cap \Nei{G}{v}}\} = 0$.
				Let $\calSp$ be a $\sfTS$-sequence in $G$ that reconfigures $\Ip$ to $J$ as described in Lemma~\ref{lem:no-tokens-in-neigh-v}.
				Clearly, the sequence $\calS = \langle x \to v \rangle \oplus \calSp$ is our desired $\sfTS$-sequence.
				
				Without loss of generality, we can now assume that if $v \in J$, then we have $f(x) \neq v$.
				Suppose that $x \in I_L$ for some leg $L$ of $G$.
				We consider the following cases.
				\begin{itemize}
					\item \textbf{Case~1.1: $f(x) \in J_L$.} 
					Let $I_x = \{x\} \cup \{w \in I_L: w \prec x\}$.
					Let $\calS_1$ be a sequence of token-slides constructed as follows: (a) Take the minimum element $w$ of $I_x$ (with respect to $\prec$) and slide the token on $w$ to $f(w)$; and (b) Repeat (a) with $I_x \setminus \{w\}$ instead of $I_x$.
					From the proof of Lemma~\ref{lem:no-tokens-in-neigh-v}, it follows that $\calS_1$ is indeed a $\sfTS$-sequence in $G$ of length $\sum_{w \in I_x}\dist_G(w, f(w))$ that moves the token on $x$ to $f(x)$.
					Intuitively, $\calS_1$ only moves tokens ``inside'' the leg $L$.
					Additionally, note that if $\Ip$ is the resulting independent set obtained from $I$ by performing $\calS_1$, then $\Mstar(G, I, J) = \Mstar(G, \Ip, J) + \len(\calS_1)$ and $\max\{\msize{\Ip \cap \Nei{G}{v}}, \msize{J \cap \Nei{G}{v}}\} = 0$.
					Thus, if $\calS_2$ is the $\sfTS$-sequence in $G$ that reconfigures $\Ip$ to $J$, as described in Lemma~\ref{lem:no-tokens-in-neigh-v}, then $\calS = \calS_1 \oplus \calS_2$ is our desired $\sfTS$-sequence.
					
					\item \textbf{Case~1.2: $f(x) \in J_{L^\prime}$ for some leg $L^\prime \neq L$ of $G$.}
					Let $I_x = \{x\} \cup \{w: w \in I_{L^\prime}\}$.
					From Algorithm~\ref{algo:find-target-assignment} and Lemma~\ref{lem:token-ordering}(ii), note that $x = \max_{\prec}\{w: w \in I_x\}$.
					Let $\calS_1$ be the sequence of token-slides constructed as follows: (a) Take the minimum element $w$ of $I_x$ (with respect to $\prec$) and slide the token on $w$ to $f(w)$; and (b) Repeat (a) with $I_x \setminus \{w\}$ instead of $I_x$.
					From the proof of Lemma~\ref{lem:no-tokens-in-neigh-v} and the assumption $\max\{\msize{I \cap \Nei{G}{v}}, \msize{J \cap \Nei{G}{v}}\} \leq 1$, it follows that $\calS_1$ is a $\sfTS$-sequence in $G$ of length $\sum_{w \in I_x}\dist_G(w, f(w))$ that moves the token on $x$ to $f(x)$.
					Intuitively, $\calS_1$ first moves tokens ``inside'' the leg $L^\prime$ to their final target vertices in order to ``clear the path'' for moving the token on $x$ to $f(x)$.  
					Additionally, note that if $\Ip$ is the resulting independent set obtained from $I$ by performing $\calS_1$, then $\Mstar(G, I, J) = \Mstar(G, \Ip, J) + \len(\calS_1)$ and $\max\{\msize{\Ip \cap \Nei{G}{v}}, \msize{J \cap \Nei{G}{v}}\} = 0$.
					As before, if $\calS_2$ is the $\sfTS$-sequence in $G$ that reconfigures $\Ip$ to $J$, as described in Lemma~\ref{lem:no-tokens-in-neigh-v}, then $\calS = \calS_1 \oplus \calS_2$ is our desired $\sfTS$-sequence.
				\end{itemize}
				
				\item \textbf{Case~2: Only $y$ exists.}
				First of all, we consider the case $v \in I$ and $f(v) = y$.
				Let $\Jp = J \setminus \{y\} \cup \{v\}$.
				Analogously to \textbf{Case~1}, we have $\Mstar(G, I, J) = \Mstar(G, I, \Jp) + 1$, and
				$\max\{\msize{I \cap \Nei{G}{v}}, \msize{\Jp \cap \Nei{G}{v}}\} = 0$. 
				Then, if $\calSp$ is the $\sfTS$-sequence that reconfigures $I$ to $\Jp$ as described in Lemma~\ref{lem:no-tokens-in-neigh-v} then $\calS = \calSp \oplus \langle v \to y \rangle$ is our desired $\sfTS$-sequence.
				
				Without loss of generality, we can now assume that if $v \in I$ then $f(v) \neq y$.
				Suppose that $y \in J_L$ for some leg $L$ of $G$.
				We consider the following cases.
				\begin{itemize}
					\item \textbf{Case~2.1: $f^{-1}(y) = z \in I_L$.}
					Let $I_z = \{z\} \cup \{w \in I_L: w \prec z\}$.
					From Algorithm~\ref{algo:find-target-assignment}, Lemma~\ref{lem:token-ordering}(ii), and the assumption $y \in \Nei{G}{v}$, we must have $I_z \setminus \{z\} \subseteq I_L^1$.
					Intuitively, the target of any token ``inside'' the path $P_{yz}$ must be some vertex ``outside'' $L$. 
					Then, for each $w \in I_z \setminus \{z\}$, one can use a similar idea as in \textbf{Case~1.2} for constructing a $\sfTS$-sequence $\calS_w$ that moves the token on $w \in I_L$ to $f(w) \notin J_L$. 
					For notational convention, let $\calS_z$ be the $\sfTS$-sequence of $\dist_G(z, f(z))$ token-slides that moves the token on $z \in I_L$ to $f(z) = y \in J_L$ along $P_{zf(z)}$.
					Let $\calS_1$ be a $\sfTS$-sequence of token-slides constructed as follows: (a) Take the minimum element $w$ of $I_z$ (with respect to $\prec$) and perform $\calS_w$; and (b) Repeat (a) with $I_z \setminus \{w\}$ instead of $I_z$.
					Intuitively, $\calS_1$ moves every token ``inside'' the path $P_{zy}$ (which, by Algorithm~\ref{algo:find-target-assignment}, is also in $I_L^1$) ``out of'' the leg $L$, and then moves the token on $z$ to $y$.
					Let $\Ip$ be the resulting independent set obtained from $I$ by performing $\calS_1$.
					Then, note that $\Ip \cap J = f(I_z) = \bigcup_{w \in I_z}\{f(w)\}$.
					It follows that the reverse $\sfTS$-sequence $\rev(\calS_1)$ of $\calS_1$ can be performed with the initial independent set $J$.
					Intuitively, $\rev(\calS_1)$ moves any token on $w \in f(I_z) \subseteq J$ to $f^{-1}(w) \in I$, and tokens not in $f(I_z)$ remains at their original position.
					Now, let $\Jp$ be the resulting independent set obtained from $J$ by performing $\rev(\calS_1)$.
					Clearly, $\Jp \cap I = I_z$, and therefore $\calS_1$ can be performed with the initial independent set $\Jp$.
					Intuitively, only tokens in $I_z$ are not placed at their final positions.
					Note that $\Mstar(G, I, J) = \Mstar(G, I, \Jp) + \sum_{w \in I_z}\len(\calS_w)$ and $\max\{\msize{I \cap \Nei{G}{v}}, \msize{\Jp \cap \Nei{G}{v}}\} = 0$.
					Then, if $\calS_2$ is the $\sfTS$-sequence that reconfigures $I$ to $\Jp$ as described in Lemma~\ref{lem:no-tokens-in-neigh-v} then $\calS = \calS_2 \oplus \calS_1$ is our desired $\sfTS$-sequence.
					
					\item \textbf{Case~2.2: $f^{-1}(y) = z \in I_{L^\prime}$ for some leg $L^\prime \neq L$ of $G$.}
					Let $I_z = \{z\} \cup \{w \in I_{L^\prime}: w \prec z\}$.
					From Algorithm~\ref{algo:find-target-assignment} and Lemma~\ref{lem:token-ordering}(ii), we must have $I_z \subseteq I_{L^\prime}^1$.
					Now, the construction of our desired $\sfTS$-sequence $\calS$ can be done in a similar manner as in \textbf{Case~2.1}.
				\end{itemize}
				
				\item \textbf{Case~3: Both $x$ and $y$ exist, and $f(x) \neq y$.}
				We note that in this case $v \notin I \cup J$.
				Combining the techniques in \textbf{Case~1} and \textbf{Case~2}, one can construct:
				\begin{itemize}
					\item a $\sfTS$-sequence $\calS_1$ that moves the token on $x$ to $f(x)$, and the resulting independent set $\Ip$ satisfies $\Mstar(G, I, J) = \Mstar(G, \Ip, J) + \len(\calS_1)$; and
					\item a $\sfTS$-sequence $\calS_2$ whose reverse $\rev(\calS_2)$ moves the token on $y$ to $f^{-1}(y)$, and the resulting independent set $\Jp$ after performing $\rev(\calS_2)$ satisfies $\Mstar(G, \Ip, J) = \Mstar(G, \Ip, \Jp) + \len(\calS_2)$.
				\end{itemize}
				Note that $\max\{\msize{\Ip \cap \Nei{G}{v}}, \msize{\Jp \cap \Nei{G}{v}}\} = 0$.
				Thus, if $\calS_3$ is the $\sfTS$-sequence that reconfigures $\Ip$ to $\Jp$ as described in Lemma~\ref{lem:no-tokens-in-neigh-v} then $\calS = \calS_1 \cup \calS_3 \cup \calS_2$ is our desired $\sfTS$-sequence.
				
				\item \textbf{Case~4: Both $x$ and $y$ exist, and $f(x) = y$.}
				We note that in this case $v \notin I \cup J$.
				From the assumption, it must happen that $x \in I_L$ and $y \in J_{L^\prime}$ for two distinct legs $L, L^\prime$ of $G$.
				(If $L = L^\prime$ the Algorithm~\ref{algo:find-target-assignment} implies that $\msize{I_L} = \msize{J_L}$, which contradicts our assumption.)
				Moreover, Algorithm~\ref{algo:find-target-assignment} implies that $\bigcup_LI_L^1 = \{x\}$.
				To see this, note that if there exists $z \in \bigcup_LI_L^1 \setminus \{x\}$ then we must have $1 = \dist_G(x, v) \leq \dist_G(z, v)$ and $1 = \dist_G(f(x), v) \leq \dist_G(f(z), v)$; otherwise, either $z \neq x$ or $f(z) \neq y$ belongs to $\Nei{G}{v}$, which contradicts the assumption $\max\{\msize{I \cap \Nei{G}{v}}, \msize{J \cap \Nei{G}{v}}\} \leq 1$.
				However, this contradicts Algorithm~\ref{algo:find-target-assignment}.
				Thus, we must have $\bigcup_LI_L^1 = \{x\}$.
				Let $\calS_1 = \langle x \to v, v \to y \rangle$ be the $\sfTS$-sequence of length $\dist_G(x, y) = 2$ that moves the token on $x \in I_L$ to $y \in J_{L^\prime}$, and let $\Ip$ be the resulting independent set. 
				This can be done simply because $\max\{\msize{I \cap \Nei{G}{v}}, \msize{J \cap \Nei{G}{v}}\} \leq 1$.
				Note that $\Mstar(G, I, J) = \Mstar(G, \Ip, J) + 2$, and every leg $L$ of $G$ satisfies $\msize{\Ip_L} = \msize{J_L}$ (otherwise, $\bigcup_L\Ip_L^1 \neq \emptyset$, and hence $\bigcup_LI_L^1 \neq \{x\}$, which is a contradiction).
				Then, if $\calS_2$ is the $\sfTS$-sequence that reconfigures $\Ip$ to $J$ as described in Lemma~\ref{lem:all-leg-have-same-num-of-IJ-tokens} then $\calS = \calS_1 \oplus \calS_2$ is our desired $\sfTS$-sequence.
			\end{itemize}
			We have shown how to construct a $\sfTS$-sequence $\calS$ in $G$ between $I$ and $J$ whose length is $\Mstar(G, I, J)$.
			From the above cases, it is clear that the construction of $\calS$ takes $O(n^2)$ time. 
		\end{itemize}
	\end{proof}
	
	\subsection{When $\max\{\msize{I \cap \Nei{G}{v}}, \msize{J \cap \Nei{G}{v}}\} \geq 2$}
	
	In this section, we claim that 
	\begin{lemma}
		\label{lem:two-tokens-in-neigh-v}
		Let $(G, I, J)$ be an instance of {\sc Shortest Sliding Token} for spiders where the body $v$ of $G$ satisfies $\max\{\msize{I \cap \Nei{G}{v}}, \msize{J \cap \Nei{G}{v}}\} \geq 2$.
		Assume that $\msize{I_{L}} \neq \msize{J_{L}}$ for some leg $L$ of $G$.
		Then, in $O(n^2)$ time, one can construct a $\sfTS$-sequence $\calS$ between $I$ and $J$ of shortest length.
		Moreover, the value of $D_G(\calS)$ can be explicitly calculated.
	\end{lemma}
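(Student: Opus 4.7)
The plan is to follow the same reduction philosophy as in Lemmas~\ref{lem:no-tokens-in-neigh-v} and \ref{lem:one-token-in-neigh-v}: peel off one token at a time from the neighborhood of the body $v$ so that, after a bounded number of ``preparation'' token-slides, we arrive at an instance falling under the previously handled cases. Formally, I would proceed by induction on the quantity $k(I,J)=\msize{I\cap\Nei{G}{v}}+\msize{J\cap\Nei{G}{v}}$, with Lemma~\ref{lem:one-token-in-neigh-v} serving as the base case ($k\le 2$ with at most one side nonempty, or the symmetric situation covered there). The main loop of the algorithm selects a pair $(x,y)$ with $x\in I\cap\Nei{G}{v}$ or $y\in J\cap\Nei{G}{v}$, and resolves it while keeping a precise tally of the extra token-slides added.

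For the inductive step I would distinguish subcases according to the target assignment $f$ produced by Algorithm~\ref{algo:find-target-assignment} applied to the current instance. For each $x\in I\cap\Nei{G}{v}$:
\emph{(a)} if $f(x)$ lies in the same leg $L$ as $x$ and can be reached without crossing $v$, handle $x$ as in \textbf{Case~1.1} of Lemma~\ref{lem:one-token-in-neigh-v} (no detour);
\emph{(b)} if $f(x)$ lies in a different leg $L'$, we cannot simply slide $x$ through $v$, since some other $x'\in I\cap\Nei{G}{v}$ blocks it. Here we park $x'$ by sliding it one step deeper into its leg (this slide is a genuine detour), then move $x$ through $v$ along $P_{xf(x)}$, and, once the rest of the sequence is constructed, bring $x'$ back out. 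Each such parking contributes exactly one detour of cost $2$ over the edge $x'v$. A symmetric case analysis is needed for $y\in J\cap\Nei{G}{v}$, handled by working with $\rev(\calS)$ as in \textbf{Case~2} of Lemma~\ref{lem:one-token-in-neigh-v}. After one iteration of the loop the new independent sets $\Ip,\Jp$ satisfy $k(\Ip,\Jp)<k(I,J)$, so induction produces a sequence $\calS_{\text{mid}}$ reconfiguring $\Ip$ to $\Jp$; the full sequence is a concatenation of a constant-length prefix, $\calS_{\text{mid}}$, and a constant-length suffix. The $O(n^2)$ bound follows because only $O(n)$ iterations are needed and each performs $O(n)$ work beyond the recursive call.

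For the quantity $D_G(\calS)$, I would show inductively that $D_G(\calS)=D_G(\calS_{\text{mid}})+2\delta$, where $\delta\in\{0,1\}$ records whether the current iteration parked a token. Thus $D_G(\calS)$ is obtained by summing the contributions over all iterations, giving an explicit formula computable in $O(n)$ time from the target assignment $f$ and the multisets $I\cap\Nei{G}{v}$, $J\cap\Nei{G}{v}$. Combined with the length $\Mstar(G,I,J)$ of the token-slides along the paths $P_{wf(w)}$, the total length is $\Mstar(G,I,J)+D_G(\calS)$.

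The hard part will be the matching lower bound, i.e., showing that the constructed $\calS$ attains $\Dstar(G,I,J)$. Following Lemma~\ref{lem:one-token-in-neigh-v}(i), the crucial observation is that whenever, at some point of any $\sfTS$-sequence, two tokens of $\Nei{G}{v}$ both need to cross $v$ toward distinct legs, one of them must first vacate $\Nei{G}{v}$ into its leg and later return, forcing a detour over an edge incident to $v$ (or to its parked neighbor). I would formalize this by a potential-function argument: define $\Phi(I')=\msize{I'\cap\Nei{G}{v}}$ plus a bookkeeping term counting ``pending crossings,'' show that any $\sfTS$-sequence that reduces this potential by one in an ``unmatched'' situation must perform a detour, and thereby conclude that our count of parking events is optimal. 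Combined with Lemma~\ref{lem:lower-bound-len-TS-seq-tree}, this gives $\len(\calS)=\Mstar(G,I,J)+\Dstar(G,I,J)$, completing the proof.
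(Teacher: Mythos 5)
Your overall strategy---clear all but one token out of $\Nei{G}{v}$ and reduce to Lemma~\ref{lem:one-token-in-neigh-v}---matches the paper's, but two of your steps have genuine gaps. First, ``parking $x'$ by sliding it one step deeper into its leg'' is not in general a single slide: the neighbor of $x'$ inside the leg may itself carry a token, which must first be pushed deeper, and so on. The paper handles this cascade with the quantity $\cost(G,I,w_jx_j)$ of Lemma~\ref{lem:calculate-cost-in-linear-time}, which can be $\infty$ when the token on $w_j$ is rigid inside its leg; that rigid case is not a pathology you can ignore, since it forces that particular token to be the one that crosses $v$ (the paper's \textbf{Case~1.1}), and your induction has no branch for it. Moreover, when no token in $I\cap\Nei{G}{v}$ is leg-rigid, the choice of which token is kept near $v$ genuinely affects the total length and is not determined by the target assignment $f$; the paper tries all choices and takes the minimum, whereas you fix one implicitly with no optimality argument.

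Second, the detour accounting $D_G(\calS)=D_G(\calS_{\mathrm{mid}})+2\delta$ with $\delta\in\{0,1\}$ per iteration is wrong. A parking cascade pushes several tokens one edge deeper, and whether the push over an edge $xy$ must later be undone depends on whether $\msize{I\cap G^x_y}\geq\msize{J\cap G^x_y}$: if the subtree has a surplus of targets, the pushed token is simply retargeted and contributes no detour; otherwise some token must come back over $xy$. This is what the paper's auxiliary graph $A(G,I,J)$ records, giving $D_G(\calS)=D_G(\calS_2^i)+2\msize{\{x\to y\in\calS_1^i:(y,x)\in E(A(G,I,J))\}}$, so a single parking event can contribute anywhere from $0$ to several detours, not just $0$ or $2$. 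Relatedly, your potential-function lower bound would still have to show that every $\sfTS$-sequence contains one of the clearing sequences in full---which is the paper's actual argument---and the potential $\Phi$ as stated does not see the cascades inside the legs, where most of the forced detours occur.
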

	
	Before proving Lemma~\ref{lem:two-tokens-in-neigh-v}, we prove the following useful lemma.
	\begin{lemma}
		\label{lem:calculate-cost-in-linear-time}
		The value of $\cost(T, I, xy)$ can be calculated in $O(n)$ time for $x \in I$ and $y \in \Nei{T}{x}$, where $I$ is an independent set of a tree $T$ on $n$ vertices.
		Moreover, if $\cost(T, I, xy) < \infty$, then in $O(n)$ time, one can output a $\sfTS$-sequence $\calS(T, I, xy)$ in $T$ of length $\cost(T, I, xy)$ such that $\calS(T, I, xy)$ moves the token on $x$ to $y$.
	\end{lemma}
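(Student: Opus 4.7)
The plan is to exploit the tree structure of $T$ by observing that only the subtree on one side of the edge $xy$ is relevant. Let $S = T^x_y$ denote the subtree of $T$ consisting of $y$ and its descendants when $T$ is regarded as rooted at $x$. For the slide $x \to y$ to be legal at some moment, $y$ itself and every vertex of $\Nei{S}{y}$ must be unoccupied at that moment. An optimal sequence that moves the token on $x$ to $y$ can always be rearranged so that this slide is the last one and no preceding slide touches either the $x$-side of $T$ or the vertex $y$: slides on the $x$-side do not help clear $\Nei{S}{y}$ and can be dropped, while any slide into and back out of $y$ forms an obvious detour that can be cancelled. Hence $\cost(T, I, xy) = 1 + \Cstar(S)$, where $\Cstar(S)$ denotes the minimum number of slides inside $S$ that transforms $I \cap V(S)$ into an independent set of $S$ in which every vertex of $\Nei{S}{y}$ is empty.

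Next I would compute $\Cstar(S)$ by a tree DP on $S$ rooted at $y$. For each vertex $u$ of $S$, let $I_u = I \cap V(S^y_u)$. Define $a(u)$ to be the minimum number of slides inside $S^y_u$ needed to reach an independent set of $S^y_u$ not containing $u$, and $b(u)$ the corresponding minimum for a final state containing $u$; each is $\infty$ when no such reconfiguration exists. When $u \notin I_u$, the recursion reads $a(u) = \sum_{c} a(c)$ over the children $c$ of $u$, and $b(u) = 1 + \sum_{c} a(c) + \min_{c}\bigl(b(c) - a(c)\bigr)$ over children $c$ from which a token can legally slide into $u$. When $u \in I_u$, the formulas are symmetric, with $b(u) = \sum_{c} a(c)$ and $a(u)$ obtained by choosing one child to absorb the token on $u$. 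The feasibility (and hence the $\infty$ cases) is detected locally using the rigidity characterization from~\cite{DDFEHIOOUY2015}. Finally, $\Cstar(S) = \sum_{c \in \Nei{S}{y}} a(c)$.

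A single post-order traversal computes all $a(u)$ and $b(u)$ in $O(n)$ time, since at each vertex the work is proportional to its number of children once we pre-aggregate $\sum_c a(c)$ and the minimum of $b(c) - a(c)$. Hence $\cost(T, I, xy)$ is obtained in $O(n)$ time. For the second part of the lemma, I attach to each DP entry a witness describing which child was selected and whether the ``evacuate'' or ``occupy'' option was used, and reconstruct $\calS(T, I, xy)$ by a second linear-time post-order traversal that unrolls the recursion subtree by subtree, appending the single slide $x \to y$ at the end. The total length of the output is at most $\cost(T, I, xy) \le n$ whenever it is finite, so this reconstruction runs in $O(n)$ time overall.

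The main obstacle will be rigorously justifying the additivity of the recursion over child subtrees: I must argue that an optimal sequence of slides inside $S^y_u$ never needs to move tokens between distinct child subtrees of $u$ via $u$ itself, since any such crossing can be shown to be a cancellable detour in an optimal plan. The second subtlety is the proper handling of rigidity, which propagates $\infty$ up the tree, and the case distinction between $u \in I_u$ and $u \notin I_u$ together with the legality constraint that forbids sliding onto a vertex whose parent is occupied. These arguments, while combinatorially delicate, are standard for tree reconfiguration problems, so I expect the proof to reduce to careful bookkeeping rather than any new structural insight.
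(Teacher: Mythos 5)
Your overall strategy --- restrict attention to the subtree $T^x_y$, observe that the slide $x \to y$ only requires $y$ and its neighbors inside $T^x_y$ to be vacant, compute the evacuation cost by a bottom-up tree DP in which $\infty$ marks infeasibility, and reconstruct the sequence in a second linear-time pass that unrolls the DP choices --- is essentially the paper's proof. The paper keeps a single potential $\phi(u)$ per vertex (for $u \notin I$ it is one plus the sum of $\phi$ over the \emph{occupied} children of $u$; for $u \in I$ it is the minimum of $\phi$ over the children) and reconstructs $\calS(T, I, xy)$ by the same recursive unrolling you describe.

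However, your DP as written is internally inconsistent, and at least one reading of it returns wrong answers. You define $a(u)$ as the minimum number of slides after which $u$ is unoccupied; under that definition $a(u) = 0$ whenever $u \notin I$, which contradicts the recursion $a(u) = \sum_c a(c)$ (likewise $b(u)$ should be $0$, not $\sum_c a(c)$, when $u \in I$). If instead one takes the recursion at face value, so that $a(u)$ for $u \notin I$ is really the cost of clearing $u$'s occupied children (the quantity the parent needs in order to push a token \emph{into} $u$), then the closing formula $\Cstar(T^x_y) = \sum_{c \in \Nei{T^x_y}{y}} a(c)$ over-counts: for an unoccupied child $c$ of $y$ it charges for evacuating $c$'s own occupied children, which the slide $x \to y$ does not require. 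Concretely, on the path $x - y - c - e$ with $I = \{x, e\}$, this reading gives $a(e) = \infty$, hence $a(c) = \infty$ and $\cost(T, I, xy) = \infty$, whereas the slide $x \to y$ is legal immediately and $\cost(T, I, xy) = 1$; under the literal definition of $a$ the closing formula is fine, but then the recursion for $a(u)$ with $u \in I$ needs the ``make this child receivable'' cost, a quantity you never define. The repair is exactly the paper's bookkeeping: one value per vertex, meaning ``cost for the parent's token to enter $u$'' when $u \notin I$ and ``cost to vacate $u$ downward'' when $u \in I$, with the sum at $y$ taken only over occupied children. The second state $b$ (which pulls tokens upward toward $u$) is never needed, since every token in this problem only moves away from $y$; dropping it also removes the additivity worry you flag at the end. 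Your reconstruction step and the $O(n)$ accounting are fine once the recursion is repaired.
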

	
	\begin{proof}
		We modify the algorithm described in \cite[Lemma 2]{Demaine2014-TS-trees}.
		First, regard $x$ as the root of $T$.
		Then, we define $\phi(y)$ for each vertex $y \in V(T)$ from leaves of $T$ to the root $u$ as follows.
		For each leave $y$ of $T$, we set $\phi(y) = \infty$ if $y \in I$; otherwise, $\phi(y) = 1$.
		For each internal vertex $y$ of $T$ with $y \notin I$, if no children of $y$ is in $I$, we set $\phi(y) = 1$; otherwise, we set $\phi(y) = 1 + \sum_{w \in I \text{ and $w$ is a child of $y$}}\phi(w)$.
		For each internal vertex $y$ of $T$ with $y \in I$, we set $\phi(y) = \min_{\text{child $w$ of $y$}}\phi(w)$.
		Finally, we set $\cost(T, I, xy) = \phi(y)$.
		By definition, it is not hard to see that the above algorithm correctly computes $\cost(T, I, xy)$.
		For each $y \in V(T)$, the value $\phi(y)$ is computed exactly once.
		Thus, $\cost(T, I, xy)$ can be calculated in $O(n)$ time.
		
		Assume that $\cost(T, I, xy) < \infty$.
		We now show how to construct $\calS(T, I, xy)$ using the described algorithm.
		For each $z \in I \setminus \{u\}$ with $\phi(z) < \infty$, we define $c(z)$ to be a child of $z$ such that $\phi(c(z)) = \min_{\text{child $w$ of $z$}}\phi(w)$.
		For $z \in I$ with $\phi(z) = 1$, clearly $\calS(T, I, zc(z)) = \langle z \to c(z) \rangle$.
		For every $z \in I$ with $1 < \phi(z) < \infty$, set $\calS(T, I, zc(z)) = \bigoplus_{z^\prime \in I \cap \Nei{T^z_{c(z)}}{c(z)}}\calS(T, I, z^\prime c(z^\prime)) \oplus \langle z \to c(z) \rangle$.
		One can verify that the sequence $\calS(T, I, zc(z))$ of token-slides is indeed a $\sfTS$-sequence in $T$.
		%(see \cite[Proposition 3]{Hoang2017-TS-cactus-block}).
		The $\sfTS$-sequence $\calS(T, I, xy)$ is indeed $\bigoplus_{w \in I \cap \Nei{T^x_y}{y}}\calS(T, I, wc(w)) \oplus \langle x \to y \rangle$.
		Clearly, we can use this recursive relation to construct $\calS(T, I, xy)$ in $O(n)$ time.
	\end{proof}
	
	Next, we define a useful notation for calculating the number of detours.
	For an instance $(T, I, J)$ of {\sc Shortest Sliding Token} for trees, we define a directed \emph{auxiliary graph} $A(T, I, J)$ as follows: $V(A(T, I, J)) = V(T)$; and $E(A(T, I, J)) = \{(x, y): xy \in E(T) \text{ and } \msize{I \cap T^x_y} \leq \msize{J \cap T^x_y}\}$.
	By definition, the auxiliary graph $A(G, J, I)$ can be obtained from $A(G, I, J)$ by simply reversing the directions of its edges.
	Figure~\ref{fig:exa-auxiliary-graph} illustrates an example of the auxiliary graph $A(G, I, J)$ for an instance $(G, I, J)$ of the problem for spiders.
	\begin{figure}[!ht]
		\centering
		\includegraphics[scale=0.7]{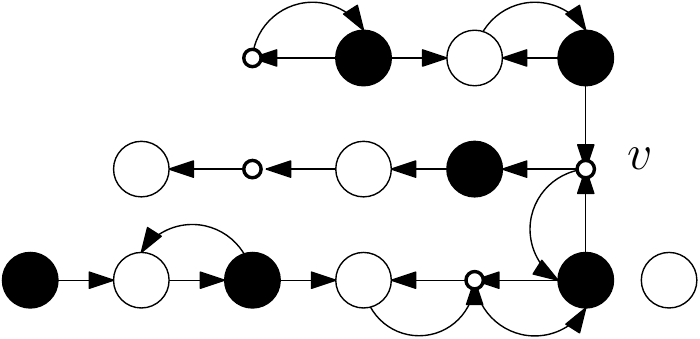}
		\caption{An example of the auxiliary graph $A(G, I, J)$ for an instance of {\sc Shortest Sliding Token} for spiders. Tokens in $I$ (resp. $J$) are of black (resp. white) color.}
		\label{fig:exa-auxiliary-graph}
	\end{figure}
	
	We are now ready to prove Lemma~\ref{lem:two-tokens-in-neigh-v}.
	\begin{proof}[Proof of Lemma~\ref{lem:two-tokens-in-neigh-v}]
		We consider the following cases.
		\begin{itemize}
			\item \textbf{Case~1: $\msize{I \cap \Nei{G}{v}} \geq 2$ and $\msize{J \cap \Nei{G}{v}} \leq 1$.} (See Figure~\ref{fig:two-tokens-in-neigh-v}.)
			
			From the assumption, note that $v \notin I$.
			Let $I \cap \Nei{G}{v} = \{w_1, w_2, \dotsc, w_k\}$ ($2 \leq k \leq \deg_G(v)$).
			For $i \in \{1, 2, \dotsc, k\}$, let $t_i$ be the token placed at $w_i$, and $L_{w_i}$ be the leg of $G$ containing $w_i$.
			If $\Nei{G}{w_i} \setminus \{v\} \neq \emptyset$ then let $x_i$ be such that $\Nei{G}{w_i} \setminus \{v\} = \{x_i\}$. 
			(Since $G$ is a spider, $w_i$ has at most two neighbors.)
			\begin{figure}[!ht]
				\centering
				\includegraphics[scale=0.6]{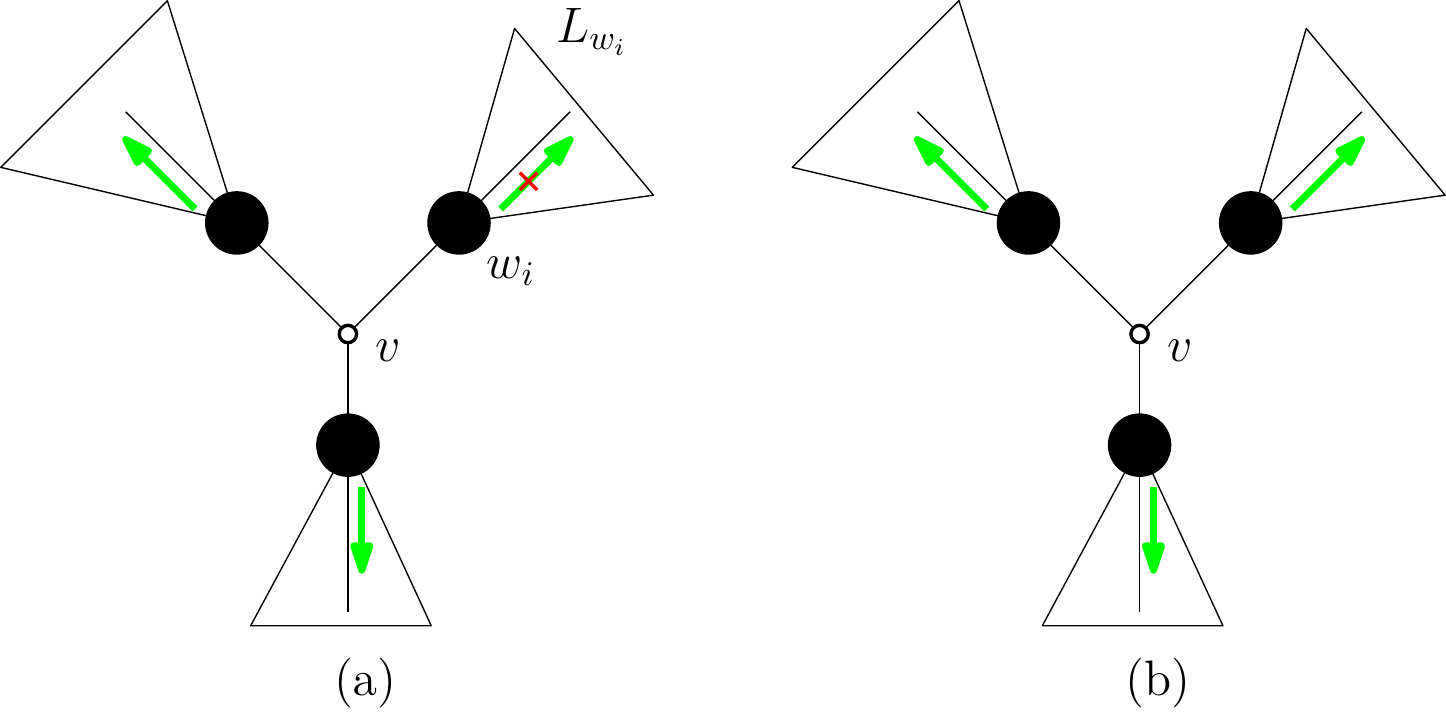}
				\caption{Illustration of \textbf{Case~1} of Lemma~\ref{lem:two-tokens-in-neigh-v}: (a) \textbf{Case~1.1}, and (b) \textbf{Case~1.2}. Here $k = \msize{I \cap \Nei{G}{v}} = 3$, and tokens in $I$ are of black color.}
				\label{fig:two-tokens-in-neigh-v}
			\end{figure}
			\begin{itemize}
				\item \textbf{Case~1.1: There exists $i \in \{1, 2, \dotsc, k\}$ such that $t_i$ is $(L_{w_i}, I_{L_{w_i}})$-rigid.}
				From~\cite[Lemma 2]{DDFEHIOOUY2015}, the token $t_i$ is unique, i.e., there is no $j \neq i$ such that $t_j$ is $(L_{w_j}, I_{L_{w_j}})$-rigid; otherwise, $t_i$ and $t_j$ are both $(G, I)$-rigid, which contradicts our assumption that there are no $(G, I)$-rigid tokens.
				Note that for $j \neq i$, $\Nei{G}{w_j} \setminus \{v\} \neq \emptyset$; otherwise, $t_j$ is clearly $(L_{w_j}, I_{L_{w_j}})$-rigid, which is a contradiction.
				For each $j \in \{1, 2, \dotsc, k\}$ with $j \neq i$, let $\calS_{w_jx_j}$ be the $\sfTS$-sequence of length $\cost(G, I, w_jx_j)$ that moves $t_j$ from $w_j$ to $x_j$, as described in Lemma~\ref{lem:calculate-cost-in-linear-time}.
				(Since $G$ is a spider, such $\calS_{w_jx_j}$ is uniquely determined.)
				Let $\calS_1^i = \bigoplus_{j}\calS_{w_jx_j}$.
				From the proof of Lemma~\ref{lem:no-tokens-in-neigh-v}, $\calS_1^i$ is indeed a $\sfTS$-sequence that moves $t_j$ from $w_j$ to $x_j$, for every $j \neq i$.
				Intuitively, each $\calS_{w_jx_j}$ only moves tokens ``inside'' the leg $L_{w_j}$.
				Let $\Ip$ be the resulting independent set (of performing $\calS_1^i$).
				Clearly, $\max\{\msize{\Ip \cap \Nei{G}{v}}, \msize{J \cap \Nei{G}{v}}\} \leq 1$.
				Let $\calS_2^i$ be the $\sfTS$-sequence that reconfigures $\Ip$ to $J$ as described in Lemma~\ref{lem:one-token-in-neigh-v}.
				We claim that $\calS = \calS_1^i \oplus \calS_2^i$ is a $\sfTS$-sequence between $I$ and $J$ of shortest length.
				It is trivial that $\calS$ is a $\sfTS$-sequence between $I$ and $J$, as it reconfigures $I$ to $\Ip$, and then $\Ip$ to $J$.
				To see that $\calS$ is indeed shortest, note that since $\msize{I_L} \neq \msize{J_L}$ for some leg $L$ of $G$, any $\sfTS$-sequence must move $t_i$ to some vertex not in $\Nei{G}{v}$; otherwise, some token in $I_L$ where $\msize{I_L} > \msize{J_L}$ cannot be moved to its final target vertex.
				Since $t_i$ is $(L_{w_i}, I_{L_{w_i}})$-rigid, the only way is to move $t_i$ ``out of'' $L_{w_i}$.
				Roughly speaking, the token-slides in $\calS_1^i$ is unavoidable, i.e., any $\sfTS$-sequence $\calS$ between $I$ and $J$ contains $\calS_1^i$ as a subsequence.
				Since any token-slide in $\calS$ before $\calS_1^i$ can only be performed ``inside'' a particular leg of $G$, one can assume without loss of generality that $\calS_1^i$ is performed before any other token-slide in $\calS$. 
				Additionally, from Lemma~\ref{lem:one-token-in-neigh-v}, $\calS_2^i$ must be a $\sfTS$-sequence of shortest length between $\Ip$ and $J$.
				Hence, $\calS$ is indeed a $\sfTS$-sequence of shortest length between $I$ and $J$.
				
				\item \textbf{Case~1.2: For every $i \in \{1, 2, \dotsc, k\}$, $t_i$ is not $(L_{w_i}, I_{L_{w_i}})$-rigid.}
				As before, note that $x_i$ exists for every $i$.
				For each $t_i$, using the same technique as in \textbf{Case~1.1}, one can indeed construct a $\sfTS$-sequence $\calS_1^i$ that moves all $t_j$ ($j \neq i$) from $w_j$ to $x_j$ of length $\len(\calS_1^i) = \sum_{j \neq i}\cost(G, I, w_jx_j)$, and a $\sfTS$-sequence $\calS_2^i$ that reconfigures the resulting independent set (after performing $\calS_1^i$) to $J$.
				Let $\calS^i = \calS_1^i \oplus \calS_2^i$.
				Then, $\calS^i$ is indeed a $\sfTS$-sequence that reconfigures $I$ to $J$.
				Let $\calS$ be a $\sfTS$-sequence whose length is smallest among all $\calS^i$.
				We claim that $\calS$ is indeed our desired $\sfTS$-sequence.
				Trivially, $\calS$ reconfigures $I$ to $J$.
				To see that it is indeed shortest, note that since there exists some leg $L$ with $\msize{I_L} \neq \msize{J_L}$, any $\sfTS$-sequence between $I$ and $J$ must perform one of $\calS_1^i$.
				As before, we can also assume without loss of generality that for any $\sfTS$-sequence $\calSp$ between $I$ and $J$, the sequence $\calS_1^i$, if in $\calSp$, is performed before any other token-slide in $\calSp$.
				Then, each $\calS^i$ is of smallest length among all $\sfTS$-sequence $\calSp$ between $I$ and $J$ containing $\calS_1^i$.
				Moreover, it is clear from the construction that if $\calSp$ contains $\calS_1^i$ as a subsequence then it does not contain any $\calS^j_1$ for $j \neq i$.
				Therefore, a $\sfTS$-sequence $\calS$ of smallest length among all $\calS^i$ is indeed our desired $\sfTS$-sequence. 
			\end{itemize}
			
			\item \textbf{Case~2: $\msize{I \cap \Nei{G}{v}} \leq 1$ and $\msize{J \cap \Nei{G}{v}} \geq 2$.}
			
			Analogously to \textbf{Case~1}, one can also construct a $\sfTS$-sequence of shortest length between $I$ and $J$.
			Intuitively, instead of moving tokens in $I \cap \Nei{G}{v}$ (as in \textbf{Case~1}), we now move tokens in $J \cap \Nei{G}{v}$: keep one token fixed, and move all other tokens to their corresponding neighbors (different from $v$).
			Once we have the resulting independent set $\Jp$, the reverse of the above $\sfTS$-sequence can be used to reconfigure $\Jp$ to $J$, and by Lemma~\ref{lem:one-token-in-neigh-v} we already know how to reconfigure $I$ to $\Jp$ using a smallest possible number of tokens.
			Combining these two reconfigurations, we now have a $\sfTS$-sequence that reconfigures $I$ to $J$.
			Our desired $\sfTS$-sequence is the shortest among all (in particular, there are at most $\deg_G(v)$ of them) such $\sfTS$-sequences between $I$ and $J$ above.
			
			\item \textbf{Case~3: $\msize{I \cap \Nei{G}{v}} \geq 2$ and $\msize{J \cap \Nei{G}{v}} \geq 2$.}
			
			A shortest $\sfTS$-sequence between $I$ and $J$ can be constructed by simply combining the techniques in \textbf{Case~1} and \textbf{Case~2}.
		\end{itemize}
		In all above cases, the construction of our desired $\sfTS$-sequence $\calS$ obviously takes $O(n^2)$ time.
		
		We remark that in the described algorithm, $D_G(\calS)$ was not explicitly calculated.
		In the remaining part of this proof, we show how to calculate $D_G(\calS)$.
		It is sufficient to show how to calculate $D_G(\calS)$ in \textbf{Case~1.1}; other cases can be done in similar manner.
		From \textbf{Case~1.1}, $\calS = \calS_1^i \oplus \calS_2^i$.
		We note that from Lemma~\ref{lem:calculate-cost-in-linear-time} $\calS_1^i$ itself does not make detour over any edge of $G$.
		On the other hand, Lemma~\ref{lem:one-token-in-neigh-v} implies that $\calS_2^i$ itself makes detour over at most one edge of $G$ (due to whether Lemma~\ref{lem:one-token-in-neigh-v}(i) holds).
		From Lemma~\ref{lem:lower-bound-len-TS-seq-tree}, it remains to calculate the number of detours made by $\calS_1^i$ and $\calS_2^i$ together.
		From the construction of $\calS_1^i$, note that each move $x \to y$ in $\calS_1^i$ appears exactly once.
		Consider a move $x \to y$ in $\calS_1^i$ such that $(y, x)$ is a directed edge of the corresponding auxiliary graph $A(G, I, J)$.
		By definition, $\msize{I \cap G^x_y} \geq \msize{J \cap G^x_y}$.
		Let $\Ip$ be the resulting independent set after the move $x \to y$.
		Then, it can be shown by induction on the number of such moves in $\calS_1^i$ that $\msize{\Ip \cap G^x_y} > \msize{J \cap G^x_y}$. 
		It follows that at some point, $\calS_2^i$ will have to make a move $y \to x$.
		Together, these moves form detour over $e = xy \in E(G)$.
		Since each move $x \to y$ in $\calS_1^i$ appears exactly once, we must have $D_G(\calS) = D_G(\calS_2^i) + 2\msize{\{\langle x \to y \rangle \in \calS_1^i: (y, x) \in E(A(G, I, J))\}}$.
		In a similar manner, in \textbf{Case~1.2}, $D_G(\calS)$ can be calculated.
		In \textbf{Case~2}, we argue with tokens in $J$ (instead of $I$) and the auxiliary graph $A(G, J, I)$ (instead of $A(G, I, J)$).
		Finally, in \textbf{Case~3}, we simply combine the arguments in \textbf{Cases~1} and \textbf{Case~2}.
	\end{proof}
	
	\section{Conclusion}\label{sec:conclusion}
	
	In this paper, we have shown that one can indeed construct a $\sfTS$-sequence of shortest length between two given independent sets of a spider graph (if exists).
	We hope that our ideas and approaches described here will provide a useful framework for improving the polynomial-time algorithm for {\sc Shortest Sliding Token} for trees~\cite{Sugimori2018a}.
	
	\bibliography{main}
\end{document}